\newtheorem{theorem}{Theorem}[section]
\newtheorem{corollary}{Corollary}[section]
\newtheorem{lemma}{Lemma}[section]
\newtheorem{proposition}{Proposition}[section]
\newtheorem{assumption}{Assumption}[section]
\newtheorem{remark}{Remark}[section]
\numberwithin{figure}{section}
\begin{document}

\title[Short-maturity VIX and European options with jumps]
{VIX and European options with jumps in the short-maturity regime}

\author{Desen Guo}
\address{Florida State University, Tallahassee, Florida, United States of America, gd19k@fsu.edu}

\author{Dan Pirjol}
\address{Stevens Institute of Technology, Hoboken, New Jersey, United States of America, dpirjol@gmail.com}

\author{Xiaoyu Wang}
\address{Hong Kong University of Science and Technology (Guangzhou), People's Republic of China, xiaoyuwang@hkust-gz.edu.cn}

\author{Lingjiong Zhu}
\address{Florida State University, Tallahassee, Florida, United States of America, zhu@math.fsu.edu}

\date{January 23, 2026}

\keywords{VIX option, short-maturity asymptotics, jumps}

\begin{abstract}
We present a study of the short-maturity asymptotics for VIX and European option prices in local-stochastic volatility models with compound Poisson jumps. 
Both out-of-the-money (OTM) and at-the-money (ATM) asymptotics are considered. 
The leading-order asymptotics are obtained in closed-form.
We apply our results to three examples: the Eraker model, 
a Kou-type model, and a folded normal model.
Numerical illustrations are provided for these three examples that show
the accuracy of predictions based on the asymptotic results.
\end{abstract}

\maketitle



\section{Introduction}

The CBOE Volatility Index (VIX) is the main volatility benchmark of the U.S. stock market and provides a measure of the implied volatility of options with a maturity of 30 days on the S\&P 500 index. It is defined in terms of an expectation in the risk-neutral measure 
\begin{equation}\label{VIX:formal:defn}
\mathrm{VIX}_t^2 = - \frac{2}{\tau} \mathbb{E}\left[\log\left(\frac{S_{t+\tau}}{S_t}\right)\Big|\mathcal{F}_t\right]+\frac{2}{\tau}\mathbb{E}\left[\int_{t}^{t+\tau}\frac{dS_{s}}{S_{s-}}\Big|\mathcal{F}_{t}\right],    
\end{equation} 
where $S_t$ is the equity index S\&P 500 at time $t$, and $\tau = 30$ days. The expectation is computed by replication in terms of market-observed SPX option prices - see the VIX White Paper \cite{VIXwp} for the details of the methodology.
Since 2022, CBOE has also started reporting 
the CBOE 1-day Volatility Index \texttt{(VIX1D)} \cite{VIX1Dwp}, which is an analog of the VIX index computed using the PM-settled weekly SPX options which mature on the same day and the next day $(\tau=1$ day) as the index date.

The volatility index VIX is used by market participants to speculate on and hedge volatility risk. Several volatility derivatives that can be used for this purpose are traded on the CBOE Options Exchange: futures contracts on VIX have been traded since 2004, and 
VIX options have been traded since 2006. In view of the popularity of these contracts, a great deal of work has been devoted in the literature to the valuation of volatility derivatives. 

\subsection{Literature review}

The study of VIX derivatives has gained considerable attention due to their role in hedging and managing market volatility risks. Early research focused on jump-diffusion models to incorporate sudden shifts in volatility and price dynamics. 
We distinguish three types of jumps: two independent jumps in the underlying and in the volatility, respectively, and simultaneous jumps in these two processes.
Furthermore, the jump intensity can be taken to be constant for simplicity, or it may be stochastic in order to reproduce the observed volatility clustering effect. 

Empirical studies of the equity markets by Duffie, Pan and Singleton (2000) \cite{Duffie2000} and Eraker (2004) \cite{Eraker2004} demonstrated the importance of extending jump-diffusion models by allowing also for jumps in the volatility process in order to obtain a good fit to data. Including jumps in both the asset price and volatility process is essential also for the accurate valuation of volatility derivatives.

Lin and Chang (2009) further explored VIX option pricing with state-dependent jumps, highlighting the importance of volatility jumps, especially for short-term options~\cite{Lin2009}. Their result was corrected by Lian and Zhu (2013) \cite{Lian2013}, who gave an exact solution for the VIX futures and option prices in the SVJJ model as a single integral.

Psychoyios et al. (2009) introduced a jump-diffusion model for VIX options and futures,
showing that jumps in both price and volatility significantly improve pricing accuracy~\cite{Psychoyios2009}. 
In particular, they studied the empirical properties of the variance process and considered several models which were also used to price VIX derivatives. Their favored model was the mean-reverting logarithmic diffusion with jumps.

Todorov and Tauchen (2011) emphasized the jump components in the S\&P 500 volatility and the VIX index, revealing that market volatility often arises from large, discrete jumps rather than continuous changes~\cite{Todorov2011}. Eraker (2010) investigated volatility and jump risk premiums embedded in VIX derivatives, providing evidence of their critical role in explaining option prices~\cite{Eraker2010}.

As models evolved, researchers incorporated more advanced jump structures. Sepp (2008) proposed a stochastic volatility model with random jumps, demonstrating its ability to price VIX options consistently alongside SPX options~\cite{Sepp2008}. Kokholm and Stisen (2013) extended this idea with a 3/2 stochastic volatility model incorporating jumps, achieving better consistency between equity and volatility derivatives~\cite{Kokholm2013}. Zang et al. (2017) introduced a double-jump diffusion model that used the VVIX index as a proxy for VIX volatility, capturing co-jumps and their clustering effects~\cite{Zang2017}. Cao et al. (2020) developed a two-factor model with infinite-activity jumps, which better describes the frequent small jumps observed in volatility markets compared to finite-activity models~\cite{Cao2020}.

Recent advances introduced volatility clustering and asymmetric jump effects through a stochastic jump intensity process. Jing et al. (2020) applied the Hawkes jump-diffusion model to VIX options, capturing the clustering of jumps during periods of financial stress~\cite{Jing2020}. Li et al. (2017) proposed a pure jump model for VIX dynamics, which allows for infinite activity and variation, achieving a better fit for the steep volatility skew in VIX options~\cite{Kokholm2015}. Park (2016) studied the effects of asymmetric volatility and separately modeled upward and downward jumps, finding that upward jumps and volatility asymmetry have a larger impact on VIX derivative prices~\cite{Li2017,Park2016}. 
Ye et al. (2023) \cite{Ye2023} proposed a model with stochastic jump intensity which allows more flexibility in calibrating to market data with non stationary jumps dynamics.

This line of research demonstrates a clear progression from basic jump-diffusion models to more advanced frameworks incorporating infinite-activity jumps, co-jumps, and clustering effects. These improvements have significantly enhanced the accuracy of VIX derivative pricing. In this paper, we extend this line of research by analyzing short-maturity VIX options and European options under a jump-diffusion framework, aiming to further refine the understanding of VIX dynamics in the presence of jumps.
Most of the literature on the VIX options with jumps assumes a Heston-like variance process and does not include a local volatility component. This is done in order to preserve the affine nature of the model, which allows the computation of the characteristic function in closed form. In this paper, we relax both of these constraints and do not limit ourselves to the class of affine models.

Short-maturity asymptotics for option pricing in the presence
of jumps have been studied in the literature, mostly for European options. 
For a wide class of models with jumps, the leading short-maturity asymptotics
of the European call options is given by Boyarchenko and Levendorskii (2002) \cite{BL02}.
Later, Figueroa-Lopez (2008) \cite{FL08} weakened the technical conditions required in \cite{BL02}. 
The next-to-leading order correction of $O(T^2)$ to the short-maturity asymptotics has been obtained in Figueroa-Lopez and Forde (2012) \cite{FLF}, and the leading asymptotics of the ATM skew was studied by Figueroa-Lopez and Olaffson (2016) \cite{FLO}.
Al\`os et al (2007) \cite{Alos2007} obtained short-maturity expansions in jump-diffusion models using Malliavin calculus methods. 
We refer to \cite{FLF} for a detailed list of references for short-maturity
European option pricing in the presence of jumps.
In this paper, we extend these results to local-stochastic volatility models with jumps, considering the general case of idiosyncratic jumps and common jumps in the asset and variance processes.  

Short-maturity asymptotics for path-dependent options in models with jumps are much less studied.
To the best of our knowledge, the only work in this domain
is the study of short-maturity asymptotics for Asian options in local volatility models in the presence of jumps presented in \cite{PZAsianJumps}. 
In this paper, we study the short-maturity asymptotics of VIX options in a combined limit of the small expiry and small averaging period.
In this limit the path dependence of the VIX options disappears and they 
reduce to European-style options on payoffs depending on both the asset and variance process. 
We derive the leading asymptotics of both VIX and European options in local-stochastic volatility models with jumps, under appropriate technical conditions.
We consider both the ATM and OTM options asymptotics and illustrate the theoretical predictions with explicit results for several models of jumps, including the popular model of Eraker \cite{Eraker2004} widely used in the literature.

The short maturity approach has the advantage that it leads to analytical results which are useful for practical application to valuation and model calibration. The theoretical predictions are tested by comparing with numerical simulations of VIX and European models. The agreement is reasonably good, for sufficiently small option maturity.


\subsection{Model setup}

In this paper, we are interested in the VIX call and put option prices 
in the presence of jumps. 
The VIX call and put option prices are given by
\begin{equation}
C_{V}(K,T)=e^{-rT}\mathbb{E}\left[\left(\mathrm{VIX}_{T}-K\right)^{+}\right],
\qquad
P_{V}(K,T)=e^{-rT}\mathbb{E}\left[\left(K-\mathrm{VIX}_{T}\right)^{+}\right],
\end{equation}
where $K>0$ is the strike price.

In \cite{VIXpaper}, the short-maturity VIX options and European options 
are studied in which the asset price $S_t$ is assumed to follow a local-stochastic volatility model under the
risk-neutral probability measure $\mathbb{Q}$:
\begin{eqnarray}\label{LSvol}
&& \frac{dS_t}{S_t} = \eta(S_t) \sqrt{V_t} dW_t + (r - q) dt \,, \\
&& \frac{dV_t}{V_t} =\sigma(V_t) dZ_t + \mu(V_t) dt \,,\nonumber
\end{eqnarray}
with initial conditions $S_0 > 0, V_0 > 0$, 
where $W_t, Z_t$ are correlated standard Brownian motions with correlation $\rho$, $r$ is the risk-free rate and $q$ is the dividend yield.
See also \cite{PZ2025} for short-maturity VIX options for SABR model.

We start by formulating technical conditions and assumptions for the parameters of the model (\ref{LSvol}).
First, we assume that $\eta(\cdot),\mu(\cdot)$ and $\sigma(\cdot)$ are
uniformly bounded. As in \cite{VIXpaper}, we impose
the following assumptions on $\eta(\cdot),\mu(\cdot)$ and $\sigma(\cdot)$.

\begin{assumption}\label{assump:bounded}
We assume that $\eta(\cdot),\mu(\cdot)$ and $\sigma(\cdot)$ are
uniformly bounded:
\begin{equation}
\sup_{x\in\mathbb{R}^{+}}\eta(x)\leq M_{\eta},
\qquad
\sup_{x\in\mathbb{R}^{+}}|\mu(x)|\leq M_{\mu},
\qquad
\sup_{x\in\mathbb{R}^{+}}\sigma(x)\leq M_{\sigma}.
\end{equation}
\end{assumption}

\begin{assumption}
\label{assump:lip}
We assume that $\eta$ is $L$-Lipschitz and $\sigma$ is $L'$-Lipschitz.
\end{assumption}

\begin{assumption}\label{assump:LDP}
We assume that $\inf_{x\in\mathbb{R}^{+}}\sigma(x)>0$
and $\inf_{x\in\mathbb{R}^{+}}\eta(x)>0$. Moreover, 
there exist some constants $M,\alpha>0$ such that
for any $x,y\in\mathbb{R}$,
$|\sigma(e^{x})-\sigma(e^{y})|\leq M|x-y|^{\alpha}$
and $|\eta(e^{x})-\eta(e^{y})|\leq M|x-y|^{\alpha}$.
\end{assumption}

In this paper, we study the short-maturity asymptotics of VIX and European options in the presence of jumps. 
We will show that with the presence of jumps, 
the leading-order short-maturity asymptotics
of VIX and European options
can behave very differently than
the one without jumps \eqref{LSvol} as studied in \cite{VIXpaper}.
In particular, we assume that
under the risk-neutral measure $\mathbb{Q}$,
a local-stochastic volatility model with jumps
has the dynamics:
\begin{eqnarray}\label{LSvol:jumps}
&& \frac{dS_t}{S_t} = \eta(S_t) \sqrt{V_t} dW_t + \left(r - q-\lambda^{S}\mu^{S}-\lambda^{C}\mu^{C,S}\right) dt+dJ_{t}^{S}+dJ_{t}^{C,S} \,, \\
&& \frac{dV_t}{V_t} =\sigma(V_t) dZ_t + \mu(V_t) dt+dJ_{t}^{V}+dJ_{t}^{C,V} \,,\nonumber
\end{eqnarray}
where
\begin{equation}
J_{t}^{S}:=\sum_{i=1}^{N_{t}^{S}}\left(e^{Y_{i}^{S}}-1\right),
\end{equation}
denotes the idiosyncratic jumps of the log-asset price, where $N_{t}^{S}$
is a Poisson process with intensity $\lambda^{S}$, and $Y_{i}^{S}$ are independent and identically 
distributed (i.i.d.) with probability distribution function $P^{S}(x)$, where $-\infty<x<\infty$,
and when its probability density function exists, we denote it as $p^{S}(x)$.
Similarly,
\begin{equation}
J_{t}^{V}:=\sum_{i=1}^{N_{t}^{V}}\left(e^{Y_{i}^{V}}-1\right),
\end{equation}
denotes the idiosyncratic jumps of the log-variance process, where $N_{t}^{V}$
is a Poisson process with intensity $\lambda^{V}$, and $Y_{i}^{V}$ are i.i.d.
distributed with probability distribution function $P^{V}(x)$, where $-\infty<x<\infty$,
and when its probability density function exists, we denote it as $p^{V}(x)$,
and $J_{t}^{C,S}$, $J_{t}^{C,V}$ are the common jump processes
for the log-asset price and log-variance process such that
\begin{equation}
J_{t}^{C,S}=\sum_{i=1}^{N_{t}^{C}}\left(e^{Y_{i}^{C,S}}-1\right),
\qquad
J_{t}^{C,V}=\sum_{i=1}^{N_{t}^{C}}\left(e^{Y_{i}^{C,V}}-1\right),
\end{equation}
where $N_{t}^{C}$
is a Poisson process with intensity $\lambda^{C}$, and $(Y_{i}^{C,S},Y_{i}^{C,V})$ are i.i.d.
distributed with probability distribution function $P^{C}(x,y)$. 
If it has the probability density function $p^{C}(x,y)$, then we can
assume that $p^{C}(x,y)$ is positive where $x<0$ and $y>0$ or $y<0$ and $x>0$.
This assumption is due to the leverage effect in finance
such that for the common jumps, when $S_{t}$ jumps downward,
$V_{t}$ will jump upward, and vice versa. 
We assume that the expectations of $e^{Y_{i}^{S}},e^{Y_{i}^{V}},e^{Y_{i}^{C,S}},e^{Y_{i}^{C,V}}$
all exist and denote $\mu^{S}$, $\mu^{V}$, $\mu^{C,S}$, $\mu^{C,V}$ as
\begin{equation}\label{mu:defn}
\mu^{S}:=\mathbb{E}[e^{Y_{1}^{S}}]-1, 
\quad\mu^{V}:=\mathbb{E}[e^{Y_{1}^{V}}]-1, 
\quad\mu^{C,S}:=\mathbb{E}[e^{Y_{1}^{C,S}}]-1, 
\quad
\mu^{C,V}:=\mathbb{E}[e^{Y_{1}^{C,V}}]-1.
\end{equation}
In addition, we impose the following assumption.

\begin{assumption}
Assume that 
$\mathbb{E}[e^{2Y_{1}^{V}}]$
and $\mathbb{E}[e^{2Y_{1}^{C,V}}]$ are both finite.
\end{assumption}

Under these technical assumptions, we will study the
short-maturity asymptotics for VIX and European options
with compound Poisson jumps. 
The rest of the paper is organized as follows.
In Section~\ref{sec:main}, we present
the main results.
In particular, we derive the leading-order
short-maturity asymptotics for both OTM and ATM cases
for the VIX options in Section~\ref{sec:VIX}, 
and for the European options in Section~\ref{sec:European}.
Next, we introduce and study several models for the jumps dynamics in Section~\ref{sec:models}: the
Eraker model (Section~\ref{sec:Eraker:model}), a Kou-type model (Section~\ref{sec:Kou:model}) and a folded normal model (Section~\ref{sec:folded:model}).
In Section~\ref{sec:pred}, we provide numerical illustrations for the efficiency of our short-maturity asymptotic predictions, comparing them with MC simulation of VIX and European options for the models presented in Section~\ref{sec:models}.
The proofs of all the results are provided in the Appendix.

\section{Main Results}\label{sec:main}

\subsection{VIX options}\label{sec:VIX}

In this section, we derive the short-maturity asymptotics
for VIX options.
We recall from \eqref{VIX:formal:defn} that VIX at time $T$ is defined as: 
\begin{equation}\label{VIX:formal:defn:2}
\mathrm{VIX}_T^2 = - \frac{2}{\tau} \mathbb{E}\left[\log\left(\frac{S_{T+\tau}}{S_T}\right)\Big|\mathcal{F}_T\right]+\frac{2}{\tau}\mathbb{E}\left[\int_{T}^{T+\tau}\frac{dS_{t}}{S_{t-}}\Big|\mathcal{F}_{T}\right].    
\end{equation}
We can compute from \eqref{LSvol:jumps} that
\begin{equation}\label{defn:second:term}
\frac{2}{\tau}\mathbb{E}\left[\int_{T}^{T+\tau}\frac{dS_{t}}{S_{t}}\Big|\mathcal{F}_{T}\right]=\frac{2}{\tau}\int_{T}^{T+\tau}(r-q)dt
=2(r-q),
\end{equation}
and
\begin{equation}
S_{T+\tau}=S_{T}e^{\int_{T}^{T+\tau}\eta(S_{t})\sqrt{V_{t}}dW_{t}
-\int_{T}^{T+\tau}\frac{1}{2}\eta^{2}(S_{t})V_{t}dt
+\left(r - q-\lambda^{S}\mu^{S}-\lambda^{C}\mu^{C,S}\right)\tau+(R^{S}_{T+\tau}-R^{S}_{T})+(R^{C,S}_{T+\tau}-R^{C,S}_{T})},
\end{equation}
where, for any $t\geq 0$,
\begin{equation}
R_{t}^{S}:=\sum_{i=1}^{N_{t}^{S}}Y_{i}^{S},
\qquad
R_{t}^{C,S}:=\sum_{i=1}^{N_{t}^{C}}Y_{i}^{C,S}.
\end{equation}
Hence, we can compute from \eqref{LSvol:jumps}, \eqref{VIX:formal:defn:2} and \eqref{defn:second:term} that
\begin{equation}\label{VIX:formula}
\mathrm{VIX}_T^2 = \frac{1}{\tau} \mathbb{E}\left[\int_{T}^{T+\tau}\eta^{2}(S_{t})V_{t}dt\Big|\mathcal{F}_T\right]
+\kappa,
\end{equation}
where
\begin{equation}\label{kappa:eqn}
\kappa:=2\lambda^{S}\mu^{S}+2\lambda^{C}\mu^{C,S}
-2\lambda^{S}m^{S}-2\lambda^{C}m^{C,S},    
\end{equation}
where $m^{S}:=\mathbb{E}[Y_{1}^{S}]$ and $m^{C,S}:=\mathbb{E}[Y_{1}^{C,S}]$.
Note that we can rewrite \eqref{kappa:eqn} as
\begin{align}
\kappa&=2\lambda^{S}\left(\mu^{S}-m^{S}\right)+2\lambda^{C}\left(\mu^{C,S}-m^{C,S}\right)
\nonumber
\\
&=2\lambda^{S}\left(\mathbb{E}[e^{Y_{1}^{S}}]-1-\mathbb{E}[Y_{1}^{S}]\right)
+2\lambda^{C}\left(\mathbb{E}[e^{Y_{1}^{C,S}}]-1-\mathbb{E}[Y_{1}^{C,S}]\right),
\end{align}
which implies that $\kappa\geq 0$.

As $\tau\rightarrow 0$, it follows from \eqref{VIX:formula} that we have
\begin{equation}
\mathrm{VIX}_{T}^{2}\rightarrow\eta^{2}(S_{T})V_{T}+\kappa,
\end{equation}
almost surely. Indeed, we have the following result that provides
an upper bound on the distance between $\mathrm{VIX}_{T}^{2}$
and its proxy $\eta^{2}(S_{T})V_{T}+\kappa$.

\begin{proposition}\label{prop:VIXsmalltau}
If Assumption~\ref{assump:bounded} and~\ref{assump:lip} hold, and 
\begin{equation}\label{eta:2:assumption}
\sup_{s\geq 0}|(\eta^{2})''(s)s^{2}|\leq M_{\eta,2} \,.
\end{equation}
Then we have
\begin{align}
\left|\mathrm{VIX}_T^2
-\eta^{2}(S_{T})V_{T}-\kappa\right|\leq C_{1}(\tau)S_{T}+C_{2}(\tau)V_{T}
+(\lambda^{S}+\lambda^{C})M_{\eta}^{2}\tau,
\end{align}
and moreover,
\begin{align}
&\mathbb{E}\left|\mathrm{VIX}_T^2-\eta^{2}(S_{T})V_{T}-\kappa\right|
\nonumber
\\
&\leq
C_{1}(\tau)S_{0}e^{(r-q)T}
+C_{2}(\tau)V_{0}e^{\lambda^{V}T\mu^{V}}e^{\lambda^{C}T\mu^{C,V}}e^{TM_{\mu}}
+(\lambda^{S}+\lambda^{C})M_{\eta}^{2}\tau,
\end{align}
where
\begin{equation}\label{C:1:tau}
C_{1}(\tau):=2LM_{\eta}|r-q-\lambda^{S}\mu^{S}-\lambda^{C}\mu^{C,s}|e^{|r-q|\tau}\tau,
\end{equation}
and
\begin{align}
C_{2}(\tau)&:=
M_{\eta}^{2}\Bigg(e^{\lambda^{V}\tau(\mathbb{E}[e^{2Y_{1}^{V}}]-1)}e^{\lambda^{C}\tau(\mathbb{E}[e^{2Y_{1}^{C,V}}]-1)}e^{2\tau M_{\mu}}
e^{4\tau M_{\sigma}^{2}}+1
\nonumber
\\
&\qquad\qquad\qquad\qquad\qquad-2e^{-\lambda^{V}\tau(\mathbb{E}[e^{Y_{1}^{V}}]-1)}e^{-\lambda^{C}\tau(\mathbb{E}[e^{Y_{1}^{C,V}}]-1)}e^{-\tau M_{\mu}-\frac{1}{2}\tau M_{\sigma}^{2}}\Bigg)^{1/2}
\nonumber
\\
&\qquad\qquad
+\frac{\tau}{2}M_{\eta,2}M_{\eta}^{2}e^{\lambda^{V}\tau\mu^{V}}e^{\lambda^{C}\tau\mu^{C,V}}e^{\tau M_{\mu}}.\label{C:2:tau}
\end{align}
\end{proposition}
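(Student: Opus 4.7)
My plan is to start from identity \eqref{VIX:formula}, which together with Fubini gives
\begin{equation*}
\mathrm{VIX}_T^2 - \eta^2(S_T)V_T - \kappa \;=\; \frac{1}{\tau}\int_T^{T+\tau}\mathbb{E}\bigl[\eta^2(S_t)V_t - \eta^2(S_T)V_T \,\big|\, \mathcal{F}_T\bigr]\,dt,
\end{equation*}
so the task reduces to bounding the integrand uniformly in $t \in [T, T+\tau]$. I would split the integrand via
\begin{equation*}
\eta^2(S_t)V_t - \eta^2(S_T)V_T \;=\; \eta^2(S_t)\bigl(V_t - V_T\bigr) + \bigl(\eta^2(S_t) - \eta^2(S_T)\bigr)V_T,
\end{equation*}
with the first summand yielding the $C_2(\tau)V_T$ contribution (via Cauchy--Schwarz and Dol\'eans--Dade) and the second summand yielding the $C_1(\tau)S_T$ and jump-residual contributions (via It\^o's formula).

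For the first piece, use $|\eta^2|\le M_\eta^2$ and conditional Cauchy--Schwarz to get
\begin{equation*}
\bigl|\mathbb{E}[\eta^2(S_t)(V_t - V_T)\mid\mathcal{F}_T]\bigr| \;\le\; M_\eta^2\,V_T\sqrt{\mathbb{E}[(V_t/V_T)^2\mid\mathcal{F}_T] - 2\,\mathbb{E}[V_t/V_T\mid\mathcal{F}_T] + 1}.
\end{equation*}
Representing $V_t/V_T = \exp(X_t)$ with $X_t := \int_T^t\sigma(V_s)\,dZ_s + \int_T^t(\mu(V_s) - \tfrac12\sigma^2(V_s))\,ds + (R^V_t - R^V_T) + (R^{C,V}_t - R^{C,V}_T)$, I would upper bound $\mathbb{E}[e^{2X_t}\mid\mathcal{F}_T]$ by combining the Dol\'eans--Dade exponential martingale of integrand $2\sigma(V)$ (of expectation $1$), crude exponential control of the residual $\exp(\int\sigma^2\,ds + 2\int\mu\,ds)$ factor via $|\mu|\le M_\mu$ and $\sigma\le M_\sigma$, and the explicit Laplace transforms $\mathbb{E}[e^{2R^V_t}] = e^{\lambda^V\tau(\mathbb{E}[e^{2Y_1^V}]-1)}$ and analogously for $R^{C,V}$. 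I would lower bound $\mathbb{E}[e^{X_t}\mid\mathcal{F}_T]$ by Jensen's inequality on the exponential together with the same uniform bounds. Combining, multiplied by $V_T$, reproduces the square-root summand of $C_2(\tau)V_T$.

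For the second piece, apply It\^o's formula for semimartingales with jumps to $\eta^2(S_t)$ and take $\mathbb{E}[\cdot\mid\mathcal{F}_T]$. The local-martingale terms vanish. The continuous drift $\alpha_S\int_T^t\mathbb{E}[(\eta^2)'(S_s)S_s\mid\mathcal{F}_T]\,ds$, with $\alpha_S := r - q - \lambda^S\mu^S - \lambda^C\mu^{C,S}$, is bounded via $|(\eta^2)'(s)s|\le 2LM_\eta s$ and $\mathbb{E}[S_s\mid\mathcal{F}_T] = S_T e^{(r-q)(s-T)}$, yielding the $C_1(\tau)S_T$ contribution. The quadratic-variation term $\tfrac12\int_T^t\mathbb{E}[(\eta^2)''(S_s)S_s^2\eta^2(S_s)V_s\mid\mathcal{F}_T]\,ds$, controlled via assumption \eqref{eta:2:assumption} and the Gronwall-type bound $\mathbb{E}[V_s\mid\mathcal{F}_T]\le V_T e^{(s-T)(M_\mu + \lambda^V\mu^V + \lambda^C\mu^{C,V})}$, produces the remaining additive summand of $C_2(\tau)V_T$. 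The compound-Poisson jump sum $\sum_{T<s\le t}[\eta^2(S_s) - \eta^2(S_{s-})]$, where each jump changes $\eta^2(S)$ by at most $M_\eta^2$ and the total intensity is $\lambda^S + \lambda^C$, has absolute conditional expectation at most $M_\eta^2(\lambda^S+\lambda^C)(t-T) \le M_\eta^2(\lambda^S+\lambda^C)\tau$, producing the residual. Assembling the pieces and using $t-T \le \tau$ delivers the pointwise bound. The expectation bound then follows by taking $\mathbb{E}[\cdot]$ and using $\mathbb{E}[S_T] = S_0 e^{(r-q)T}$ (risk-neutral martingale property of the jump-compensated $S$) and the Gronwall upper bound $\mathbb{E}[V_T]\le V_0 e^{T(M_\mu + \lambda^V\mu^V + \lambda^C\mu^{C,V})}$ derived from the SDE for $V$.

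Main obstacle: The most delicate step is the Dol\'eans--Dade / Jensen assembly of the second-moment bound on $V_t/V_T$ in the first piece. The combined $\sqrt{A - 2B + 1}$ inside $C_2(\tau)$ must vanish as $\tau\to 0$ (scaling as $\tau^{1/2}$), which requires the upper bound $A$ and lower bound $B$ to coincide at $\tau=0$; reproducing the specific exponential coefficients --- notably the $e^{4\tau M_\sigma^2}$ in the upper bound on $\mathbb{E}[e^{2X_t}]$ and the $e^{-\tau M_\sigma^2/2}$ in the Jensen lower bound on $\mathbb{E}[e^{X_t}]$ --- requires a careful splitting of $2X_t$ into a Dol\'eans--Dade martingale of integrand $2\sigma$ plus an explicitly bounded residual, paired with the right application of Jensen's inequality to $e^{X_t}$. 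A secondary subtlety, in the second piece, is to arrange the It\^o bookkeeping so that the three contributions land as separate linear-in-$S_T$, linear-in-$V_T$, and pure-$\tau$ terms rather than producing spurious $V_T S_T$ or $V_T^2$ cross terms. The remaining work --- the It\^o calculus, the compensator bound, and the Gronwall estimates --- is routine but requires careful constant tracking.
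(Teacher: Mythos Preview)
Your proposal is correct and follows essentially the same approach as the paper's proof: the same triangle-inequality splitting $\eta^2(S_t)(V_t-V_T)+(\eta^2(S_t)-\eta^2(S_T))V_T$, the same Cauchy--Schwarz/Jensen combination for the first piece, and the same It\^o-with-jumps decomposition (drift, quadratic variation, jump compensator) for the second. The paper obtains the $e^{4\tau M_\sigma^2}$ constant you flag as the main obstacle by a Cauchy--Schwarz step rather than a direct Dol\'eans--Dade argument --- writing $e^{2\int\sigma\,dZ}=e^{\int 2\sigma\,dZ-\int 4\sigma^2\,ds}\cdot e^{\int 4\sigma^2\,ds}$ and applying Cauchy--Schwarz so that the first factor becomes a martingale of expectation $1$ and the second contributes $e^{4\tau M_\sigma^2}$ --- which is exactly the ``careful splitting'' you anticipate.
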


As a corollary, we have the following result that provides
an upper bound on the distance between $\mathrm{VIX}_{T}$
and its proxy $\sqrt{\eta^{2}(S_{T})V_{T}+\kappa}$.

\begin{corollary}\label{cor:VIXsmalltau}
Suppose the same assumptions in Proposition~\ref{prop:VIXsmalltau} hold.
Then, we have
\begin{align}
\left|\mathrm{VIX}_T
-\sqrt{\eta^{2}(S_{T})V_{T}+\kappa}\right|\leq \frac{C_{1}(\tau)}{\sqrt{\kappa}}S_{T}+\frac{C_{2}(\tau)}{\sqrt{\kappa}}V_{T}
+\frac{1}{\sqrt{\kappa}}\left(\lambda^{S}+\lambda^{C}\right)M_{\eta}^{2}\tau,
\end{align}
where $C_{1}(\tau),C_{2}(\tau)$ are given in \eqref{C:1:tau}-\eqref{C:2:tau}
and moreover
\begin{align}
&\mathbb{E}\left|\mathrm{VIX}_T-\sqrt{\eta^{2}(S_{T})V_{T}+\kappa}\right|
\nonumber
\\
&\leq
\frac{C_{1}(\tau)}{\sqrt{\kappa}}S_{0}e^{(r-q)T}
+\frac{C_{2}(\tau)}{\sqrt{\kappa}}V_{0}e^{\lambda^{V}T\mu^{V}}e^{\lambda^{C}T\mu^{C,V}}e^{TM_{\mu}}+\frac{1}{\sqrt{\kappa}}\left(\lambda^{S}+\lambda^{C}\right)M_{\eta}^{2}\tau.\label{tau:order}
\end{align}
\end{corollary}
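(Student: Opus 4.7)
The plan is to reduce the statement to the bound on $|\mathrm{VIX}_T^2 - \eta^2(S_T)V_T - \kappa|$ already furnished by Proposition~\ref{prop:VIXsmalltau}, by exploiting the elementary identity
\[
|\sqrt{a}-\sqrt{b}| \;=\; \frac{|a-b|}{\sqrt{a}+\sqrt{b}}, \qquad a,b\geq 0,
\]
applied with $a=\mathrm{VIX}_T^2$ and $b=\eta^2(S_T)V_T+\kappa$.

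First I would establish the uniform lower bound $\mathrm{VIX}_T + \sqrt{\eta^2(S_T)V_T+\kappa} \geq \sqrt{\kappa}$ (indeed one gets $2\sqrt{\kappa}$, but $\sqrt{\kappa}$ is all that is needed). This is immediate: since $\eta^2(S_T)V_T\geq 0$, one has $\sqrt{\eta^2(S_T)V_T+\kappa}\geq\sqrt{\kappa}$; and the representation \eqref{VIX:formula} together with the nonnegativity of $\eta^2(S_t)V_t$ gives $\mathrm{VIX}_T^2\geq\kappa$, hence $\mathrm{VIX}_T\geq\sqrt{\kappa}$. The assumption $\kappa>0$ is implicit in the statement since $1/\sqrt{\kappa}$ appears on the right-hand side; this is the mild price paid for having jumps in the asset process.

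Combining the identity with the lower bound yields the pathwise inequality
\[
\bigl|\mathrm{VIX}_T-\sqrt{\eta^2(S_T)V_T+\kappa}\bigr| \;\leq\; \frac{1}{\sqrt{\kappa}}\bigl|\mathrm{VIX}_T^2-\eta^2(S_T)V_T-\kappa\bigr|.
\]
Plugging in the first (pathwise) inequality of Proposition~\ref{prop:VIXsmalltau} produces the first claim of the corollary. Taking expectations on both sides and invoking the second (expectation) inequality of Proposition~\ref{prop:VIXsmalltau} on the right-hand side gives \eqref{tau:order}.

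There is essentially no obstacle: the argument is a routine $\sqrt{a}-\sqrt{b}$ linearization, and the only point deserving a line of justification is the uniform positivity of the denominator, which follows from the structural nonnegativity of the variance integrand in the VIX definition. The whole proof should fit in a few lines.
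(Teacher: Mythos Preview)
Your proposal is correct and matches the paper's proof essentially line for line: the paper writes the identity $|\mathrm{VIX}_T-\sqrt{\eta^2(S_T)V_T+\kappa}|=\frac{|\mathrm{VIX}_T^2-\eta^2(S_T)V_T-\kappa|}{\mathrm{VIX}_T+\sqrt{\eta^2(S_T)V_T+\kappa}}$, bounds the denominator below by $\sqrt{\kappa}$, and invokes Proposition~\ref{prop:VIXsmalltau}. Your additional remark that the denominator is in fact at least $2\sqrt{\kappa}$ is correct but not used.
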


Therefore, the VIX call and put option prices can be approximated by
\begin{equation}
\hat{C}_{V}(K,T):=e^{-rT}\mathbb{E}\left[\left(\sqrt{\eta^{2}(S_{T})V_{T}+\kappa}-K\right)^{+}\right],
\end{equation}
and
\begin{equation}
\hat{P}_{V}(K,T):=e^{-rT}\mathbb{E}\left[\left(K-\sqrt{\eta^{2}(S_{T})V_{T}+\kappa}\right)^{+}\right],
\end{equation}
as $\tau\rightarrow 0$.
More precisely, $\hat{C}_{V}(K,T)=\lim_{\tau\rightarrow 0}C_{V}(K,T)$
and $\hat{P}_{V}(K,T)=\lim_{\tau\rightarrow 0}P_{V}(K,T)$.

In the rest of the paper, we will assume the $\tau\to 0$ limit and approximate $\mathrm{VIX}_T$ by $\sqrt{\eta^2(S_T) V_T + \kappa}$ as is justified in Corollary~\ref{cor:VIXsmalltau}. 
The error introduced by this approximation is expected to be negligible for options on the CBOE 1-day VIX index \texttt{(VIX1D)} \cite{VIX1Dwp}, which is an analog of the usual 30-day VIX index, and is computed using 
PM-settled weekly SPX options which mature on the
same day and the next day ($\tau$= 1 day) as the index date. 

When $\eta^{2}(S_{0})V_{0}+\kappa<K^{2}$, the VIX call option is OTM,
and when $\eta^{2}(S_{0})V_{0}+\kappa>K^{2}$, the VIX put option is OTM.
We have the following result for OTM short-maturity VIX options.

\begin{theorem}\label{thm:OTM}
(i) When $\eta^{2}(S_{0})V_{0}+\kappa<K^{2}$, 
\begin{align}
\lim_{T\rightarrow 0}\lim_{\tau\rightarrow 0}\frac{C_{V}(K,T)}{T}
&=\lambda^{S}\int_{\mathbb{R}}\left(\sqrt{\eta^{2}\left(S_{0}e^{x}\right)V_{0}+\kappa}-K\right)^{+}P^{S}(dx)
\nonumber
\\
&\qquad
+\lambda^{C}\int_{\mathbb{R}}\int_{\mathbb{R}}\left(\sqrt{\eta^{2}\left(S_{0}e^{x}\right)V_{0}e^{y}+\kappa}-K\right)^{+}P^{C}(dx,dy)
\nonumber
\\
&\qquad\qquad
+\lambda^{V}\int_{\mathbb{R}}\left(\sqrt{\eta^{2}\left(S_{0}\right)V_{0}e^{x}+\kappa}-K\right)^{+}P^{V}(dx).
\end{align}

(ii) When $\eta^{2}(S_{0})V_{0}+\kappa>K^{2}$,
\begin{align}
\lim_{T\rightarrow 0}\lim_{\tau\rightarrow 0}\frac{P_{V}(K,T)}{T}
&=\lambda^{S}\int_{\mathbb{R}}\left(K-\sqrt{\eta^{2}\left(S_{0}e^{x}\right)V_{0}+\kappa}\right)^{+}P^{S}(dx)
\nonumber
\\
&\qquad
+\lambda^{C}\int_{\mathbb{R}}\int_{\mathbb{R}}\left(K-\sqrt{\eta^{2}\left(S_{0}e^{x}\right)V_{0}e^{y}+\kappa}\right)^{+}P^{C}(dx,dy)
\nonumber
\\
&\qquad\qquad
+\lambda^{V}\int_{\mathbb{R}}\left(K-\sqrt{\eta^{2}\left(S_{0}\right)V_{0}e^{x}+\kappa}\right)^{+}P^{V}(dx).
\end{align}
\end{theorem}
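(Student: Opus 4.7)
The plan is to first use Corollary~\ref{cor:VIXsmalltau} to dispose of the inner limit, so that $\lim_{\tau\to 0}C_V(K,T)=\hat C_V(K,T):=e^{-rT}\mathbb{E}[(\sqrt{\eta^2(S_T)V_T+\kappa}-K)^{+}]$ for each fixed $T>0$. The remaining task is to compute $\lim_{T\to 0}\hat C_V(K,T)/T$ under the OTM condition $\eta^2(S_0)V_0+\kappa<K^2$. The strategy is a standard jump-conditioning decomposition: set $N_T:=N_T^S+N_T^V+N_T^C$ and introduce the disjoint events
\begin{equation*}
A_0=\{N_T=0\},\quad A_S=\{N_T^S=1,\,N_T^V=N_T^C=0\},\quad A_V=\{N_T^V=1,\,N_T^S=N_T^C=0\},
\end{equation*}
\begin{equation*}
A_C=\{N_T^C=1,\,N_T^S=N_T^V=0\},\quad A_{\geq 2}=\{N_T\geq 2\}.
\end{equation*}

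On $A_{\geq 2}$ one has $\mathbb{P}(A_{\geq 2})=O(T^2)$; combined with the pointwise bound $(\sqrt{\eta^2(S_T)V_T+\kappa}-K)^{+}\leq M_\eta\sqrt{V_T}+\sqrt{\kappa}$ and the moment control on $V_T$ that follows from \eqref{LSvol:jumps} and Assumption~\ref{assump:bounded}, a Cauchy--Schwarz estimate makes the $A_{\geq 2}$ contribution $O(T^2)$, hence negligible after division by $T$. On $A_0$ the pair $(S_t,V_t)$ evolves along the purely continuous SDE obtained from \eqref{LSvol:jumps} by deleting the jump terms. Under Assumption~\ref{assump:LDP}, a Freidlin--Wentzell-type small-time large-deviations argument, in the spirit of the one used in \cite{VIXpaper} for the jump-free VIX problem, gives $\mathbb{P}(\eta^2(S_T)V_T+\kappa\geq K^2\,\vert\,A_0)=O(e^{-c/T})$ for some $c>0$, whence the $A_0$ contribution is $O(e^{-c/T})=o(T)$.

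The three one-jump events each contribute at order $T$. On $A_S$, conditional on $N_T^S=1$ the jump epoch $\tau^{*}$ is uniformly distributed on $[0,T]$ and the size $Y_1^S$ has law $P^S$, both independent of the driving Brownian motions; writing $V_T$ as a continuous stochastic exponential and $S_T$ as the product of a continuous exponential on $[0,\tau^{*})$, the jump factor $e^{Y_1^S}$, and a continuous exponential on $[\tau^{*},T]$, the uniform boundedness of $\eta,\mu,\sigma$ together with standard stochastic-exponential moment bounds forces $V_T\to V_0$ and $S_T\to S_0 e^{Y_1^S}$ in every $L^p$ as $T\to 0$. The continuity of $(s,v,x)\mapsto(\sqrt{\eta^2(s e^x)v+\kappa}-K)^{+}$ combined with the integrable majorant $M_\eta\sqrt{V_T}+\sqrt{\kappa}$ yields, by dominated convergence and $\mathbb{P}(A_S)=\lambda^S T+O(T^2)$,
\begin{equation*}
\frac{1}{T}\mathbb{E}\!\left[(\sqrt{\eta^2(S_T)V_T+\kappa}-K)^{+}\mathbf{1}_{A_S}\right]\longrightarrow \lambda^S\!\int_{\mathbb{R}}(\sqrt{\eta^2(S_0 e^x)V_0+\kappa}-K)^{+}P^S(dx).
\end{equation*}
The analogous arguments on $A_V$ and $A_C$ deliver the other two integrals, with $S_T\to S_0$, $V_T\to V_0 e^{Y_1^V}$ on $A_V$, and $S_T\to S_0 e^{Y_1^{C,S}}$, $V_T\to V_0 e^{Y_1^{C,V}}$ integrated against the joint law $P^C(dx,dy)$ on $A_C$. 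Summing the three contributions and using $e^{-rT}=1+O(T)$ yields part (i). Part (ii) for the OTM put is handled by the identical decomposition; the put payoff is uniformly bounded by $K$, which in fact simplifies the moment controls.

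The main obstacle is the zero-jump estimate: one needs a quantitative small-time large-deviations bound for the jump-free continuous SDE that is strong enough to dominate the (unbounded) payoff majorant. This is precisely the role of Assumption~\ref{assump:LDP}, which supplies uniform ellipticity and H\"older continuity of $\eta,\sigma$ in log-variables. Once this LDP ingredient is secured, the remainder of the proof reduces to conditioning on the jump structure and applying dominated convergence, both routine.
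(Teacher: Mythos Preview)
Your overall strategy matches the paper's proof exactly: decompose by jump count, invoke the large-deviations result from \cite{VIXpaper} on $A_0$, analyze each single-jump event by conditioning on the (uniform) jump time, and control the $A_{\geq 2}$ contribution via a moment estimate. The single-jump analysis and the $A_0$ treatment are essentially as in the paper, though the paper carries out the one-jump limits more explicitly via the $1/(2\sqrt\kappa)$-Lipschitz property of $x\mapsto\sqrt{x+\kappa}$ together with Lemmas~\ref{lem:1} and~\ref{lem:2}, rather than asserting $L^p$ convergence of $(S_T,V_T)$.

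There is, however, a genuine gap in your $A_{\geq 2}$ estimate. Cauchy--Schwarz only yields
\[
\mathbb{E}\!\left[\bigl(\sqrt{\eta^2(S_T)V_T+\kappa}-K\bigr)^{+}1_{A_{\geq 2}}\right]
\leq \Bigl(\mathbb{E}\bigl[(M_\eta\sqrt{V_T}+\sqrt\kappa)^2\bigr]\Bigr)^{1/2}\bigl(\mathbb{P}(A_{\geq 2})\bigr)^{1/2}
= O(1)\cdot O(T),
\]
which is $O(T)$, not $O(T^2)$; after dividing by $T$ this does \emph{not} vanish. The paper repairs this by applying H\"older's inequality with conjugate exponents $p>2$ and $q=p/(p-1)<2$, so that $\bigl(\mathbb{P}(A_{\geq 2})\bigr)^{1/q}=O(T^{2/q})=o(T)$. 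This in turn requires the moment bound $\mathbb{E}\bigl[(\sqrt{\eta^2(S_T)V_T+\kappa}+K)^p\bigr]=O(1)$ for some $p>2$, which is supplied by Lemma~\ref{lem:p:bound} under the hypothesis $\mathbb{E}\bigl[e^{(p/2)Y_1^V}\bigr],\,\mathbb{E}\bigl[e^{(p/2)Y_1^{C,V}}\bigr]<\infty$; the standing assumption that $\mathbb{E}[e^{2Y_1^V}]$ and $\mathbb{E}[e^{2Y_1^{C,V}}]$ are finite secures this for any $2<p\leq 4$. Your sketch misses both the need for the sharper H\"older exponent and the moment lemma that justifies it.
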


It is easy to see from \eqref{C:1:tau}-\eqref{C:2:tau}
that $C_{1}(\tau)=O(\tau)$ and $C_{2}(\tau)=O(\sqrt{\tau})$
as $\tau\rightarrow 0$. Therefore, if $\tau=o(T^{2})$ as $T\rightarrow 0$, then the results in Theorem~\ref{thm:OTM} still hold
if one removes the limit $\tau\rightarrow 0$ in Theorem~\ref{thm:OTM}.

We observe from Theorem~\ref{thm:OTM} that
in the OTM case, $C_{V}(K,T)$ and $P_{V}(K,T)$ are of order $T$
as $T\rightarrow 0$, in contrast to $e^{-O(1/T)}$
for the local-stochastic volatility model without jumps (Theorem~4.1 in \cite{VIXpaper}).
Also, the leading-order terms in Theorem~\ref{thm:OTM}
only depend on the jumps, not the diffusion part 
in the local-stochastic volatility model.
The intuition is that 
the probability of having one single jump
that can
move the option from OTM to ITM is of order $T$,
which dominates the probability of order 
$e^{-O(1/T)}$
for the local-stochastic volatility model without jumps (Theorem~4.1 in \cite{VIXpaper}).
Therefore, the leading-order terms in Theorem~\ref{thm:OTM}
only depend on the jumps, whereas the diffusion part is negligible.

Next, let us consider the ATM case for VIX options, i.e. $\eta^{2}(S_{0})V_{0}+\kappa=K^{2}$.
We have the following result for ATM short-maturity VIX options.

\begin{theorem}\label{thm:ATM}
When $\eta^{2}(S_{0})V_{0}+\kappa=K^{2}$, we have    
\begin{align}
&\lim_{T\rightarrow 0}\lim_{\tau\rightarrow 0}\frac{1}{\sqrt{T}}C_{V}(K,T)
=\lim_{T\rightarrow 0}\lim_{\tau\rightarrow 0}\frac{1}{\sqrt{T}}P_{V}(K,T)
\nonumber
\\
&=\frac{1}{\sqrt{2\pi}}\frac{\sqrt{\eta^{2}(S_{0})V_{0}}}{\sqrt{\eta^{2}(S_{0})V_{0}+\kappa}}\nonumber
\\
&\qquad\cdot\sqrt{\left((\eta(S_{0})\frac{1}{2}\sigma(V_{0})\sqrt{V_{0}}+\eta'(S_{0})\eta(S_{0})S_{0}V_{0}\rho\right)^{2}+\left(\eta'(S_{0})\eta(S_{0})S_{0}V_{0}\sqrt{1-\rho^{2}}\right)^{2}}.
\end{align}
\end{theorem}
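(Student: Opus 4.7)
After sending $\tau\to 0$ via Corollary~\ref{cor:VIXsmalltau}, the problem reduces to computing $\lim_{T\to 0}T^{-1/2}e^{-rT}\mathbb{E}[(f(S_T,V_T)-K)^+]$, where $f(s,v):=\sqrt{\eta^2(s)v+\kappa}$, and the ATM condition gives $f(S_0,V_0)=K$. I would first decouple the jumps from the diffusion. Conditioning on the numbers of jumps of each type in $[0,T]$ and using Assumption~4 (finite moments of $e^{2Y_1^{V}}$ and $e^{2Y_1^{C,V}}$) together with the uniform bounds of Assumption~\ref{assump:bounded}, one has the conditional bound $\mathbb{E}[(f(S_T,V_T)-K)^+\mid N_T^S+N_T^V+N_T^C=k]\le C_k<\infty$. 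Summing against the Poisson weights, whose tail is $O(T^{k})$, yields $\mathbb{E}[(f(S_T,V_T)-K)^+\mathbf{1}_{\{N_T\ge 1\}}]=O(T)=o(\sqrt T)$. Hence only the no-jump event contributes to the $\sqrt T$ order.

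On the no-jump event, $(S_t,V_t)$ is driven purely by the diffusion part of \eqref{LSvol:jumps}, and I would linearize $f$ around $(S_0,V_0)$ by It\^o's formula, freezing the diffusion coefficients at $t=0$ inside the stochastic integrals. With
\begin{equation*}
\partial_s f(S_0,V_0)=\frac{\eta(S_0)\eta'(S_0)V_0}{K},\qquad \partial_v f(S_0,V_0)=\frac{\eta^2(S_0)}{2K},
\end{equation*}
the leading fluctuation is the centered Gaussian
\begin{equation*}
G_T:=\partial_s f(S_0,V_0)\,S_0\eta(S_0)\sqrt{V_0}\,W_T+\partial_v f(S_0,V_0)\,V_0\sigma(V_0)\,Z_T.
\end{equation*}
All remaining contributions---the drift terms, the $\tfrac12\mathrm{Tr}(D^2f\,d\langle\cdot,\cdot\rangle)$ corrections, and the coefficient-freezing errors in the two stochastic integrals---are $O(T)$ in $L^1$: this follows from Assumptions~\ref{assump:bounded}--\ref{assump:lip}, the no-jump $L^2$ estimate $\mathbb{E}|S_t-S_0|^2+\mathbb{E}|V_t-V_0|^2=O(t)$, and It\^o isometry applied to integrands of order $O(\sqrt t)$ in $L^2$. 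Using $|x^+-y^+|\le|x-y|$ then gives $\mathbb{E}[(f(S_T,V_T)-K)^+]=\mathbb{E}[G_T^+]+o(\sqrt T)$, and analogously for the put.

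Finally, I would compute $\mathrm{Var}(G_T)=T\,\Sigma^2$, where by the identity $\mathrm{Var}(aW_T+bZ_T)=T(a^2+2\rho ab+b^2)$,
\begin{equation*}
\Sigma^2=\bigl(\partial_s f\bigr)^2 S_0^2\eta^2(S_0)V_0+2\rho\,(\partial_s f)(\partial_v f)\,S_0\eta(S_0)V_0^{3/2}\sigma(V_0)+\bigl(\partial_v f\bigr)^2 V_0^2\sigma^2(V_0),
\end{equation*}
and verify by a straightforward expansion, using $K^2=\eta^2(S_0)V_0+\kappa$, that this equals
\begin{equation*}
\frac{\eta^2(S_0)V_0}{\eta^2(S_0)V_0+\kappa}\Bigl[\bigl(\tfrac12\eta(S_0)\sigma(V_0)\sqrt{V_0}+\eta'(S_0)\eta(S_0)S_0V_0\rho\bigr)^2+\bigl(\eta'(S_0)\eta(S_0)S_0V_0\sqrt{1-\rho^2}\bigr)^2\Bigr].
\end{equation*}
Since $\mathbb{E}[G_T^+]=\Sigma\sqrt T/\sqrt{2\pi}$, dividing by $\sqrt T$ reproduces the closed form stated in the theorem; equality of the call and put limits is immediate from $\mathbb{E}[G_T^+]=\mathbb{E}[G_T^-]$ for centered Gaussians. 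The main obstacle is the coefficient-freezing step: because $(f-K)^+$ is only Lipschitz at the ATM point, one cannot invoke a smooth second-order Taylor/PDE expansion and must control the residual stochastic integrals $\int_0^T[\partial_s f(S_t,V_t)S_t\eta(S_t)\sqrt{V_t}-\partial_s f(S_0,V_0)S_0\eta(S_0)\sqrt{V_0}]dW_t$ and its $Z$-analogue directly in $L^1$, using the H\"older-type continuity of the integrands implied by Assumption~\ref{assump:lip} together with moment bounds on $(S_t,V_t)$.
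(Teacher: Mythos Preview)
Your proposal is correct and follows essentially the same approach as the paper: decompose into the jump/no-jump events, show the jump contribution is $o(\sqrt{T})$, and on the no-jump event linearize the payoff around $(S_0,V_0)$ to reduce to the expectation of the positive part of a centered Gaussian. The only execution differences are that the paper uses a H\"older argument (Lemma~\ref{lem:p:bound} with $1<q<2$) rather than your conditioning-on-jump-count argument for the jump term, and it delegates the diffusion linearization to Theorem~4.2 of \cite{VIXpaper} (first approximating $(\tilde S_T,\tilde V_T)$ by the Gaussian pair $(\hat S_T,\hat V_T)$, then linearizing $f$) instead of doing a one-step It\^o/coefficient-freezing argument as you outline.
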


Since $C_{1}(\tau)=O(\tau)$ and $C_{2}(\tau)=O(\sqrt{\tau})$
as $\tau\rightarrow 0$ from \eqref{C:1:tau}-\eqref{C:2:tau}. Therefore, if $\tau=o(T)$ as $T\rightarrow 0$, then the results in Theorem~\ref{thm:ATM} still hold
if one removes the limit $\tau\rightarrow 0$ in Theorem~\ref{thm:ATM}.

We observe from Theorem~\ref{thm:ATM} that
in the ATM case, $C_{V}(K,T)$ and $P_{V}(K,T)$ are of order $\sqrt{T}$
as $T\rightarrow 0$, and the limits coincide
with those of the local-stochastic volatility model without jumps \cite{VIXpaper}.
The intuition is that 
the probability of having any jump
is of order $T$, which is negligible
compared to the order $\sqrt{T}$
for the local-stochastic volatility model without jumps (Theorem~4.1 in \cite{VIXpaper}).
Therefore, the leading-order terms in Theorem~\ref{thm:ATM}
only depend on the diffusion part, whereas the jumps part is negligible. 

\subsection{European options}\label{sec:European}

In this section, we derive the short-maturity asymptotics
for European options.
Although short-maturity asymptotics
for European options in the presence of jumps
have been studied in the literature \cite{GatheralBook,Alos2007}, 
to the best of our knowledge, 
the results on the short-maturity asymptotics of these options
in the 
local-stochastic volatility model with jumps presented here are new.

The prices of European call and put options are given by
\begin{equation}
C_{E}(K,T)=e^{-rT}\mathbb{E}[(S_{T}-K)^{+}],
\qquad
P_{E}(K,T)=e^{-rT}\mathbb{E}[(K-S_{T})^{+}],
\end{equation}
where $K>0$ is the strike price.
When $S_{0}<K$, the European call option is OTM,
and when $S_{0}>K$, the European put option is OTM.
We have the following result for OTM short-maturity European options.

\begin{theorem}\label{thm:OTM:European}
(i) When $S_{0}<K$, 
\begin{align}
\lim_{T\rightarrow 0}\frac{C_{E}(K,T)}{T}
=\lambda^{S}\int_{\mathbb{R}}\left(S_{0}e^{x}-K\right)^{+}P^{S}(dx)
+\lambda^{C}\int_{\mathbb{R}}\int_{\mathbb{R}}\left(S_{0}e^{x}-K\right)^{+}P^{C}(dx,dy).
\end{align}

(ii) When $S_{0}>K$,
\begin{align}
\lim_{T\rightarrow 0}\frac{P_{E}(K,T)}{T}
=\lambda^{S}\int_{\mathbb{R}}\left(K-S_{0}e^{x}\right)^{+}P^{S}(dx)
+\lambda^{C}\int_{\mathbb{R}}\int_{\mathbb{R}}\left(K-S_{0}e^{x}\right)^{+}P^{C}(dx,dy).
\end{align}
\end{theorem}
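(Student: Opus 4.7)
The plan is to decompose the expectation $\mathbb{E}[(S_T-K)^+]$ according to the number and type of Poisson jumps occurring in $[0,T]$, then show that only the single-jump events involving the asset price (idiosyncratic and common) contribute to the leading $O(T)$ order. I will focus on the call case (i); the put case (ii) is analogous.

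First, I partition the probability space into $A_0 = \{N_T^S = N_T^V = N_T^C = 0\}$, the three single-jump events $A_{1,S}, A_{1,V}, A_{1,C}$ (exactly one jump of the specified type and none of the others), and the remainder $A_{\geq 2}$ containing at least two jumps. Since Poisson probabilities satisfy $\mathbb{P}(N_T = k) = \frac{(\lambda T)^k}{k!}e^{-\lambda T}$ and the three jump processes are independent, I have $\mathbb{P}(A_0) = 1 - (\lambda^S+\lambda^V+\lambda^C)T + O(T^2)$, each $\mathbb{P}(A_{1,\bullet}) = \lambda^\bullet T + O(T^2)$, and $\mathbb{P}(A_{\geq 2}) = O(T^2)$. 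Combined with the uniform bound $\mathbb{E}[(S_T-K)^+|A_{\geq 2}] \leq \mathbb{E}[S_T|A_{\geq 2}]$, which remains bounded uniformly in $T$ (using Assumption~4 on the exponential moments of the variance jumps and boundedness of $\mu,\sigma$), the event $A_{\geq 2}$ contributes at most $O(T^2) = o(T)$.

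Next, on the no-jump event $A_0$, the dynamics reduce to the pure local-stochastic volatility model \eqref{LSvol}, and OTM short-maturity European option prices in such models decay as $e^{-O(1/T)}$ by standard large-deviation estimates (enabled by Assumption~\ref{assump:LDP}, which provides nondegeneracy and H\"older regularity of $\eta,\sigma$ on the log-scale). Thus this contribution is also $o(T)$. For the single-jump events, I condition on the jump time $\tau_1$ (uniformly distributed on $[0,T]$ given one jump) and the jump mark. On $A_{1,S}$, I can write
\begin{equation*}
S_T = S_0 \exp\!\Bigl(\int_0^T \eta(S_t)\sqrt{V_t}\,dW_t - \tfrac{1}{2}\!\int_0^T \eta^2(S_t)V_t\,dt + (r-q-\lambda^S\mu^S-\lambda^C\mu^{C,S})T + Y_1^S\Bigr),
\end{equation*}
so as $T\to 0$, $S_T \to S_0 e^{Y_1^S}$ in probability. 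Uniform integrability of $(S_T)^+$ follows from the bound $\mathbb{E}[S_T|A_{1,S}] \leq C\,\mathbb{E}[e^{Y_1^S}] < \infty$ together with Assumption~\ref{assump:bounded}, so dominated convergence yields $\mathbb{E}[(S_T - K)^+ | A_{1,S}] \to \mathbb{E}[(S_0 e^{Y_1^S}-K)^+]$. This produces the first term. The same argument applied to $A_{1,C}$, where only the $S$-component $Y_1^{C,S}$ of the common-jump mark enters the asset price discontinuously, produces the second term via the joint law $P^C(dx,dy)$. Finally, on $A_{1,V}$ the jump affects only $V$, so $S_T \to S_0$ in probability; since $S_0 < K$ implies $(S_0-K)^+ = 0$, dominated convergence gives $\mathbb{E}[(S_T-K)^+|A_{1,V}] \to 0$, hence this contribution is $o(T)$.

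The main obstacle is the uniform integrability and dominated convergence step on the single-jump events: the integrand $S_0 e^x$ is unbounded in the jump mark, so one must uniformly control $\mathbb{E}[S_T - S_0 e^{Y_1^S}\,|\,A_{1,S}, Y_1^S = x]$ in $T$ and $x$. This is handled by Assumption~\ref{assump:bounded}, which makes the stochastic exponential factors before and after the jump uniformly $L^1$-bounded by constants independent of $x$, so $e^{Y_1^S}$ itself provides the integrable envelope. The put case (ii) is essentially identical, with $(S_T-K)^+$ replaced by $(K-S_T)^+$; this payoff is even simpler since it is bounded by $K$, so uniform integrability is automatic and dominated convergence applies directly.
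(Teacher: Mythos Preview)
Your decomposition and overall strategy match the paper's exactly: no-jump event via the diffusion large-deviation estimate, single-jump events via convergence to the deterministic limit, and $\geq 2$ jumps negligible. Two technical points deserve tightening. First, your appeal to Assumption~1.4 for the bound on $\mathbb{E}[S_T\mid A_{\geq 2}]$ is misplaced---that assumption concerns the variance-jump moments $\mathbb{E}[e^{2Y_1^V}],\mathbb{E}[e^{2Y_1^{C,V}}]$ and plays no role here. What actually controls this term is that $e^{-(r-q)t}S_t$ is a nonnegative supermartingale (so the Brownian stochastic exponential has conditional expectation $\leq 1$ given the jump history) together with $\mathbb{E}[e^{Y_1^S}],\mathbb{E}[e^{Y_1^{C,S}}]<\infty$; a short compound-Poisson computation then gives $\mathbb{E}[S_T\,1_{A_{\geq 2}}]=O(T^2)$ directly. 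The paper instead applies H\"older with an $L^p$ moment bound (as in Step~5 of the analogous Theorem~\ref{thm:OTM} proof), so your route here is actually a bit more economical for the European payoff. Second, a uniformly bounded first moment does not by itself give uniform integrability: on $A_{1,S}$ you have $S_T\to S_0 e^{Y_1^S}$ a.s.\ and (via the supermartingale bound) $\mathbb{E}[S_T\mid A_{1,S},Y_1^S]=S_0 e^{O(T)}e^{Y_1^S}\to S_0 e^{Y_1^S}$, so Scheff\'e's lemma yields $L^1$ convergence conditional on the mark, after which dominated convergence in $x$ (envelope $Ce^x$, integrable under $P^S$) completes the argument. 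The paper handles the analogous single-jump step through explicit Lipschitz estimates and the auxiliary lemmas on $\mathbb{E}|\bar V_t-\bar V_0|$, but your soft argument is fine once these two points are patched.
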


Similarly as OTM VIX options (Theorem~\ref{thm:OTM}), we observe from Theorem~\ref{thm:OTM:European} that
in the OTM case, $C_{E}(K,T)$ and $P_{E}(K,T)$ are of order $T$
as $T\rightarrow 0$, in contrast to $e^{-O(1/T)}$
for the local-stochastic volatility model without jumps (Theorem~3.1 in \cite{VIXpaper}).
Also, the leading-order terms in Theorem~\ref{thm:OTM:European}
only depend on the jumps, not the diffusion part 
in the local-stochastic volatility model.

Next, let us consider the ATM case for European options, i.e. $S_{0}=K$.
We have the following result for ATM short-maturity European options.

\begin{theorem}\label{thm:ATM:European}
When $S_{0}=K$, we have    
\begin{align}
\lim_{T\rightarrow 0}\frac{1}{\sqrt{T}}C_{E}(K,T)
=\lim_{T\rightarrow 0}\frac{1}{\sqrt{T}}P_{E}(K,T)
=\frac{\eta(S_{0})\sqrt{V_{0}}}{\sqrt{2\pi}}.
\end{align}
\end{theorem}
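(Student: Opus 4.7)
The idea is two-fold: (i) Poisson jumps arrive at bounded intensity, so the event $A := \{N_T^S + N_T^V + N_T^C \geq 1\}$ that any jump occurs in $[0,T]$ has probability $1-e^{-(\lambda^S+\lambda^V+\lambda^C)T} = O(T)$, which is negligible relative to the $\sqrt{T}$-order ATM contribution produced by the diffusion; and (ii) the put payoff is uniformly bounded, $(K-S_T)^+ \leq K$, so the jump error is immediately $O(T)$. The call can then be read off from the put via put--call parity: at $S_0=K$, $C_E(K,T)-P_E(K,T) = S_0 e^{-qT} - K e^{-rT} = K(e^{-qT}-e^{-rT}) = O(T) = o(\sqrt{T})$. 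I therefore focus on $P_E(K,T)$.

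Decompose
$$\mathbb{E}\bigl[(K-S_T)^+\bigr] = \mathbb{E}\bigl[(K-S_T)^+ \mathbf{1}_{A^c}\bigr] + \mathbb{E}\bigl[(K-S_T)^+ \mathbf{1}_A\bigr].$$
On $A^c$, the path $(S_t,V_t)_{t\in [0,T]}$ solving \eqref{LSvol:jumps} coincides a.s.\ with the solution $(S_t^{\mathrm{cont}},V_t^{\mathrm{cont}})$ of the same system with all jump increments removed, i.e.\ a pure local-stochastic volatility model of the type \eqref{LSvol}, differing only by the constant drift shift $-(\lambda^S\mu^S + \lambda^C\mu^{C,S})$ in $dS/S$. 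Independence of the Poisson processes from the Brownian motions $W,Z$ gives
$$\mathbb{E}\bigl[(K-S_T)^+ \mathbf{1}_{A^c}\bigr] = e^{-(\lambda^S + \lambda^V + \lambda^C)T}\, \mathbb{E}\bigl[(K-S_T^{\mathrm{cont}})^+\bigr],$$
with the prefactor equal to $1+O(T)$. Applying the ATM short-maturity result for the no-jump local-stochastic volatility model (Theorem~3.1 of \cite{VIXpaper}) to the inner expectation yields
$$\mathbb{E}\bigl[(K-S_T^{\mathrm{cont}})^+\bigr] = \frac{\eta(S_0)\sqrt{V_0}}{\sqrt{2\pi}}\sqrt{T}\,(1+o(1)),$$
since the constant drift shift perturbs $\log S_T^{\mathrm{cont}}$ only at order $T$ and so does not alter the leading $\sqrt{T}$ coefficient.

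Finally, $\mathbb{E}\bigl[(K-S_T)^+ \mathbf{1}_A\bigr] \leq K\,\mathbb{P}(A) = O(T) = o(\sqrt{T})$, and $e^{-rT} = 1+O(T)$, so these bounds combine to give the stated ATM limit for $P_E(K,T)$, and hence for $C_E(K,T)$ via put--call parity. The only nontrivial input is the cited no-jump ATM asymptotic; the main obstacle is checking cleanly that the added constant drift $-(\lambda^S\mu^S + \lambda^C\mu^{C,S})$ does not perturb the leading ATM coefficient. This can be argued either by a direct first-order small-$T$ expansion, observing that $\log(S_T^{\mathrm{cont}}/S_0)$ is approximately Gaussian with mean $O(T)$ and standard deviation of order $\sqrt{T}$, or by the elementary estimate $\bigl|\mathbb{E}[(K-S_T^{\mathrm{cont}})^+] - \mathbb{E}[(K-\tilde S_T)^+]\bigr| \leq \mathbb{E}|S_T^{\mathrm{cont}} - \tilde S_T| = O(T)$, where $\tilde S_T$ is the solution of \eqref{LSvol} with unshifted drift $r-q$.
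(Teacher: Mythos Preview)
Your proof is correct and follows the same overall decomposition as the paper: split according to the event $A=\{N_T^S+N_T^V+N_T^C\ge 1\}$, show the jump contribution is $o(\sqrt{T})$, and reduce the no-jump part to the ATM result for the pure local-stochastic volatility model in \cite{VIXpaper}. The paper also absorbs the constant drift shift $-(\lambda^S\mu^S+\lambda^C\mu^{C,S})$ into the no-jump model $(\tilde S,\tilde V)$ and appeals directly to the ATM result there, just as you do.

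The one genuine difference is how the jump term is controlled. The paper treats the call directly and, following the template of its ATM VIX proof, bounds $\mathbb{E}\bigl[(S_T-K)^+\mathbf{1}_A\bigr]$ via H\"older's inequality together with a $p$-th moment estimate on $S_T$ (so that $\mathbb{E}[(S_T+K)^p]=O(1)$) and $\mathbb{P}(A)=O(T)$, choosing $1<q<2$. Your route is more elementary: work with the put, use the pointwise bound $(K-S_T)^+\le K$ to get $\mathbb{E}\bigl[(K-S_T)^+\mathbf{1}_A\bigr]\le K\,\mathbb{P}(A)=O(T)$ with no moment input, and then recover the call via put--call parity $C_E-P_E=S_0e^{-qT}-Ke^{-rT}=O(T)$ at $S_0=K$. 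This buys you a cleaner argument that does not need the auxiliary moment lemma; the paper's approach, on the other hand, is uniform across the VIX and European cases (where the VIX payoff is not a priori bounded and the H\"older/moment machinery is genuinely needed).
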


Similarly as ATM VIX options (Theorem~\ref{thm:ATM}), we observe from Theorem~\ref{thm:ATM:European} that
in the ATM case, $C_{E}(K,T)$ and $P_{E}(K,T)$ are of order $\sqrt{T}$
as $T\rightarrow 0$, and the limits coincide
with those of the local-stochastic volatility model without jumps \cite{VIXpaper}.

\section{Modeling the Jumps}
\label{sec:models}

In this section, we present and study a few models for the idiosyncratic $Y_1^S, Y_1^V$ and common jumps $Y_1^{C,S}, Y_1^{C,V}$. 
We will obtain asymptotic predictions for these models, 
applying the theoretical results 
obtained in Section~\ref{sec:main}.
In particular, we will present the short-maturity leading asymptotic predictions for both OTM VIX and European options.

The models we present are guided by the following empirical observations.

1. \textit{Idiosyncratic jumps.} There are idiosyncratic jumps in $S$ and $V$, driven by independent Poisson processes with intensities $\lambda^{S}, \lambda^{V}$.
The jumps in the underlying can take either sign, upward or downward. 
On the other hand, empirical evidence \cite{Eraker2004} suggests that jumps in the variance process can only be positive. 

2. \textit{Common jumps.} There are jumps that occur simultaneously in the underlying and in the variance process. They are driven by a common Poisson process with intensity $\lambda^{C}$. The size distribution has the same sign restriction as the idiosyncratic jumps. 

3. \textit{Negative correlation of jumps in $V$ and $S$.} 
Empirical evidence \cite{Eraker2004} shows that jumps in variance are always positive and are negatively correlated with jumps in $S$. Therefore, the joint distribution of the common jumps is modeled such that their correlation is negative.

4. \textit{The variance process $V_t$.}
It is an empirical fact that the variance process mean-reverts to a long-term finite value $V_L$. This can be achieved with a Heston-like process
$dV_t = (\theta - \kappa V_t) dt + \sigma_V V_t dZ_t$.
The technical conditions of our paper are satisfied by a geometric Brownian motion (gBM) process for $V_t$ of the form $dV_t/V_t = \mu_V dt + \sigma_V dZ_t$.
The gBM process approaches the long-term value $V_L=0$ if the drift is sufficiently small, since  $\lim_{t\to \infty} V_t = 0$ a.s. for $\mu_V < \frac12 \sigma_V^2$. 

We consider several models for the jump size distribution.

\subsection{Eraker model}\label{sec:Eraker:model}

One popular model in the literature for stochastic volatility with common jumps in $S,V$ is the Eraker model \cite{Eraker2004} (denoted in this reference as the SVCJ model). 
The model contains idiosyncratic jumps $Y_1^S$ for the asset $S$ and $Y_1^V$ for the variance process $V$. They are normally and exponentially distributed, respectively
\begin{align}
Y_1^S &= \alpha_S + \sigma_S Z_S\,,\quad Z_S \sim \mathcal{N}(0,1)\,,\\
Y_1^V &\sim \mathrm{Exp}(\eta_V) \,.
\end{align}
The common jumps to $S$ and $V$ are driven by a Poisson process with intensity $\lambda^{C}$. The jump distribution for $V$ is exponential: 
\begin{equation}
Y_1^{C,V} \sim \mathrm{Exp}(\eta_{C,V})\,.
\end{equation}
Conditional on  $Y_1^{C,V}$, there is a jump in the underlying which is normally distributed as 
\begin{align}
Y_1^{C,S} = \mu^{S} + \rho_J Y_1^{C,V} + \sigma_{C,S} Z_{C,S}\,,\quad Z_{C,S}\sim\mathcal{N}(0,1)\,.
\end{align}
The joint density of the jumps is given by
\begin{align}\label{pCEraker}
p^{C}(x,y) = \eta_{C,V} e^{-\eta_{C,V} y}\cdot
\frac{1}{\sqrt{2\pi} \sigma_{C,S}} e^{-\frac{1}{2\sigma_{C,S}^2} (x - \mu^{S} - \rho_J y)^2}\,.
\end{align}

This model is widely used in the literature on VIX modeling with jump-diffusions; see e.g.
Lin and Chang (2009) \cite{Lin2009}, Lian and Zhu (2013)
\cite{Lian2013} and Ye, Wu and Chen (2023) \cite{Ye2023}.

\textit{Compensators.} 
The compensators for the $S,V$ jumps are 
\begin{align}
\mu^{S}  &= \mathbb{E}\left[e^{Y_1^S}\right]-1 = e^{\alpha_S + \frac12 \sigma_S^2} - 1\,,\\
\mu^{V}  &= \mathbb{E}\left[e^{Y_1^V}\right]-1 = \frac{1}{\eta_V-1}\,,\quad \eta_V > 1\,.
\end{align}

The compensator for the common jumps $Y_1^{C,S}$ is
\begin{align}
\mu^{C,S} = \eta_{C,V} \int_0^\infty e^{-\eta_{C,V} y} \left( e^{\mu^{S} + \rho_J y + \frac12 \sigma_{C,S}^2} - 1 \right) dy = \frac{\eta_{C,V}}{\eta_{C,V} - \rho_J} 
 e^{\mu^{S} + \frac12 \sigma_{C,S}^2} - 1\,.
\end{align}

\subsubsection{Asymptotic predictions for European options} 

We take $\eta(x)\equiv 1$ since the model does not include a local volatility component. 
The contribution from the $S$ idiosyncratic jumps follows from Boyarchenko and
Levendorskii (2002) \cite{BL02}. 
The short-maturity asymptotic prediction for OTM European options 
in the Eraker model is obtained from Theorem~\ref{thm:OTM:European}.

\begin{corollary}\label{cor:Eraker:European}

Assume that the asset price follows a jump-diffusion model \eqref{LSvol:jumps} with jump distribution given by the Eraker model. The leading-order short maturity asymptotics of the OTM European options are given as follows.

a) When $S_{0}<K$,
\begin{align}
a_{E,C}(K) := \lim_{T\to 0}
\frac{C_{E}(K,T)}{T}
&= \lambda^{C} \eta_{C,V} \int_0^\infty e^{-\eta_{C,V} y}
c_{\mathrm{BS}}\left(K,S_0e^{\mu^{S}+\rho_J y}, \sigma_{C,S}\right) dy \nonumber \\
&\quad+ \lambda^{S} \left\{ S_0 e^{\alpha_S + \frac12 \sigma_S^2} 
\Phi\left(\frac{-k + \alpha_S + \sigma_S^2}{\sigma_S}\right) - K \Phi\left(\frac{-k+\alpha_S}{\sigma_S}\right) \right\}\,,\nonumber
\end{align}
with $c_{\mathrm{BS}}(K,F,v) := F \Phi(-\frac{1}{v} \log(K/F)+\frac12 v) - 
K \Phi(-\frac{1}{v} \log(K/F) - \frac12 v)$ and $k=\log(K/S_0)$.
We denoted $\Phi(x):=\frac{1}{\sqrt{2\pi}}\int_{-\infty}^{x}e^{-\frac{y^{2}}{2}}dy$ the cumulative distribution function of a standard normal random distribution $\mathcal{N}(0,1)$.

b) When $0<K<S_{0}$,
\begin{align}
a_{E,P}(K) := \lim_{T\to 0}
\frac{P_{E}(K,T)}{T} 
&= \lambda^{C} \eta_{C,V} \int_0^\infty e^{-\eta_{C,V} y}
p_{\mathrm{BS}}\left(K,S_0e^{\mu^{S}+\rho_J y}, \sigma_{C,S}\right) dy\nonumber \\
&\quad+ \lambda^{S} \left\{ K \Phi\left(\frac{k-\alpha_S}{\sigma_S}\right) - S_0 e^{\alpha_S + \frac12 \sigma_S^2} 
\Phi\left(\frac{k - \alpha_S - \sigma_S^2}{\sigma_S}\right)  \right\}\,,\nonumber
\end{align}
with $p_{\mathrm{BS}}(K,F,v) := - F \Phi(\frac{1}{v} \log(K/F)-\frac12 v) + 
K \Phi(\frac{1}{v} \log(K/F) + \frac12 v)$ and $k=\log(K/S_0)$. 
\end{corollary}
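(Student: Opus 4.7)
The plan is to apply Theorem~\ref{thm:OTM:European} directly, then reduce the resulting expectations against the Eraker jump distributions to explicit closed-form formulas via standard Gaussian moment-generating-function calculations.

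First, I would invoke Theorem~\ref{thm:OTM:European} for the case $S_0<K$. Since the Eraker model carries no local-volatility component, $\eta(\cdot)\equiv 1$, but this plays no role because the European-option limit
\begin{equation*}
\lim_{T\to 0}\frac{C_E(K,T)}{T}=\lambda^S \int_{\mathbb{R}}(S_0 e^x - K)^+ P^S(dx)+\lambda^C\int_{\mathbb{R}}\int_{\mathbb{R}}(S_0 e^x - K)^+ P^C(dx,dy)
\end{equation*}
in Theorem~\ref{thm:OTM:European} does not involve $\eta$. The entire task therefore reduces to evaluating the two integrals on the right-hand side in closed form under the Eraker jump distributions.

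For the $\lambda^S$ term, $Y_1^S\sim\mathcal{N}(\alpha_S,\sigma_S^2)$, so
\begin{equation*}
\int_{\mathbb{R}}(S_0 e^x - K)^+ P^S(dx)=\mathbb{E}\bigl[(S_0 e^{\alpha_S+\sigma_S Z}-K)^+\bigr],\qquad Z\sim\mathcal{N}(0,1).
\end{equation*}
Setting $k=\log(K/S_0)$, I would split the integrand over $\{x>k\}$, complete the square in the linear exponent to absorb the $S_0 e^x$ factor into a shifted Gaussian density, and read off the two standard normal CDF values. This produces the explicit $\lambda^S$ contribution stated in part (a). For the $\lambda^C$ term, I would apply Fubini's theorem to the joint density \eqref{pCEraker} (justified by nonnegativity of the integrand), integrating in $x$ first for each fixed $y$. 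Because $Y_1^{C,V}$ is exponential, the $y$-support is $(0,\infty)$; conditional on $Y_1^{C,V}=y$ we have $Y_1^{C,S}\sim\mathcal{N}(\mu^S+\rho_J y,\sigma_{C,S}^2)$. The inner $x$-integral is then the same Gaussian calculation, with mean $\mu^S+\rho_J y$ and variance $\sigma_{C,S}^2$, and is expressed via the Black--Scholes function $c_{\mathrm{BS}}$. The outer integration against $\eta_{C,V}e^{-\eta_{C,V}y}$ then yields the claimed one-dimensional integral representation.

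The put formula in part (b) is handled identically: the payoff $(S_0 e^x-K)^+$ is replaced by $(K-S_0 e^x)^+$, Fubini still applies by nonnegativity, and the analogous Gaussian integral is read off as $p_{\mathrm{BS}}$. The only real ``obstacle'' is cosmetic, namely keeping careful track of the $\tfrac12 v^2$ convexity correction when matching the raw Gaussian moment-generating function to the specific parameterization of $c_{\mathrm{BS}}(K,F,v)$ and $p_{\mathrm{BS}}(K,F,v)$ adopted in the statement. No analytic difficulty arises beyond this Black--Scholes bookkeeping.
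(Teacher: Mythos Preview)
Your proposal is correct and follows essentially the same approach as the paper: apply Theorem~\ref{thm:OTM:European}, then evaluate the $\lambda^{S}$ integral by completing the square in the Gaussian density, and evaluate the $\lambda^{C}$ integral by integrating in $x$ first (conditional on $y$) to obtain $c_{\mathrm{BS}}$, leaving a one-dimensional integral against the exponential density in $y$. The paper's own proof is virtually identical, handling the put case by the same argument with the obvious payoff replacement.
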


\subsubsection{Asymptotic predictions for VIX options}

The short maturity asymptotics for OTM VIX call and put options in the Eraker model
follow from Theorem~\ref{thm:OTM} and are given by the following corollary.

\begin{corollary}
\label{cor:Eraker:VIX}
Assume that the asset price follows a jump diffusion \eqref{LSvol:jumps} with jump dynamics given by the Eraker model. Then the leading-order short maturity asymptotics 
for the 
OTM VIX options are given as follows.

a) OTM VIX call options. When $V_{0}+\kappa<K^{2}$, 
\begin{align}
a_{V,C}(K) := \lim_{T\to 0}
\frac{C_{V}(K,T)}{T} 
&= \lambda^{C}  e^{-\eta_{C,V} y_0} \Big[\eta_{C,V} I_1(\kappa, V_0 e^{y_0}, \eta_{C,V}) - K \Big] \nonumber \\
&\qquad\qquad+ \lambda^{V} e^{-\eta_{V} y_0} \Big[\eta_{V} I_1(\kappa, V_0 e^{y_0}, \eta_{V}) - K \Big]\,, \nonumber 
\end{align}
where $y_0(K) = \log\frac{K^2-\kappa}{V_0}$ and 
$I_1(a,b,\eta)=\frac{2\sqrt{b}}{2\eta-1} {}_2F_1(-1/2, -1/2+\eta, 1/2+\eta,-a/b)$, where ${}_2 F_1(a,b,c;z)$ is the Gauss hypergeometric function, 
and 
\footnote{Note that $\lambda^{S}$ contributes indirectly through 
$\kappa$, even if the term proportional to $\lambda^{S}$ vanishes since $\eta(x)\equiv 1$.}
\begin{align}
\kappa= 2\lambda^{S} \left(e^{\alpha_S + \frac12 \sigma_S^2}-1 - \alpha_S\right)+ 2\lambda^{C} \left(\frac{\eta_{C,V}}{\eta_{C,V} - \rho_J} 
 e^{\mu^{S} + \frac12 \sigma_{C,S}^2} - 1 - \mu^{S} - \frac{\rho_J}{\eta_{C,V}}\right) \,.\nonumber
\end{align}

b) OTM VIX put options. 
When $0<K^2< V_{0}+\kappa $, 
\begin{align}\label{put:Eraker:0}
a_{V,P}(K) := \lim_{T\to 0}
\frac{P_{V}(K,T)}{T} =0.
\end{align}
\end{corollary}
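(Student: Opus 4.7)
The plan is to apply Theorem~\ref{thm:OTM} to the Eraker jump specification with $\eta(\cdot)\equiv 1$, and reduce each of the three general integrals either to zero or to a single definite integral that can be expressed via ${}_2F_1$. Since $\eta\equiv 1$, the generic integrand $\sqrt{\eta^{2}(S_{0}e^{x})V_{0}e^{y}+\kappa}$ no longer depends on $x$, so the idiosyncratic $S$-term collapses to the constant $\lambda^{S}(\sqrt{V_{0}+\kappa}-K)^{+}$, while the common-jump double integral reduces to a single integral against the marginal density $\eta_{C,V}e^{-\eta_{C,V}y}\mathbf{1}_{y\ge 0}$. The idiosyncratic $V$-integral is already against the exponential density $\eta_{V}e^{-\eta_{V}x}\mathbf{1}_{x\ge 0}$. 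Finally, the stated expression for $\kappa$ follows by substituting the Eraker compensators $m^{S}=\alpha_{S}$, $\mu^{S}=e^{\alpha_{S}+\sigma_{S}^{2}/2}-1$, $m^{C,S}=\mu^{S}+\rho_{J}/\eta_{C,V}$, and the closed-form $\mu^{C,S}$ computed earlier in the subsection, into $\kappa=2\lambda^{S}(\mu^{S}-m^{S})+2\lambda^{C}(\mu^{C,S}-m^{C,S})$.

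For part (a), where $V_{0}+\kappa<K^{2}$, the idiosyncratic $S$-term vanishes because $\sqrt{V_{0}+\kappa}-K<0$. For the other two contributions, both integrands are supported on $y\ge y_{0}(K)=\log\frac{K^{2}-\kappa}{V_{0}}>0$, since $\sqrt{V_{0}e^{y}+\kappa}\ge K$ iff $y\ge y_{0}$. Each surviving integral therefore has the common form
\begin{equation*}
\eta\int_{y_{0}}^{\infty}\bigl(\sqrt{V_{0}e^{y}+\kappa}-K\bigr)e^{-\eta y}\,dy,\qquad \eta\in\{\eta_{V},\eta_{C,V}\}.
\end{equation*}
Shifting $y=y_{0}+z$ and noting $V_{0}e^{y_{0}}=K^{2}-\kappa$ factors out $e^{-\eta y_{0}}$ and produces $e^{-\eta y_{0}}[\eta J(\kappa,V_{0}e^{y_{0}},\eta)-K]$, where $J(a,b,\eta):=\int_{0}^{\infty}\sqrt{a+be^{z}}\,e^{-\eta z}\,dz$. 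It then remains to verify $J=I_{1}$. Substituting $u=e^{-z}$ turns $J$ into $\int_{0}^{1}(b+au)^{1/2}u^{\eta-3/2}\,du=b^{1/2}\int_{0}^{1}u^{\eta-3/2}\bigl(1-(-a/b)u\bigr)^{1/2}\,du$, and applying the Euler integral representation
\begin{equation*}
{}_{2}F_{1}(\alpha,\beta;\gamma;z)=\frac{\Gamma(\gamma)}{\Gamma(\beta)\Gamma(\gamma-\beta)}\int_{0}^{1}t^{\beta-1}(1-t)^{\gamma-\beta-1}(1-zt)^{-\alpha}\,dt
\end{equation*}
with $\alpha=-1/2$, $\beta=\eta-1/2$, $\gamma=\eta+1/2$, $z=-a/b$ yields $J=\frac{2\sqrt{b}}{2\eta-1}\,{}_{2}F_{1}(-1/2,\eta-1/2;\eta+1/2;-a/b)=I_{1}(a,b,\eta)$, which gives the claimed formula once the $\lambda^{V}$ and $\lambda^{C}$ pieces are combined.

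Part (b) is immediate once one notes the one-sided support of the $V$-jumps. Since $Y_{1}^{V}\sim\mathrm{Exp}(\eta_{V})$ and $Y_{1}^{C,V}\sim\mathrm{Exp}(\eta_{C,V})$ are both non-negative, we have $V_{0}e^{Y}+\kappa\ge V_{0}+\kappa>K^{2}$ a.s., so $(K-\sqrt{V_{0}e^{Y}+\kappa})^{+}=0$ in both the $\lambda^{V}$ and the $\lambda^{C}$ integrals of Theorem~\ref{thm:OTM}(ii). The $\lambda^{S}$ integrand is the constant $(K-\sqrt{V_{0}+\kappa})^{+}=0$ as well. Summing, $a_{V,P}(K)=0$.

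The main obstacle is the identification of $J$ with the hypergeometric $I_{1}$: everything else is either a direct specialization of Theorem~\ref{thm:OTM}, or a support argument exploiting $Y^{V},Y^{C,V}\ge 0$. The key care in the hypergeometric step lies in the parameter matching (ensuring the exponent of $(1-t)$ vanishes so the integrand has the closed form we obtained) and in keeping track of which symbol $a,b$ is the hypergeometric argument versus an integration variable. The $\kappa$ identity is purely algebraic.
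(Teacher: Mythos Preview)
Your proof is correct and follows essentially the same approach as the paper: apply Theorem~\ref{thm:OTM} with $\eta\equiv 1$, observe that the $\lambda^{S}$ term vanishes by moneyness, reduce the $\lambda^{C}$ double integral to a single exponential integral, and evaluate both surviving integrals in closed form. The paper's proof simply records the answer for the integral in terms of $I_{1}$ without derivation; your explicit substitution $y=y_{0}+z$ followed by $u=e^{-z}$ and the Euler integral matching supplies the missing verification that $J=I_{1}$, so your write-up is in fact more complete than the paper's on this point.
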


Note that the short maturity asymptotics for OTM VIX put options in the Eraker model vanishes (see \eqref{put:Eraker:0}).
The reason for this null result is that both the idiosyncratic and common
jumps in $V$ are positive, and they do not contribute to the integral. 
The leading contribution to the OTM VIX put options as $T\to 0$ comes from 
the diffusive component which is exponentially suppressed in the order of $O(e^{-1/T})$.

Next, we describe two possible alternatives to this model, focusing on the common jumps contribution. The choice of the idiosyncratic jumps distribution is independent of the dynamics chosen for the common jumps, and can be treated in a similar way to the Eraker model. 
For simplicity, we omit their contribution to the asymptotic results, although we show explicitly the contribution from $\lambda^{S}$ to $\kappa$ which is independent of the distributional choice for $Y_1^S$.

\subsection{Kou-type model}\label{sec:Kou:model}

This model replaces the normal assumption for $Y_1^{C,S}$ with a doubly-exponential distribution, similar to the doubly-exponential Kou model \cite{kou2002jump}.
In contrast to the usual doubly-exponential model, we use this distribution for the common jumps, not for the idiosyncratic $S$ jumps.

The distribution of the $V$ jumps is exponential, just as in the Eraker model:
\begin{equation}
Y_1^{C,V} \sim\mathrm{Exp}(\eta_{C,V})\,.
\end{equation}
Conditional on  $Y_1^{C,V}$, the underlying $S$ has a double-exponential jump
\begin{align}
Y_1^{C,S} =
\begin{cases}
\mu^{S} + \rho_J Y_1^{C,V} + \mathrm{Exp}(\eta_{C,S}) & \text{ with probability $\alpha$}, \\
\mu^{S} + \rho_J Y_1^{C,V} - \mathrm{Exp}(\eta_{C,S}) & \text{ with probability $1- \alpha$}.
\end{cases}
\end{align}

In principle, the two exponential distributions for positive and negative jumps could have different parameters $\eta_{C,S}^+\neq 
\eta_{C,S}^-$ but we take them to be equal in analogy with the Eraker model where the normal distribution has symmetric tails. This also keeps the parameter count the same as in the Eraker model, and we can estimate the model parameters by matching the mean and variance. 

This model corresponds to the joint density of the common jumps
\begin{align}
p^{C}(x,y) &= \eta_{C,V} e^{-\eta_{C,V} y}
\cdot\Big\{
\alpha \eta_{C,S} e^{-\eta_{C,S} (x-\mu^{S}-\rho_J y)} 1_{x-\mu^{S} - \rho_J y \geq 0}\\
&\qquad\qquad\qquad\qquad\qquad+ (1-\alpha)
\eta_{C,S} e^{\eta_{C,S} (\mu^{S} + \rho_J y - x)} 1_{x-\mu^{S} - \rho_J y < 0}
\Big\}\,. \nonumber
\end{align}
As in the Eraker model, we take $\eta(x)\equiv 1$.

\textit{Compensators.} 
Since the testing in Section~\ref{sec:pred} focuses only on the common jumps, we give only the  compensator for $Y_1^{C,S}$.
\begin{align}
\mu^{C,S} = \mathbb{E}\left[e^{Y_1^{C,S}}\right]-1 = e^{\mu^{S}} \frac{\eta_{C,V}}{\eta_{C,V}-\rho_J}
\eta_{C,S} \frac{\eta_{C,S} + (2\alpha-1)}{\eta_{C,S}^2-1} - 1\,.
\end{align}
We give also the expectation of the common jump
\begin{equation}
m^{C,S} = \mathbb{E}\left[Y_1^{C,S}\right] = \mu^{S} + \frac{\rho_J}{\eta_{C,V}} + \frac{2\alpha-1}{\eta_{C,S}}\,.
\end{equation}

These expectations are required for the $\kappa$ parameter used for the short maturity asymptotics of the VIX options
\begin{equation}
\kappa = 2\lambda^{S} \left(\mu^{S} - m^{S}\right) + 2\lambda^{C} \left(\mu^{C,S} - m^{C,S}\right) \,.
\end{equation}

\subsubsection{Asymptotic predictions for European options}

The short-maturity asymptotic predictions for European options are obtained from 
Theorem~\ref{thm:OTM} and are parameterized in terms of the functions $a_{E,C}(K)$ 
and $a_{E,P}(K)$ defined as 
\begin{equation}
a_{E,C}(K) := \lim_{T\to 0} \frac{C_E(K,T)}{T}\,,\quad
a_{E,P}(K) := \lim_{T\to 0} \frac{P_E(K,T)}{T} \,.
\end{equation}
Denote $k:=\log(K/S_0)$ the option log-moneyness. We have the following short-maturity asymptotics for OTM European options for the Kou-type model.

\begin{corollary}\label{cor:Kou:European}
Assume that the asset price follows a jump-diffusion model \eqref{LSvol:jumps} with jump distribution given by the Kou-type model. 
Assume $\rho_J < 0 $ and $\mu^{S}<0$. 
Then the leading-order short maturity asymptotics 
for the OTM European options are given as follows.

a) OTM European call options. When $S_{0}<K$ (i.e $k>0$),
\begin{align}
a_{E,C}(K) = \lambda^{C} c_{R} f_c(k, \eta_{C,S}) + 
\lambda^{S} \mathcal{I}_{C,S}(K)\,,
\end{align}
where $\mathcal{I}_{C,S}(K):=\int_{\mathbb{R}}\left(S_{0}e^{x}-K\right)^{+}P^{S}(dx)$, 
\begin{align}
c_R := \alpha \frac{\eta_{C,V}}{\eta_{C,V}+\eta_{C,S} |\rho_J|} e^{\eta_{C,S} \mu^{S}}\,,
\end{align}
and
\begin{align}\label{f:c:eqn}
f_c(k,a) := S_0 \frac{1}{a-1} e^{-(a-1)k} \,.
\end{align}

b) OTM European put options. 
When $S_{0}>K$ (i.e. $k<0$), we distinguish two cases.
For $k<\mu^{S}<0$, we have
\begin{align}
a_{E,P}(K) = \lambda^{C} \left(  c_{1L} f_p\left(k, \frac{\eta_{C,V}}{|\rho_J|}\right) + c_{2L} f_p(k, \eta_{C,S}) 
\right) + \lambda^{S} \mathcal{I}_{P,S}(K) \,,
\end{align}
where $\mathcal{I}_{P,S}(K):=\int_{\mathbb{R}}\left(K-S_{0}e^{x}\right)^{+}P^{S}(dx)$,
\begin{align}
c_{1L} &:= \alpha \frac{\eta_{C,S} |\rho_J|}{\eta_{C,V} + \eta_{C,S} |\rho_J|} e^{-\frac{\eta_{C,V}}{|\rho_J|} \mu^{S}} - (1-\alpha)
\frac{\eta_{C,S} |\rho_J|}{\eta_{C,V} - \eta_{C,S} |\rho_J|} e^{-\frac{\eta_{C,V}}{|\rho_J|} \mu^{S}}\,,
\\
c_{2L} &:=(1-\alpha)
\frac{\eta_{C,V} }{\eta_{C,V} - \eta_{C,S} |\rho_J|} e^{- \eta_{C,S}  \mu^{S}} \,,
\end{align}
and
\begin{align}
f_p(k,a) := S_0 \frac{1}{a+1} e^{(a+1)k} \,,
\end{align}
and for $\mu^{S}<k<0$, we have
\begin{align}
a_{E,P}(K) &= \lambda^{C} \left( c_{1L} f_p\left(\mu^{S}, \frac{\eta_{C,V}}{|\rho_J|}\right) + c_{2L}  f_p(k, \eta_{C,S}) + 
c_{R} \left[ f_c\left(\mu^{S},\eta_{C,S}\right) - f_c(k,\eta_{C,S})\right] \right) \\
  &\qquad\qquad\qquad+ \lambda^{S} \mathcal{I}_{P,S}(K)\,, \nonumber
\end{align}
where $f_{c}$ is defined in \eqref{f:c:eqn}.
\end{corollary}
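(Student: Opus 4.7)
The proof is a direct evaluation of the common-jump contribution in Theorem~\ref{thm:OTM:European} specialized to the Kou-type density $p^{C}(x,y)$; the idiosyncratic terms are left unevaluated as $\lambda^{S}\mathcal{I}_{C,S}(K)$ and $\lambda^{S}\mathcal{I}_{P,S}(K)$. The density $p^{C}$ factors as $\eta_{C,V}e^{-\eta_{C,V}y}$ times an asymmetric double exponential in $x$ centered at $m(y):=\mu^{S}+\rho_{J}y$. Under the standing hypotheses $\rho_{J}<0$ and $\mu^{S}<0$, the center $m(y)$ is strictly decreasing in $y\geq 0$ with $m(0)=\mu^{S}<0$, so $m(y)\leq \mu^{S}<0$ throughout; this structural fact controls which of the two exponential branches contributes in each region. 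The integrability conditions $\eta_{C,S}>1$ and $\eta_{C,V}>\eta_{C,S}|\rho_{J}|$, implicit in the denominators of $c_{R}$, $c_{1L}$, $c_{2L}$, are used without comment.

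For the call case (part~(a), $k>0$) one has $k>0>m(y)$ for every $y\geq 0$, so the payoff $(S_{0}e^{x}-K)^{+}$ is supported strictly to the right of $m(y)$, and only the right branch of $p^{C}(\cdot,y)$ contributes. Using $K=S_{0}e^{k}$, the inner $x$-integral of $S_{0}(e^{x}-e^{k})\eta_{C,S}e^{-\eta_{C,S}(x-m(y))}$ on $(k,\infty)$ combines two standard exponential integrals into $f_{c}(k,\eta_{C,S})\,e^{\eta_{C,S}\mu^{S}}e^{\eta_{C,S}\rho_{J}y}$. Multiplying by $\alpha\eta_{C,V}e^{-\eta_{C,V}y}$ and integrating over $y\in(0,\infty)$ produces the factor $\alpha\eta_{C,V}/(\eta_{C,V}+\eta_{C,S}|\rho_{J}|)$, which together with $e^{\eta_{C,S}\mu^{S}}$ is exactly $c_{R}$. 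This yields part~(a).

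The put case is the main obstacle. The payoff support is $\{x<k<0\}$, and both branches of $p^{C}(\cdot,y)$ may participate. The basic dichotomy $m(y)\geq k\iff y\leq y^{\ast}:=(\mu^{S}-k)/|\rho_{J}|$ drives the analysis. In the sub-case $k<\mu^{S}$ one has $y^{\ast}>0$, and I split the $y$-integral at $y^{\ast}$: for $y\in[0,y^{\ast}]$ only the left branch contributes and the $x$-integral evaluates to $(1-\alpha)e^{-\eta_{C,S}m(y)}f_{p}(k,\eta_{C,S})$; for $y>y^{\ast}$ the $x$-integral splits at $m(y)$ into a pure put on $(-\infty,m(y))$ from the left branch and a truncated call on $[m(y),k)$ from the right branch, producing additional contributions of the form $e^{m(y)}$ and $e^{\eta_{C,S}(m(y)-k)}$. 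Integrating each of these three types of $y$-integrands against $\eta_{C,V}e^{-\eta_{C,V}y}$ produces exponentials in $y^{\ast}$ which, after substituting $y^{\ast}=(\mu^{S}-k)/|\rho_{J}|$, recombine into $c_{1L}f_{p}(k,\eta_{C,V}/|\rho_{J}|)+c_{2L}f_{p}(k,\eta_{C,S})$. In the sub-case $\mu^{S}<k<0$ one has $y^{\ast}<0$, so the $y$-range lies entirely in the regime where both branches contribute; the same computation goes through with the $y$-integrations starting at $y=0$ rather than at $y^{\ast}$, which replaces the exponential factor $e^{-(\eta_{C,V}-\eta_{C,S}|\rho_{J}|)y^{\ast}}$ by $1$ and shifts the anchor of the $c_{1L}$-contribution from $k$ to $\mu^{S}$. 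The extra term $c_{R}[f_{c}(\mu^{S},\eta_{C,S})-f_{c}(k,\eta_{C,S})]$ arises from the truncated-call pieces on $\{m(y)\leq x<k\}$, which can be written as a full $f_{c}$ at $\mu^{S}$ minus a full $f_{c}$ at $k$ via the same exponential combination used in part~(a). The main obstacle is purely one of bookkeeping: organizing the three or four separate exponential integrations, tracking which branch contributes on each sub-region, and verifying that the boundary terms at $y=y^{\ast}$ recombine cleanly with the bulk contributions to produce the compact closed forms stated.
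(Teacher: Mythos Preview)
Your approach is correct but differs from the paper's. You integrate in the order $x$ first, $y$ second, splitting the outer $y$-integral at $y^{\ast}=(\mu^{S}-k)/|\rho_J|$ according to whether the center $m(y)$ lies above or below $k$. The paper instead reverses the order: it first integrates out $y$ to obtain the \emph{marginal} density $p(x):=\int_0^\infty p^{C}(x,y)\,dy$ of the common $S$-jump, which turns out to be piecewise exponential,
\[
p(x)=\begin{cases}
c_R\,\eta_{C,S}\,e^{-\eta_{C,S}x}, & x>\mu^{S},\\[2pt]
c_{1L}\,\dfrac{\eta_{C,V}}{|\rho_J|}\,e^{\frac{\eta_{C,V}}{|\rho_J|}x}
+c_{2L}\,\eta_{C,S}\,e^{\eta_{C,S}x}, & x<\mu^{S},
\end{cases}
\]
and then the one-dimensional integrals $\int(S_0e^{x}-K)^{+}p(x)\,dx$ and $\int(K-S_0e^{x})^{+}p(x)\,dx$ are immediate in terms of $f_c$ and $f_p$. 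Your dichotomy on $y^{\ast}$ is the Fubini dual of the paper's piecewise split of $p(x)$ at $x=\mu^{S}$.

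The paper's route is more economical: once $p(x)$ is computed, both the call and the two put sub-cases follow from a single elementary integration against the appropriate exponential piece, with no boundary-matching at $y=y^{\ast}$ to verify. Your route is equally valid and perhaps more transparent conditionally on $y$ (the inner integral is a shifted Kou-type call or put), but the price is the heavier bookkeeping you flag at the end---several $y$-integrals whose boundary contributions must recombine. If you want to streamline, computing the marginal $p(x)$ up front collapses all of that into the two closed forms above.
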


\begin{remark}
The marginal distribution $\int_{\mathbb{R}} p^{C}(x,y) dy$ of the common jumps 
has different analytical form for $x<\mu^S$ and $x>\mu^S$. This leads to different analytical expressions for the asymptotic coefficients $a_{E,C}(K), a_{E,P}(K)$ depending on the sign of $\mu^{S}$. We give here only the results for $\mu^{S}<0$ which corresponds to the practically relevant case of negative average jumps. The results for $\mu^{S}>0$ can be obtained analogously.
The results of Corollary \ref{cor:Kou:European}  assume also $\rho_J<0$.
This corresponds to anti-correlated $S,V$ common jumps which is the analog of the leverage effect for jumps. 
\end{remark}

In Corollary~\ref{cor:Kou:European}, the $\mathcal{I}_{C,S}(K),\mathcal{I}_{P,S}(K)$ terms depend on the distributional property assumed for the idiosyncratic jumps in $S$. Their distribution is independent of that of the common jumps so, as discussed, we omit them for simplicity.

\subsubsection{Asymptotic predictions for VIX options}

Keeping only the term proportional to $\lambda^{C}$ as discussed above, 
the short maturity asymptotics for OTM VIX call options in the Kou-type model has the same form as in the Eraker model (Corollary~\ref{cor:Eraker:VIX}), 
since the $V$ jumps have the same exponential distribution 
$Y_1^V \sim \mbox{Exp}(\eta_{C,V})$. 
The short maturity asymptotics for OTM VIX put options in the Kou-type model also 
has the same form as in the Eraker model.
We give it again below for completeness. 

\begin{corollary}
\label{cor:Kou:VIX}
Assume that the asset follows a jump diffusion model \eqref{LSvol:jumps} with Kou-type jump dynamics.
Then the leading-order short maturity asymptotics 
for the 
OTM VIX options are given as follows.

a) For OTM VIX call options, when $V_{0}+\kappa<K^{2}$,
\begin{align}
a_{V,C}(K) := \lim_{T\to 0}\frac{C_V(K,T)}{T}
= \lambda^{C}  e^{-\eta_{C,V} y_0} \left[\eta_{C,V} I_1(\kappa, V_0 e^{y_0}, \eta_{C,V}) - K 
\right] + \lambda^{V}\mathcal{I}_{C,V}(K) \,,
\nonumber
\end{align}
where $\mathcal{I}_{C,V}(K):=\int_{\mathbb{R}}\left(\sqrt{V_{0}e^{x}+\kappa}-K\right)^{+}P^{V}(dx)$, 
$y_0(K) = \log\frac{K^2-\kappa}{V_0}$
and
$$
I_1(a,b,\eta)=\frac{2\sqrt{b}}{2\eta-1} {}_2F_1(-1/2, -1/2+\eta, 1/2+\eta;-a/b), 
$$
where ${}_2 F_1(a,b,c;z)$ is the Gauss hypergeometric function.

b) For OTM VIX put options, when $0<K^2<V_{0}+\kappa $, 
\begin{align}\label{put:Kou:0}
a_{V,P}(K) := \lim_{T\to 0}
\frac{P_V(K,T)}{T} =0.
\end{align}
\end{corollary}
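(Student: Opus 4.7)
The starting point is Theorem~\ref{thm:OTM} applied to the Kou-type model with $\eta(x)\equiv 1$. Writing the three jump contributions explicitly, I would simplify:
\begin{align}
a_{V,C}(K)
&=\lambda^{S}\bigl(\sqrt{V_{0}+\kappa}-K\bigr)^{+}
+\lambda^{C}\!\!\iint_{\mathbb{R}^{2}}\!\!\bigl(\sqrt{V_{0}e^{y}+\kappa}-K\bigr)^{+}P^{C}(dx,dy)\nonumber\\
&\qquad+\lambda^{V}\!\!\int_{\mathbb{R}}\!\bigl(\sqrt{V_{0}e^{x}+\kappa}-K\bigr)^{+}P^{V}(dx).\nonumber
\end{align}
Under the OTM call condition $V_{0}+\kappa<K^{2}$, the $\lambda^{S}$ term vanishes outright. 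For the common-jumps term, I would integrate out the $x$ variable in the Kou-type joint density \eqref{pCEraker}-analogue; since the integrand depends only on $y$, this leaves the marginal $Y_{1}^{C,V}\sim\mathrm{Exp}(\eta_{C,V})$, exactly as in the Eraker case. The $\lambda^{V}$ term is left in the abstract form $\mathcal{I}_{C,V}(K)$ because the corollary makes no structural assumption on the idiosyncratic $V$-jumps.

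\textbf{Evaluating the common-jumps integral.} The positivity constraint $\sqrt{V_{0}e^{y}+\kappa}>K$ is equivalent to $y>y_{0}:=\log\frac{K^{2}-\kappa}{V_{0}}$ (well-defined since $K^{2}>\kappa$ because $K^{2}>V_{0}+\kappa\geq\kappa$). After the substitution $u=y-y_{0}$, the common-jumps contribution becomes
\begin{equation}
\lambda^{C}e^{-\eta_{C,V}y_{0}}\!\int_{0}^{\infty}\!\bigl(\sqrt{V_{0}e^{y_{0}}e^{u}+\kappa}-K\bigr)\eta_{C,V}e^{-\eta_{C,V}u}\,du,\nonumber
\end{equation}
and the $-K$ part integrates trivially against the exponential density. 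The remaining integral $\int_{0}^{\infty}\sqrt{V_{0}e^{y_{0}}e^{u}+\kappa}\,e^{-\eta_{C,V}u}du$ is precisely the quantity I would identify with $I_{1}(\kappa,V_{0}e^{y_{0}},\eta_{C,V})$; writing $\sqrt{b e^{u}+a}=\sqrt{b}\,e^{u/2}(1+(a/b)e^{-u})^{1/2}$, expanding and integrating term by term against $e^{-\eta u}$ yields a series that matches the Gauss hypergeometric ${}_{2}F_{1}$ representation stated in the corollary. I expect this hypergeometric identification to be the only genuinely computational step, and it is otherwise routine.

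\textbf{The put case.} For part (b), with $0<K^{2}<V_{0}+\kappa$, I would show each of the three integrands vanishes almost surely under its respective measure. The $\lambda^{S}$ integrand is the constant $(K-\sqrt{V_{0}+\kappa})^{+}=0$ directly from the strike condition. For both the $\lambda^{C}$ and $\lambda^{V}$ integrands the key observation is that the $V$-jump variable is nonnegative: in the Kou-type specification $Y_{1}^{C,V}\sim\mathrm{Exp}(\eta_{C,V})\geq 0$, and the idiosyncratic $V$-jumps are assumed positive in line with the empirical rationale from Section~\ref{sec:models}. Hence $V_{0}e^{y}\geq V_{0}$ on the support, so $\sqrt{V_{0}e^{y}+\kappa}\geq\sqrt{V_{0}+\kappa}>K$ and the put payoff is identically zero on the support of each jump distribution. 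Plugging back into Theorem~\ref{thm:OTM} gives $a_{V,P}(K)=0$.

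\textbf{Main obstacle.} The only non-mechanical point is recognizing the single-integral representation of $\int_{0}^{\infty}\sqrt{a+be^{u}}\,e^{-\eta u}du$ as the Gauss hypergeometric expression for $I_{1}(a,b,\eta)$; once that identification is made, both parts of the corollary follow directly from Theorem~\ref{thm:OTM} with no further estimation.
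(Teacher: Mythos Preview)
Your proposal is correct and follows essentially the same approach as the paper: the paper's own proof simply states that it is identical to that of Corollary~\ref{cor:Eraker:VIX} because the marginal distribution of the common $V$-jump is the same exponential $\mathrm{Exp}(\eta_{C,V})$ in both models, which is exactly the observation you make when you integrate out the $x$ variable. Your handling of the $\lambda^{S}$ term, the hypergeometric identification for $I_{1}$, and the put case via nonnegativity of the $V$-jumps all match the paper's treatment (the paper, like you, tacitly relies on the positivity of idiosyncratic $V$-jumps to conclude $a_{V,P}(K)=0$).
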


In Corollary~\ref{cor:Kou:VIX}, the $\mathcal{I}_{C,V}(K)$ term depends on the choice for the distribution of the idiosyncratic $V$ jumps, which is independent of that for the common jumps.
The short maturity asymptotics for OTM VIX put options in the Kou-type model vanishes (see \eqref{put:Kou:0}), just as in the Eraker model, since the common jumps in $V$ are strictly positive. 
The leading order asymptotics to the VIX put options comes from the diffusive contribution to the dynamics of $V$.

\subsection{Folded normal model}\label{sec:folded:model}

Another alternative model is to assume that $Y_1^{C,V}$ follows a 
folded normal distribution, which is positive definite
\begin{equation}
Y_1^{C,V} = \sigma_{C,V} |Z_{C,V}|\,,\quad Z_{C,V} \sim \mathcal{N}(0,1) \,.
\end{equation}
Conditional on  $Y_1^{C,V}$, the underlying $S$ has a normally distributed jump similar to the Eraker model
\begin{align}
Y_1^{C,S} = \mu^{S} + \rho_J Y_1^{C,V} + \sigma_{C,S} Z_{C,S}\,,\quad Z_{C,S}\sim\mathcal{N}(0,1).
\end{align}

The joint distribution of the common jumps is
represented by the probability density function:
\begin{align}
p^{C}(x,y) =  \frac{2}{\sqrt{2\pi } \sigma_{C,V}} e^{-\frac{y^2}{2\sigma_{C,V}^2}}\cdot
\frac{1}{\sqrt{2\pi} \sigma_{C,S}} e^{-\frac{1}{2\sigma_{C,S}^2} (x - \mu^{S} - \rho_J y)^2}\,.
\end{align}

Denote $f_{\mathrm{FN}}(y) := \int_{\mathbb{R}} p^{C}(x,y) dx$ the jump size probability density function for the $V$ common jump. This is given by
\begin{equation}\label{fFN}
f_{\mathrm{FN}}(y) = \frac{2}{\sqrt{2\pi} \sigma_{C,V}} e^{-\frac{y^2}{2\sigma_{C,V}^2}} \,.
\end{equation}

\textit{Compensators.} Since the testing focuses only on the common jumps, 
we give only the compensator for $Y_1^{C,S}$: 
\begin{align}
\mu^{C,S} = \mathbb{E}\left[e^{Y_1^{C,S}}\right]-1 = 
2e^{\mu^{S} + \frac12\sigma_{C,S}^2\rho_J^2} \Phi( \rho_J \sigma_{C,V}) - 1\,.
\end{align}
We give also the expectation of the common jump
\begin{equation}
m^{C,S} = \mathbb{E}\left[Y_1^{C,S}\right] = \mu^{S} + \sqrt{\frac{2}{\pi}}\rho_J \sigma_{C,V} \,.
\end{equation}
As in the Eraker model, we take $\eta(x)\equiv 1$.

\subsubsection{Asymptotic predictions for European options}

The short-maturity asymptotic predictions for European options are obtained from 
Theorem~\ref{thm:OTM} and are parameterized in terms of the functions $a_{E,C}(K)$ 
and $a_{E,P}(K)$ defined as 
\begin{equation}
a_{E,C}(K) := \lim_{T\to 0} \frac{C_E(K,T)}{T}\,,\quad
a_{E,P}(K) := \lim_{T\to 0} \frac{P_E(K,T)}{T} \,.
\end{equation}
Denote $k=\log(K/S_0)$ the option log-moneyness. 
We have the following short-maturity asymptotics for  OTM European options for the folded normal model.

\begin{corollary}\label{cor:FN:Eur}
Assume that the asset price follows a local-stochastic volatility model \eqref{LSvol:jumps} with folded normal common jumps distribution. Then the leading-order short maturity asymptotics 
for the OTM European options are given as follows.

a) 
For OTM call options with $K > S_0$, we have
\begin{align}
a_{E,C}(K) = \lambda^{C} \int_0^\infty f_{\mathrm{FN}}(y) c_{\mathrm{BS}}\left(K,S_0e^{\mu^{S} + \rho_J y}, \sigma_{C,S}\right) dy+\lambda^{S}\mathcal{I}_{C,S}(K)\,,
\end{align}
where $\mathcal{I}_{C,S}(K):=\int_{\mathbb{R}}\left(S_{0}e^{x}-K\right)^{+}P^{S}(dx)$ and
\begin{equation*}
c_{\mathrm{BS}}(K,F,v) := F \Phi\left(-\frac{1}{v} \log(K/F)+\frac12 v\right) - 
K \Phi\left(-\frac{1}{v} \log(K/F) - \frac12 v\right).    
\end{equation*}

b)
For OTM put options with $0<K<S_0$ we have
\begin{align}
a_{E,P}(K) = \lambda^{C} \int_0^\infty f_{\mathrm{FN}}(y) p_{\mathrm{BS}}\left(K,S_0e^{\mu^{S} + \rho_J y}, \sigma_{C,S}\right) dy+\lambda^{S}\mathcal{I}_{P,S}(K)\,,
\end{align}
where $\mathcal{I}_{P,S}(K):=\int_{\mathbb{R}}\left(K-S_{0}e^{x}\right)^{+}P^{S}(dx)$ and
\begin{align*}
p_{\mathrm{BS}}(K,F,v) := - F \Phi\left(\frac{1}{v} \log(K/F)-\frac12 v\right) + 
K \Phi\left(\frac{1}{v} \log(K/F) + \frac12 v\right).    
\end{align*}
\end{corollary}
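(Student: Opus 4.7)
The plan is to derive both parts of Corollary~\ref{cor:FN:Eur} as a direct specialization of Theorem~\ref{thm:OTM:European} to the explicit joint density $p^{C}(x,y)$ of the folded normal model, using a Fubini decomposition followed by a standard log-normal expectation.

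First, I would invoke Theorem~\ref{thm:OTM:European} to write
\begin{equation*}
a_{E,C}(K) = \lambda^{S}\mathcal{I}_{C,S}(K) + \lambda^{C}\int_{\mathbb{R}}\int_{\mathbb{R}}(S_{0}e^{x}-K)^{+}p^{C}(x,y)\,dx\,dy,
\end{equation*}
and analogously for $a_{E,P}(K)$ with the put payoff. The idiosyncratic contributions are $\mathcal{I}_{C,S}(K)$ and $\mathcal{I}_{P,S}(K)$ by their very definition, so no further work is needed on those pieces.

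Second, I would exploit the product structure of the joint density:
\begin{equation*}
p^{C}(x,y) = f_{\mathrm{FN}}(y)\cdot\frac{1}{\sqrt{2\pi}\,\sigma_{C,S}}\exp\!\left(-\frac{(x-\mu^{S}-\rho_{J}y)^{2}}{2\sigma_{C,S}^{2}}\right),\qquad y>0,
\end{equation*}
where $f_{\mathrm{FN}}$ is as in \eqref{fFN}, and apply Fubini to rewrite the common-jump integral as $\int_{0}^{\infty} f_{\mathrm{FN}}(y)\,h(y)\,dy$, with
\begin{equation*}
h(y):=\int_{\mathbb{R}}(S_{0}e^{x}-K)^{+}\,\frac{1}{\sqrt{2\pi}\,\sigma_{C,S}}\exp\!\left(-\frac{(x-\mu^{S}-\rho_{J}y)^{2}}{2\sigma_{C,S}^{2}}\right)dx,
\end{equation*}
and its put analogue. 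Fubini applies because the payoff grows at most exponentially in $x$ and the Gaussian factor in $x$ has all exponential moments, while $f_{\mathrm{FN}}$ has all polynomial moments and is integrable on $(0,\infty)$.

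Third, I would identify $h(y)$ as the standard Black--Scholes call (respectively put) price. Since the conditional law of $Y_{1}^{C,S}$ given $Y_{1}^{C,V}=y$ is $\mathcal{N}(\mu^{S}+\rho_{J}y,\sigma_{C,S}^{2})$, the inner integral is exactly the undiscounted expectation of a log-normal call, which in the paper's convention equals $c_{\mathrm{BS}}(K,S_{0}e^{\mu^{S}+\rho_{J}y},\sigma_{C,S})$ (respectively $p_{\mathrm{BS}}$ for puts); the put case follows from the analogous Gaussian integral or, equivalently, by put--call parity applied to $h$. Substituting this back into the Fubini decomposition produces the two displayed formulas.

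There is no substantive obstacle: once Theorem~\ref{thm:OTM:European} is in hand, the entire argument is a routine Fubini reduction followed by the classical log-normal call/put formula. The only items that require minor bookkeeping are the integrability check enabling Fubini (trivial from the Gaussian tails) and matching conventions for the $c_{\mathrm{BS}}, p_{\mathrm{BS}}$ functions, which is done exactly as in the corresponding Eraker corollary (Corollary~\ref{cor:Eraker:European}).
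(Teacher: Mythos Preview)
Your proposal is correct and follows essentially the same approach as the paper: invoke Theorem~\ref{thm:OTM:European}, then use the product structure of $p^{C}(x,y)=f_{\mathrm{FN}}(y)\cdot\varphi_{\sigma_{C,S}}(x-\mu^{S}-\rho_{J}y)$ to reduce the double integral via Fubini to $\int_0^\infty f_{\mathrm{FN}}(y)\,c_{\mathrm{BS}}(\cdots)\,dy$ (respectively $p_{\mathrm{BS}}$), exactly as in the Eraker case. The paper simply states that the proof is analogous to that of Corollary~\ref{cor:Eraker:European}, which is precisely the computation you outline.
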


In Corollary~\ref{cor:FN:Eur}, the $\mathcal{I}_{C,S}(K),\mathcal{I}_{P,S}(K)$ terms depend on the distributional property assumed for the idiosyncratic jumps in $S$. Their distribution is independent of that of the common jumps so, as discussed, we omit them for simplicity.

\subsubsection{Asymptotic predictions for VIX options}

We have the following short-maturity asymptotics for  OTM VIX options for the folded normal model.

\begin{corollary}
\label{cor:FN:VIX}
Assume that the asset follows a jump diffusion model \eqref{LSvol:jumps} with 
folded normal jump dynamics.
Then the leading-order short maturity asymptotics 
for the 
OTM VIX options are given as follows.

a) For OTM VIX call options, when $V_{0}+\kappa<K^{2}$, we have
\begin{align}
a_{V,C}(K) := \lim_{T\to 0} \frac{C_{V}(K,T)}{T}
= \lambda^{C}  
\int_0^\infty \left(\sqrt{V_0 e^y + \kappa} - K\right)^+ f_{\mathrm{FN}}(y) dy+\lambda^{V}\mathcal{I}_{C,V}(K) \,,
\end{align}
where $\mathcal{I}_{C,V}(K):=\int_{\mathbb{R}}\left(\sqrt{V_{0}e^{x}+\kappa}-K\right)^{+}P^{V}(dx)$.

b) For OTM VIX put options, when $0<K^2<V_{0}+\kappa $, we have
\begin{align}\label{put:FN:0}
a_{V,P}(K) := \lim_{T\to 0}
\frac{P_V(K,T)}{T} =0.
\end{align}
\end{corollary}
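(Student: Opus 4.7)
The plan is to derive Corollary~\ref{cor:FN:VIX} as a direct specialization of Theorem~\ref{thm:OTM} to the folded normal model. The two simplifications that do all the work are $\eta(x)\equiv 1$ and the product form of the joint common-jump density $p^{C}(x,y)$, whose $y$-marginal is the folded normal density $f_{\mathrm{FN}}$ in \eqref{fFN}, supported on $[0,\infty)$.

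For part (a), I would substitute $\eta\equiv 1$ into Theorem~\ref{thm:OTM}(i) and treat each of the three jump contributions in turn. The $\lambda^{S}$ term reduces to $(\sqrt{V_0+\kappa}-K)^+\int_{\mathbb{R}} P^{S}(dx)$, which vanishes since $V_0+\kappa<K^2$. The $\lambda^{C}$ double integral has an integrand depending only on $y$, so Fubini collapses it to $\lambda^{C}\int_{\mathbb{R}}(\sqrt{V_0 e^y+\kappa}-K)^+ f_{\mathrm{FN}}(y)\,dy$, and the support of $f_{\mathrm{FN}}$ restricts the range to $y\geq 0$, matching the stated integral. The $\lambda^{V}$ piece is already $\mathcal{I}_{C,V}(K)$ by definition. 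Summing these three contributions yields the claimed formula.

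For part (b), I would apply Theorem~\ref{thm:OTM}(ii) in the same spirit with $\eta\equiv 1$. The $\lambda^{S}$ term vanishes because $(K-\sqrt{V_0+\kappa})^+=0$ under the put-OTM condition $K^2<V_0+\kappa$. The $\lambda^{C}$ term, after integrating out $x$, becomes $\lambda^{C}\int_0^\infty(K-\sqrt{V_0 e^y+\kappa})^+ f_{\mathrm{FN}}(y)\,dy$; on the support $y\geq 0$ one has $V_0 e^y+\kappa\geq V_0+\kappa>K^2$, so the integrand is identically zero. The $\lambda^{V}$ term vanishes for the analogous reason provided the idiosyncratic $V$ jumps are non-negative---the convention mirroring the Eraker choice $Y_1^{V}\sim\mathrm{Exp}(\eta_V)$ discussed in the empirical motivation of Section~\ref{sec:models}---since then $V_0 e^x+\kappa\geq V_0+\kappa>K^2$ on the support of $P^{V}$.

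There is no substantive technical obstacle here; the only step meriting a moment's care is the Fubini reduction in part (a), which is justified by the non-negativity of the integrand and the integrability of $f_{\mathrm{FN}}$. The content of the corollary is essentially bookkeeping: the folded normal structure dictates the $[0,\infty)$ support of the remaining $y$-integral, and the one-sided positivity of the $V$ jumps forces every put-side contribution to vanish.
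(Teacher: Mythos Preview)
Your proposal is correct and follows essentially the same route as the paper, which simply notes that the proof is analogous to that of Corollary~\ref{cor:Eraker:VIX}: apply Theorem~\ref{thm:OTM} with $\eta\equiv 1$, integrate out $x$ in the common-jump term, and use the positivity of the $V$ jumps to kill the put side. Your explicit remark that the $\lambda^{V}$ put contribution vanishes only under the standing assumption of non-negative idiosyncratic $V$ jumps (as in the Eraker choice $Y_1^V\sim\mathrm{Exp}(\eta_V)$) is a point the paper leaves implicit.
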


In Corollary~\ref{cor:FN:VIX}, the $\mathcal{I}_{C,V}(K)$ term depends on the choice for the distribution of the idiosyncratic $V$ jumps, which is independent of that for the common jumps.
The short maturity asymptotics for OTM VIX put options in the folded normal model vanishes (see \eqref{put:FN:0}), just as in the Eraker and Kou-type models, since the common jumps in $V$ are strictly positive. 
The leading order asymptotics to the OTM VIX put options comes from the diffusive contribution to the dynamics of $V$.

\section{Numerical Testing}
\label{sec:pred}

In Section~\ref{sec:models}, 
we obtained asymptotic predictions for the three models introduced in Section~\ref{sec:models}, applying the theoretical results 
obtained in Section~\ref{sec:main}.
The short-maturity asymptotics for both OTM VIX and European options have the general form
\begin{equation}
\lim_{T\to 0} \frac{C(K,T)}{T} = \lambda^{S} f_S(K) + \lambda^{C} f_C(K) + \lambda^{V} f_V(K)\,,
\end{equation}
where the three coefficients $f_{S,C,V}(K)$ are calculable.
The three terms correspond to idiosyncratic $S, V$ and common jumps, respectively. 

In this section, these predictions are tested by comparison with independent numerical option pricing obtained by Monte Carlo (MC) simulation of the models. 
The plan of testing is to compute the ratio $C(K,T)/T$ by MC simulation and compare with the analytical asymptotic prediction of this paper. We expect that they agree, for sufficiently small option maturity $T$. In order to streamline the testing and focus on the common jumps which are the novel feature of the model, we focus on the contribution of the common jumps with intensity $\lambda^{C}$.
Thus, we assume $\lambda^{S}=\lambda^{V}=0$. We also neglect the contribution of local volatility and take $\eta(x)\equiv 1$.

\subsection{Eraker model}

We start by presenting the test results for the Eraker model. 

\subsubsection{Parameters}


For simplicity, we take $S_0=1, r=q=0$. 
The numerical values of the model parameters for the common jumps sector are given in Table~\ref{tab:1}, following the paper of Lian and Zhu (2013) \cite{Lian2013}. The conversion to our notation is shown in the second column. The numerical value of the $\sigma_{C,S}$ parameter was taken larger than in \cite{Lian2013}, to amplify the common jumps effect for better test sensitivity.
Since we focus our testing on the contribution of the common jumps, the parameters of the idiosyncratic jumps distributions $\alpha_S,\sigma_S,\eta_V$ will not be required.

\begin{table}[h!]
  \centering
  \caption{The common jumps parameters in the Eraker model. The jumps model is $Y_{C,V} = \mathrm{Exp}(\mu^{V})$ and $Y_{C,S}|Y_{C,V} = \mu^{S} + \rho_J Y_{C,V} + \sigma_S \mathcal{N}(0,1)$.
  Note that $\mu^{V}$ defined by Lian and Zhu (2013) corresponds to our $1/\eta_{C,V}$. }
    \begin{tabular}{|c|c|}
    \hline
LZ parameters  & Our notation \\
    \hline\hline
$\lambda = 0.47$ & $\lambda^{C} = 0.47$ \\
\hline
$\mu^{V}  = 0.05$ & $\eta_{C,V} = 20.0 $ \\
$\mu^{S} = -0.0869$ & $\mu^{S} = -0.0869$ \\
$\rho_J = -0.38$ & $\rho_J = -0.38$ \\
$\sigma_S = 0.0001$ & $\sigma_{C,S}=0.1$ \\
\hline
$\sqrt{V_0} = 0.087$ & $V_0 = 0.0076$ \\
    \hline
    \end{tabular}%
  \label{tab:1}%
\end{table}%

\subsubsection{European options}

We present tests for the short-maturity asymptotic predictions for European options in the Eraker model provided in Corollary~\ref{cor:Eraker:European} by comparing with numerical pricing of the options.

Figure~\ref{Fig:E} shows the prices of European options in the Eraker model with maturity $T=0.1$ obtained by MC simulation of the model (red dots), rescaled as $1000 \frac{C_E(K,T)}{\lambda^{C} T}$ and $1000 \frac{P_E(K,T)}{\lambda^{C} T}$. 
The results for $k>1$ correspond to call options and for $k<1$ to put options.
The MC simulation used $10^5$ MC samples and the model was implemented using an Euler scheme for $\log S_t$ with $n=100$ time steps. 

The black curves show the asymptotic results for $a_{E,C}(K)$ and $a_{E,P}(K)$ from Corollary~\ref{cor:Eraker:European}.
The agreement of the asymptotic result in Corollary~\ref{cor:Eraker:European} with the MC simulation is reasonably good although we note larger differences for the call options. 
The disagreement is larger at the ATM point $K=S_0$ but this is explained by the large diffusive contributions which are formally sub-leading in the small maturity expansion $O(T^{1/2})$. For this reason a meaningful test should use only OTM strikes $K\neq S_0$ where the diffusive contribution is numerically negligible and the jumps contribution dominates.

In order to study the reason for the observed disagreement with the asymptotic result we consider also options with shorter maturity $T=0.01$.
Table~\ref{tab:ET0p01} shows the MC simulation
results for options with $T=0.01$. The agreement has improved, which shows that the reason for the differences observed is related to higher order contributions in the small maturity expansion. For most strikes away from the ATM point, the MC simulation agrees (within MC errors) with the asymptotic prediction in Corollary~\ref{cor:Eraker:European}.

\begin{figure}
\includegraphics[scale=0.65]{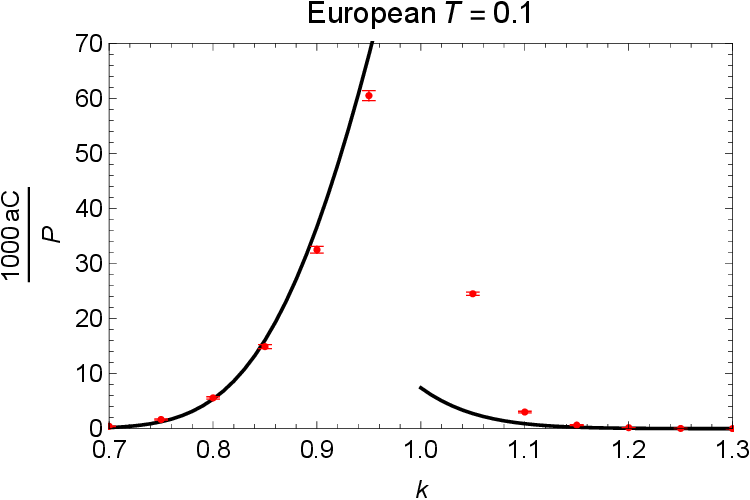}
\caption{Plots of the asymptotic results for European calls and puts with maturity $T=0.1$ in the Eraker model.
The black curves show the asymptotic prediction for
$1000 \frac{a_{E,C}(K)}{\lambda^{C}}$ and
$1000 \frac{a_{E,P}(K)}{\lambda^{C}}$. The theoretical results are compared with 
the MC simulation (red dots) described in text.}
\label{Fig:E}
\end{figure}

 \begin{table}[h!]
  \centering
  \caption{Numerical tests for European option pricing with maturity $T=0.01$ under the Eraker model. The MC simulation uses 100k paths. }
    \begin{tabular}{|c|c|c||c|c|c|}
    \hline
$K$ & $1000a_{E,C}/\lambda^{C}$ & MC simulation 
       & $K$ & $1000a_{E,P}/\lambda^{C}$ & MC simulation \\
    \hline\hline
1.0 & 7.409 & $785.2\pm 3.8$ 
      & 1.0 & 107.734 & $789.9 \pm 6.66$ \\
1.05 & 2.741 & $2.97\pm 0.75$ 
     & 0.95 & 67.8864 & $64.65\pm 4.23$ \\
1.1 & 0.897 & $1.14 \pm 0.38$
      & 0.9 & 36.6656 & $34.75\pm 2.74$  \\
1.15 & 0.262 & 0 
      & 0.85 & 16.0734 & $14.13 \pm 1.59$  \\
1.2 & 0.069 & 0 
      & 0.8 & 5.3573 & $4.517 \pm 0.816$ \\
1.25 & 0.017 & 0 
      & 0.75 & 1.2573 & $1.084 \pm 0.385$ \\
1.3 & 0.0004 & 0 
      & 0.7 & 0.1902 & $0.25 \pm 0.18$ \\
    \hline
    \end{tabular}%
  \label{tab:ET0p01}%
\end{table}%

\subsubsection{VIX options} 

We illustrate the short-maturity asymptotic prediction for VIX options in the Eraker model that is 
provided in Corollary~\ref{cor:Eraker:VIX}.

As discussed, for the purpose of numerical testing in this section we keep only the common jumps contribution and thus we take $\lambda^S=0$.
Using the numerical 
values of the parameters in Table~\ref{tab:1} we get
$\kappa = \lambda^C \cdot 0.020169 = 0.0095$.
We obtain for the strike of an ATM VIX option 
$K_{\mathrm{ATM}} = \sqrt{V_0 + \kappa} = 0.1308$.

The short maturity asymptotics of VIX options is described by the parameters $a_{V,C}(K)$ which are predicted by Corollary~\ref{cor:Eraker:VIX}.
The numerical results for $a_{V,C}(K)$ are shown in the second column of Table~\ref{tab:V} 
for several values of the strike ratio $k=K/K_{\mathrm{ATM}} > 1$ corresponding to OTM VIX call options.

\begin{table}[h!]
\centering
\caption{Short maturity predictions for the VIX call options under the Eraker model. The corresponding result for VIX put options vanishes, as explained in text. The last column shows the result of a simulation for VIX options with maturity $T=0.1$.
Model parameters as in Table~\ref{tab:1} and $\mu^{V}=0, \sigma_V=0.01$.}
\begin{tabular}{|c|c||c|}
\hline
$k=K/K_{\mathrm{ATM}}$ & $1000a_{V,C}/\lambda^{C}$ & $1000 C_V^{\mathrm{MC}}(K,T)/(\lambda^{C} T)$ \\
    \hline\hline
1.00 & 1.51125 & $2.193 \pm 0.030$ \\
1.02 & 0.28352  & $0.266\pm 0.013$ \\
1.04 & 0.05901 & $0.050\pm 0.006$  \\
1.06 &  0.01345 & $0.008\pm 0.003$ \\
1.08 & 0.00332 & $0.002 \pm 0.002$ \\
1.10 & 0.00088 & $0.001 \pm 0.001$ \\
1.12 & 0.00025 & $0.001\pm 0.001$ \\
    \hline
    \end{tabular}%
  \label{tab:V}%
\end{table}%

We test these asymptotic prediction in Corollary~\ref{cor:Eraker:VIX} by MC simulation of VIX option prices. 
For this simulation the VIX index $\mathrm{VIX}_T$ is approximated by
$\sqrt{V_T + \kappa}$, which corresponds to the $\tau \to 0$ limit. 
The same MC parameters are used as for the European options tests.
We take $\sigma_V=0.01$ to be small in order to test the model in a regime where the dynamics of $V$ is jump dominates.

As a first test of the simulation, we compute the
VIX forward at maturity $T=0.1$
\begin{equation}
F_V(T) = \mathbb{E}[\mathrm{VIX}_T] = 0.13046\pm 0.00002,
\end{equation}
This is close to the VIX ATM strike $K_{\mathrm{ATM}} = 0.1308$,
as expected in the short-maturity limit.

VIX call option prices obtained by simulation are used to compute $\frac{C_V(K,T)}{\lambda^{C} T}$. The results are shown in the last column of Table~\ref{tab:V}, and in Figure~\ref{Fig:aV}. The results are consistent with the asymptotic prediction for the Eraker model in Corollary~\ref{cor:Eraker:VIX}. 
The corresponding values for VIX puts (not shown) are very small and consistent with zero as predicted by Corollary~\ref{cor:Eraker:VIX}. 



\begin{figure}
\centering
\includegraphics[scale=1.0]{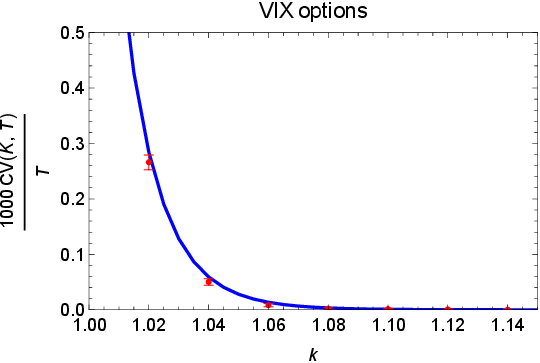}
\caption{Test for the asymptotic results for VIX calls under the Eraker model. 
The blue curve shows the asymptotic result for 
$1000 \frac{a_{V,C}(K)}{\lambda^{C}}$ and the red dots the result of an MC simulation.
The test uses VIX options with maturity $T=0.1$ and $\sigma_V=0.001$ under the Eraker model.}
\label{Fig:aV}
\end{figure}

\subsection{Kou-type model}

We present in this section the testing for the
alternative model which replaces the normal assumption for $Y_1^{C,S}|Y_1^{C,V}$ with a doubly-exponential distribution.
The asymptotic predictions for this model for VIX and European options were given in Corollary~\ref{cor:Kou:European} and Corollary~\ref{cor:Kou:VIX}.
As mentioned, the double-exponential jumps are used for the common jumps of $S,V$, 
and not for modeling the idiosyncratic $S$ jumps as in the classical application in \cite{kou2002jump}.

In this model, the common $V$ jumps are 
exponential as in the Eraker model
\begin{equation}
Y_1^{C,V} \sim\mathrm{Exp}(\eta_{C,V}).
\end{equation}
Conditional on  $Y_1^{C,V}$, the underlying $S$ has a double-exponential jump
\begin{align}
Y_1^{C,S} =
\begin{cases}
\mu^{S} + \rho_J Y_1^{C,V} + X_1 & \text{ with probability } \alpha, \\
\mu^{S} + \rho_J Y_1^{C,V} - X_1 & \text{ with probability } 1- \alpha.
\end{cases}
\end{align}
where $X_1 \sim\mathrm{Exp}(\eta_{C,S})$ is an exponentially distributed random variable. 
This model corresponds to the joint distribution of the common jumps
\begin{align}\label{pC}
p^{C}(x,y) &= \eta_{C,V} e^{-\eta_{C,V} y} \Big\{
\alpha \eta_{C,S} e^{-\eta_{C,S} (x-\mu^{S}-\rho_J y)} 1_{x-\mu^{S}-\rho_J y\geq 0}\\
& \qquad\qquad\qquad\qquad+ (1-\alpha)
\eta_{C,S} e^{-\eta_{C,S} (\mu^{S} + \rho_J y - x)} 1_{x-\mu^{S}-\rho_J y < 0}
\Big\}. \nonumber
\end{align}

\subsubsection{Testing.}
We present next the results of the testing for the Kou-type model. 
For simplicity, we take $\eta(\cdot)\equiv 1$ and $S_0=1, r=q=0$.
The model has the following common jump parameters:
\begin{align}
C \mbox{ jumps }: \lambda^{C},\, \eta_{C,V},\, \mu^{S},\, \rho_J,\, \eta_{C,S},\,
\alpha \,.
\end{align}

The numerical values of the model parameters used for testing are given in Table~\ref{tab:paramsK}.
The common jump intensity is chosen as in the Eraker model $\lambda^{C}=0.47$. We choose the asymmetry parameter $\alpha=0.5$ such that the jump size distribution is symmetric with respect to its mean, similar to the Eraker model.


\begin{table}[h!]
\centering
\caption{The common jumps and volatility process parameters for the Kou-type model. The intensity of the common jumps is taken as in the Eraker model $\lambda^{C}=0.47$.
The variance process parameters are $\mu^{V}=0,\sigma_V=0.01, V_0=0.0076$.}
    \begin{tabular}{|c||c|c|c|c||c|}
    \hline
parameter & $\eta_{C,V}$ & $\eta_{C,S}$ & $\mu^{S}$ & $\rho_{J}$ & $\alpha$ \\
    \hline
value & $20.0$ &  $10.0$ & $-0.11$ & $-0.38$ & $0.5$ \\
\hline
    \end{tabular}%
  \label{tab:paramsK}%
\end{table}%

The compensators and expectations introduced above 
that correspond to these parameters are $\mu^{C,S}=-0.112, m^{C,S}=-0.129$. 
This gives $\kappa = 2\lambda^{C}(\mu^{C,S}-m^{C,S}) = 0.016$.

\subsubsection{European options}
We illustrate the short-maturity asymptotic prediction for European options for the Kou-type model that is 
provided in Corollary~\ref{cor:Kou:European}.
The numerical results for 
$a_{E,C}(K)$ and $a_{E,P}(K)$ for OTM European options are shown in 
Table~\ref{tab:KT0p01} for several strikes, comparing with the results of an MC simulation for options with maturity $T=0.01$. 
The MC simulation used $10^5$ MC paths, and an Euler discretization with 100 time steps. The MC simulation results agree reasonably well within MC errors, with the asymptotic predictions for the Kou-type model in Corollary~\ref{cor:Kou:European}.

\begin{table}[htbp]
\centering
\caption{Numerical tests for European option pricing in the Kou-type model.
The options have maturity $T = 0.01$. 
The second and fifth columns show the asymptotic prediction, and the third and sixth columns shows the MC simulation results. 
The MC simulation uses 100k paths.}
\label{tab:kou_euro_t001_rhoNeg}
\renewcommand{\arraystretch}{1.2}
\setlength{\tabcolsep}{10pt}
\begin{tabular}{|c|c|c||c|c|c|}
\hline
$K$ & $1000a_{E,C}/\lambda^{C}$ & MC simulation  &
$K$ & $1000a_{E,P}/\lambda^{C}$ & MC simulation \\
\hline
1.05 & 10.0174 & 10.037 $\pm$ 1.099 &
0.95 & 86.6465 & 85.461 $\pm$ 2.487 \\
1.10 &  6.5906 &  6.646 $\pm$ 0.795 &
0.90 & 52.1012 & 51.362 $\pm$ 1.826 \\
1.15 &  4.4175 &  5.044 $\pm$ 0.805 &
0.85 & 28.1740 & 25.754 $\pm$ 1.264 \\
1.20 &  3.0118 &  2.554 $\pm$ 0.552 &
0.80 & 14.4798 & 13.738 $\pm$ 0.867 \\
1.25 &  2.0858 &  2.455 $\pm$ 0.858 &
0.75 & 7.1201 & 7.115 $\pm$ 0.632 \\
1.30 &  1.4654 &  1.076 $\pm$ 0.281 &
0.70 & 3.3335 & 3.543 $\pm$ 0.429 \\
1.35 &  1.0434 &  0.858 $\pm$ 0.266 &
0.65 & 1.4752 & 1.424 $\pm$ 0.232 \\
\hline
\end{tabular}
\label{tab:KT0p01}%
\end{table}

Figure~\ref{Fig:EK} shows the same results in graphical form, comparing the asymptotic and MC simulation results.

\begin{figure}
\includegraphics[scale=0.45]{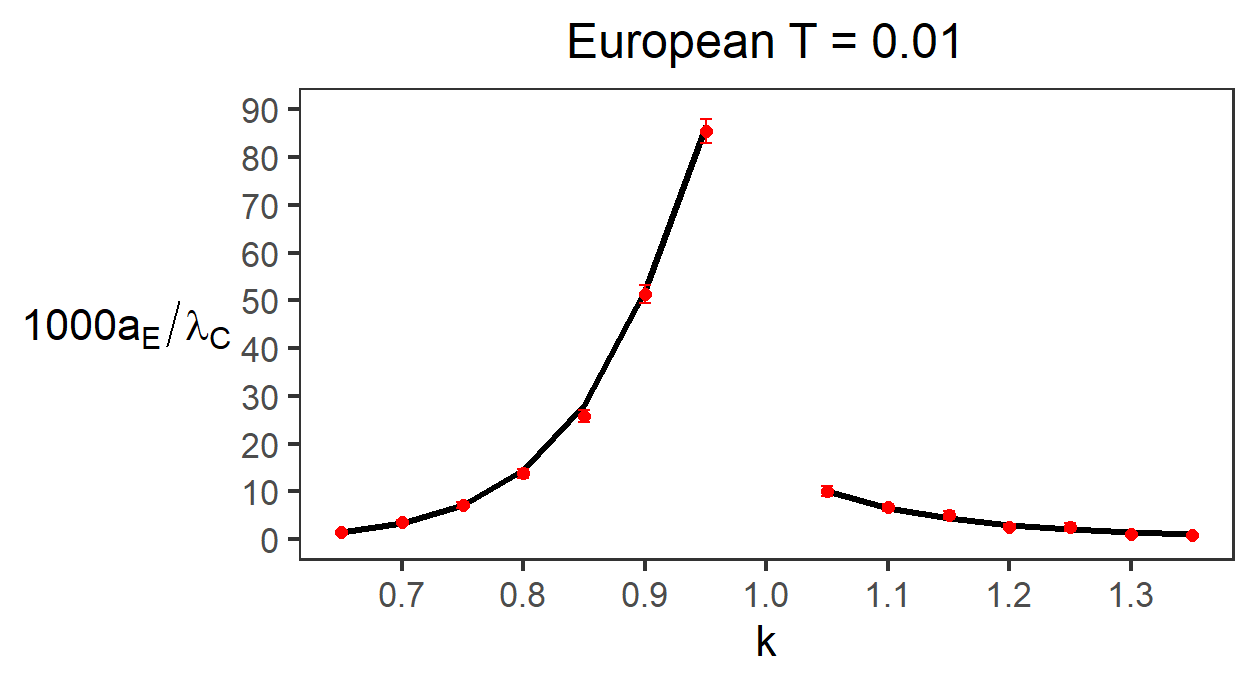}
\includegraphics[scale=0.45]{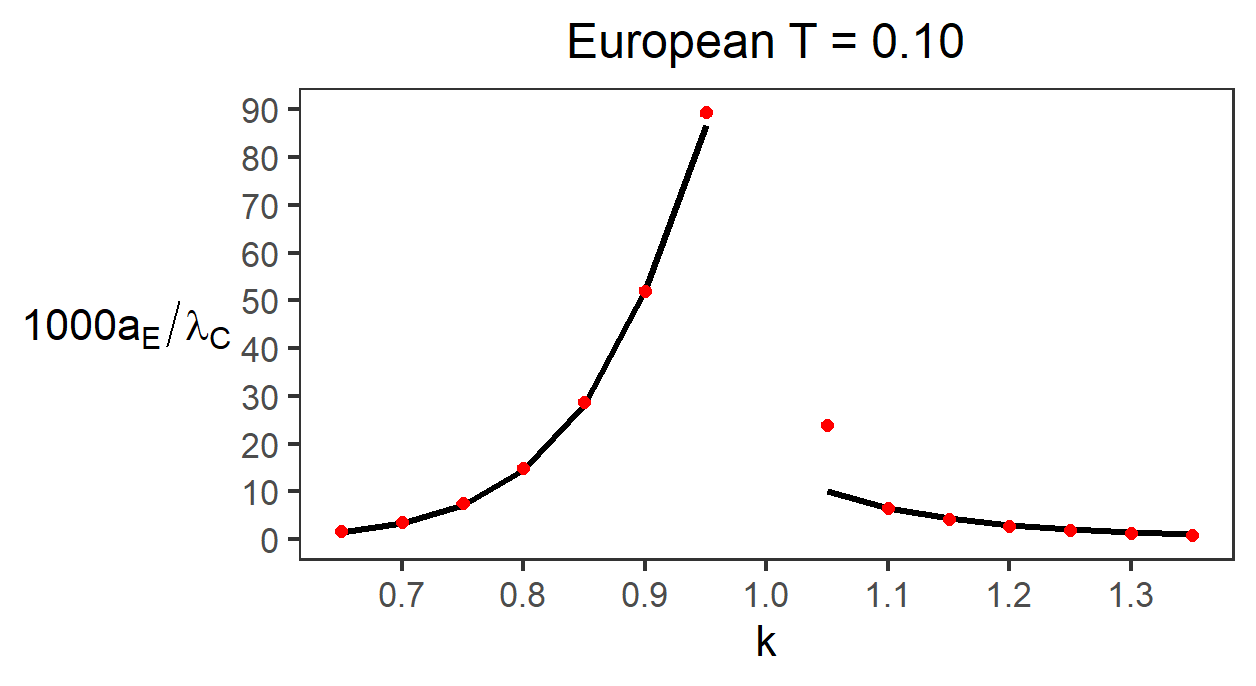}
\caption{Plots of the asymptotic results for European calls and puts with maturity $T=0.01$ (left) and  $T=0.1$ (right) in the Kou-type model.
The black curves show the asymptotic prediction for
$1000 \frac{a_{E,C}(K)}{\lambda^{C}}$ and
$1000 \frac{a_{E,P}(K)}{\lambda^{C}}$. The theoretical results are compared with 
the MC simulation (red dots) described in text.}
\label{Fig:EK}
\end{figure}


\subsubsection{VIX options}

We illustrate the short-maturity asymptotic prediction for VIX options for the Kou-type model that is 
provided in Corollary~\ref{cor:Kou:VIX}.

Using the numerical 
values of the parameters in Table~\ref{tab:paramsK} we get
$\kappa = 0.034 \lambda^{C} = 0.016$.
As a result, we obtain for the strike of an ATM VIX option 
$K_{\mathrm{VIX,ATM}} = \sqrt{V_0 + \kappa} = 0.1536$.
Numerical results for $a_{V,C}(K)$ are shown in Table~\ref{tab:VK} 
for several values of the moneyness $k=K/K_{\mathrm{ATM}} > 1$ corresponding to OTM VIX call options. We compare them with the results of an MC simulation.

  \begin{table}[h!]
  \centering
  \caption{Short maturity predictions for the VIX call options under the Kou-type model. 
  The last column shows the results of simulations 
  for VIX options with maturity $T=0.01$ (third column) and $T=0.1$ (fourth column) under the Kou-type model.
  Model parameters as in Table~\ref{tab:paramsK} and $\mu^{V}=0, \sigma_V=0.01$. }
    \begin{tabular}{|c|c||c|c|}
    \hline
$K/K_{\mathrm{ATM}}$ & $1000a_{V,C}/\lambda^{C}$ 
    & $1000 C_V^{\mathrm{MC}}(K,T)/(\lambda^{C} T)$ 
    & $1000 C_V^{\mathrm{MC}}(K,T)/(\lambda^{C} T)$\\
    & & $T=0.01$ & $T=0.1$ \\
    \hline\hline
1.00 & 1.2908 &  2.845 $\pm$ 0.039 & 1.422 $\pm$ 0.012 \\
1.02 & 0.1340  & 0.134 $\pm$ 0.013 & 0.140 $\pm$ 0.004 \\
1.04 & 0.0170 &  0.014 $\pm$ 0.004 & 0.016 $\pm$ 0.001 \\
1.06 &  0.0025 &  0.001 $\pm$ 0.001 & 0.002 $\pm$ 0.000 \\
1.08 & 0.0004 & 0.000 $\pm$ 0.000   &0.000 $\pm$ 0.000 \\
1.10 & 0.0001 & 0.000 $\pm$ 0.000 & 0.000 $\pm$ 0.000\\
    \hline
    \end{tabular}%
  \label{tab:VK}%
\end{table}%

We test the asymptotic prediction for the Kou-type model in Corollary~\ref{cor:Kou:VIX} by MC simulation of VIX option prices. 
For this simulation, the VIX index $\mathrm{VIX}_T$ was approximated using the small-$\tau$ approximation 
$\sqrt{V_T + \kappa}$; see Corollary~\ref{cor:VIXsmalltau}.
Taking $T=0.1$ and $N_{\mathrm{MC}}=10^5$ paths, the VIX forward is estimated as 
\begin{equation}
F_V(T) = \mathbb{E}[\mathrm{VIX}_T] = 0.13046\pm 0.00002\,,
\end{equation}
which is close to the VIX ATM strike $K_{\mathrm{ATM}} = 0.1308$.

VIX call option prices obtained by simulation are used to compute $\frac{C_V(K,T)}{\lambda^{C} T}$. The results are shown in the last column of Table~\ref{tab:VK}, and in Figure~\ref{Fig:aV:2}, and are consistent with the asymptotic predictions in Corollary~\ref{cor:Kou:VIX}. 
The corresponding values for VIX puts are very small and consistent with zero 
(not shown) which agrees with the prediction from the asymptotic results of Corollary~\ref{cor:Kou:VIX}.

\begin{figure}
\centering
\includegraphics[scale=0.45]{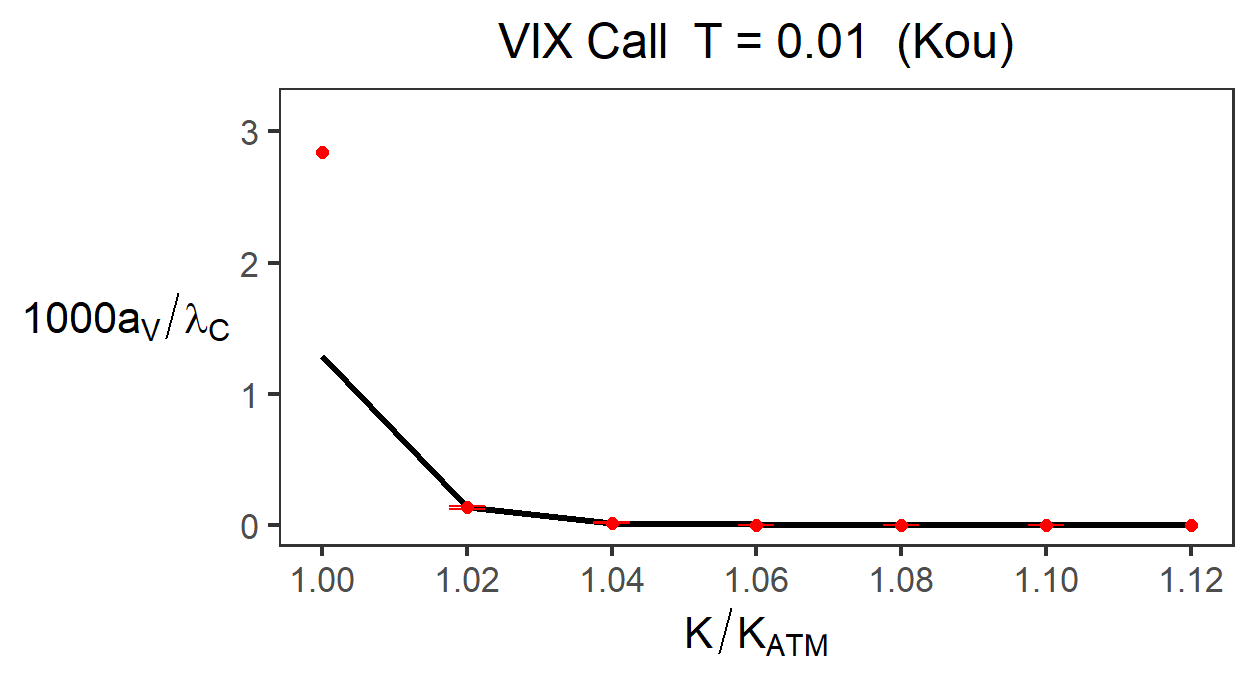}
\includegraphics[scale=0.45]{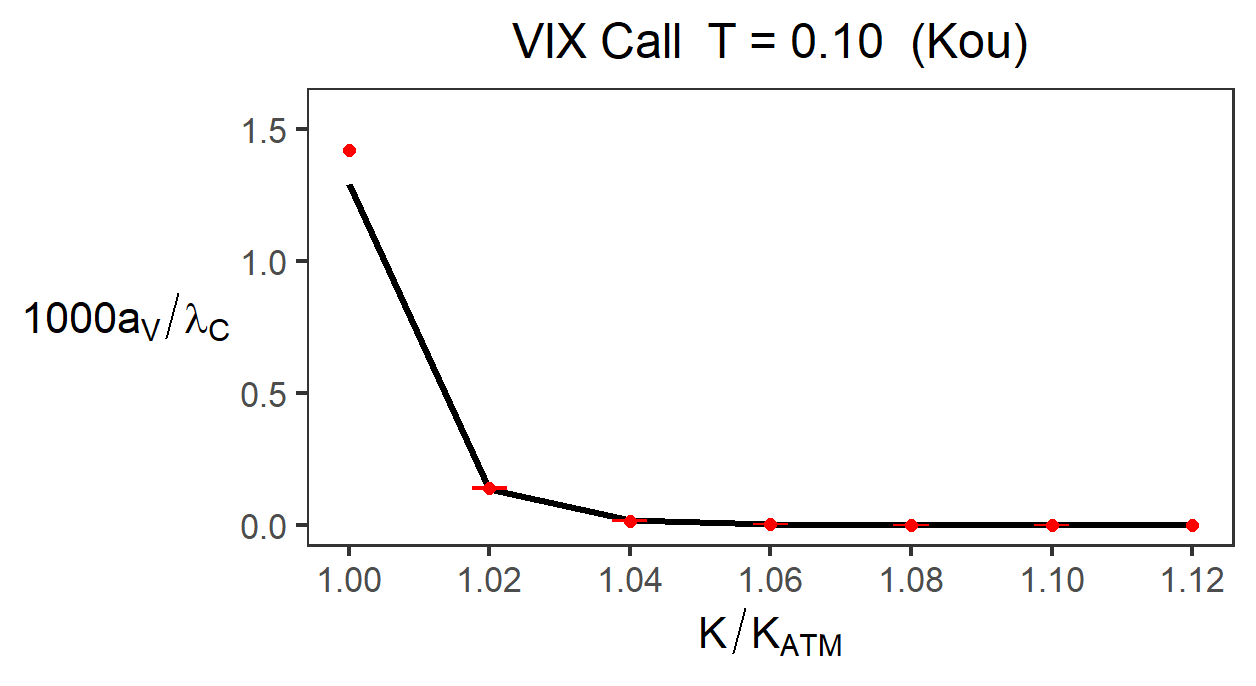}
\caption{Test for the asymptotic results for VIX calls in the Kou-type model. 
The black curve shows the asymptotic result for 
$1000 \frac{a_{V,C}(K)}{\lambda^{C}}$ and the red dots the result of an MC simulation.
The test uses VIX options with maturity $T=0.01$ (left) and $T=0.1$ (right).}
\label{Fig:aV:2}
\end{figure}


\subsection{Folded normal model}

\subsubsection{Parameters} 
The model has the following common jump parameters:
\begin{align}
C \mbox{ jumps }: \lambda^{C},\, \eta_{C,S},\, \sigma_{C,V},\, \mu^{S},\, \rho_J,\, \sigma_{C,S}\,.
\end{align}

We use the parameters of the folded normal model shown in Table~\ref{tab:paramsFN}. The parameter $\sigma_{C,V}$ was estimated such that the mean of the 
common jump in $V$ is the same as in the Eraker model. 
This gives the relation
\begin{equation}
\mathbb{E}\left[Y_1^{C,V}\right]  = \sqrt{\frac{2}{\pi}} \sigma_{C,V} = \frac{1}{\eta_{C,V}} \,.
\end{equation}
Using $\eta_{C,V}=20.0$ from the Eraker model yields $\sigma_{C,V}=0.063$. The remaining parameters $\rho_J,\mu^{S}$ were taken to have the same values as those used for the Kou-type model.

Using the parameters shown in Table~\ref{tab:paramsFN}, we obtain the numerical values of the compensators $\mu^{C,S}=-0.121, m^{C,S}=-0.129$. Using the relation 
$\kappa = 2\lambda^{S} (\mu^{S} - m^{S}) + 2\lambda^{C} (\mu^{C,S} - m^{C,S}) $ and $\lambda^{S}=0$ 
gives $\kappa = 0.0169\lambda^{C} = 0.0079$.

\begin{table}[h!]
\centering
\caption{The common jumps parameters for the 
folded normal model. The intensity of the common jumps is taken as 
in the Eraker model $\lambda^{C}=0.47$.
The variance process parameters are $\mu^{V}=0,\sigma_V=0.01, V_0=0.0076$.}
    \begin{tabular}{|c||c|c|c|c|c|}
    \hline
parameter & $\sigma_{C,V}$ & $\eta_{C,S}$ & $\mu^{S}$ & $\rho_{J}$ & $\sigma_{C,S}$  \\
    \hline
value & $0.063$ &  $10.0$ & $-0.11$ & $-0.38$ & 0.1 \\
\hline
    \end{tabular}%
  \label{tab:paramsFN}%
\end{table}%
\subsubsection{European options}
The asymptotic predictions for European options in the folded normal jump model are given in Corollary~\ref{cor:FN:Eur}. The numerical predictions are shown in Table~\ref{tab:FN_euro} for both calls and puts. These predictions are compared with the results of an MC simulation for European options with maturity $T=0.1$.

The MC simulation results are in reasonably good agreement with the asymptotic prediction for the European put options, but we note large differences for the call options. In order to investigate the reason for these differences we show in Table \ref{tab:FN_euro:2} also the test results for maturity $T=0.01$. The agreement is much better, for both call and put options. We conclude that large sub-leading corrections are responsible for the differences observed for $T=0.1$.

\begin{table}[htbp]
\centering
\caption{Numerical tests for European option pricing in the folded normal model.
The asymptotic results are shown in the second column (calls) and fifth column (puts). The options used for the MC simulation have maturity $T = 0.1$. 
The MC simulation uses 100k paths.}
\label{tab:FN_euro}
\renewcommand{\arraystretch}{1.2}
\setlength{\tabcolsep}{10pt}
\begin{tabular}{|c|c|c||c|c|c|}
\hline
$K$ & $1000a_{E,C}/\lambda^{C}$ & MC simulation  &
$K$ & $1000a_{E,P}/\lambda^{C}$ & MC simulation \\
\hline
1.02 & 2.970 & 112.903 $\pm$ 0.523 &
0.98 & 107.742 & 152.807 $\pm$ 1.386 \\
1.04 & 1.913 & 33.184 $\pm$ 0.113 &
0.96 & 90.800 & 94.924 $\pm$ 1.272 \\
1.06 & 1.208 & 7.786 $\pm$ 0.133 &
0.94 & 74.935 & 70.907 $\pm$ 1.148 \\
1.08 & 0.748 & 1.965 $\pm$ 0.140 &
0.92 & 60.371 & 56.245 $\pm$ 1.063 \\
1.10 & 0.454 & 0.786 $\pm$ 0.086 &
0.90 & 47.318 & 44.144 $\pm$ 0.934 \\
\hline
\end{tabular}
\end{table}

\begin{table}[htbp]
\centering
\caption{The options used for the MC simulation have maturity $T = 0.01$. 
}
\label{tab:FN_euro:2}
\renewcommand{\arraystretch}{1.2}
\setlength{\tabcolsep}{10pt}
\begin{tabular}{|c|c|c||c|c|c|}
\hline
$K$ & $1000a_{E,C}/\lambda^{C}$ & MC simulation  &
$K$ & $1000a_{E,P}/\lambda^{C}$ & MC simulation \\
\hline
1.02 & 2.970 & 12.696 $\pm$ 0.240 &
0.98 & 107.742 & 112.020 $\pm$ 4.042 \\
1.04 & 1.913 & 2.301 $\pm$ 0.121 &
0.96 & 90.800 & 89.587 $\pm$ 3.222 \\
1.06 & 1.208 & 1.429 $\pm$ 0.140 &
0.94 & 74.935 & 73.627 $\pm$ 2.484 \\
1.08 & 0.748 & 0.886 $\pm$ 0.193 &
0.92 & 60.371 & 59.204 $\pm$ 1.920 \\
1.10 & 0.454 & 0.515 $\pm$ 0.233 &
0.90 & 47.318 & 46.300 $\pm$ 1.413 \\
\hline
\end{tabular}
\end{table}
\subsubsection{VIX options}
The asymptotic predictions for VIX options in the folded normal model are given in Corollary~\ref{cor:FN:VIX}. 
The ATM strike of the VIX options in the $T\to 0$ limit is $K_{\mathrm{ATM}}=\sqrt{V_0+\kappa} = 0.1247$. The asymptotic predictions for OTM VIX call options with several strikes are shown in the second column in Table~\ref{tab:VFN}. These are compared with MC simulation results for VIX options with maturities $T=0.01$ (third column) and $T=0.1$ (fourth column).
\begin{table}[!ht]
  \centering
  \caption{Short maturity predictions for the VIX call options under the folded normal model.
  The last columns show the results of simulations 
  for VIX options with maturity $T=0.01$ (third column) and $T=0.1$ (fourth column).
  Model parameters as in Table~\ref{tab:paramsFN} and $\mu^{V}=0, \sigma_V=0.01$. }
    \begin{tabular}{|c|c||c|c|}
    \hline
$K/K_{\mathrm{ATM}}$ & $10^4\cdot a_{V,C}/\lambda^{C}$ 
    & $10^4\cdot C_V^{\mathrm{MC}}(K,T)/(\lambda^{C} T)$ 
    & $10^4\cdot C_V^{\mathrm{MC}}(K,T)/(\lambda^{C} T)$\\
    & & $T=0.01$ & $T=0.1$ \\
    \hline\hline
1.01 & 6.44 & 6.11 $\pm$ 0.16 & 6.10 $\pm$ 0.11 \\
1.02 & 2.08 & 1.96 $\pm$ 0.01 & 2.07 $\pm$ 0.04 \\
1.03 & 0.53 & 0.40 $\pm$ 0.03 & 0.60 $\pm$ 0.01 \\
1.04 & 0.10 & 0.02 $\pm$ 0.00 & 0.15 $\pm$ 0.01 \\
1.05 & 0.02 & 0.00 $\pm$ 0.00 & 0.04 $\pm$ 0.00 \\
1.06 & 0.00 & 0.00 $\pm$ 0.00 & 0.01 $\pm$ 0.00 \\
    \hline
    \end{tabular}%
  \label{tab:VFN}%
\end{table}%

We note good agreement of the MC simulation results with the asymptotic predictions for VIX call options. The MC simulations for the VIX put options give very small results, compatible with zero (not shown), which agree with the asymptotic prediction $a_{P,V}(K)=0$.

\section*{Acknowledgements}
Xiaoyu Wang is supported by the Guangzhou-HKUST(GZ) Joint Funding Program \\(No.2024A03J0630), 
Guangzhou Municipal Key Laboratory of Financial Technology\\ Cutting-Edge Research and the Guangzhou-HKUST(GZ) Joint Funding Program \\(No.2025A03J3556).
Lingjiong Zhu is partially supported by the grants NSF DMS-2053454, NSF DMS-2208303.



\bibliographystyle{plain}
\bibliography{ShortMaturityVIX}


\appendix

\section{Technical Lemmas}

In this appendix, we introduce a few technical lemmas
that will be used in the proofs of the main results in the paper.

\begin{lemma}\label{lem:p:bound}
For any $p>1$ such that $\mathbb{E}\left[e^{\frac{p}{2}Y_{1}^{V}}\right]<\infty$ and $\mathbb{E}\left[e^{\frac{p}{2}Y_{1}^{C,V}}\right]<\infty$, we have
\begin{align}
&\mathbb{E}\left[\left(\sqrt{\eta^{2}(S_{T})V_{T}+\kappa}+K\right)^{p}\right]
\nonumber
\\
&\leq 2^{p-1}M_{\eta}^{p}V_{0}^{p/2}e^{\lambda^{V}T\left(\mathbb{E}\left[e^{\frac{p}{2}Y_{1}^{V}}\right]-1\right)}e^{\lambda^{C}T\left(\mathbb{E}\left[e^{\frac{p}{2}Y_{1}^{C,V}}\right]-1\right)}e^{\frac{p}{2}M_{\mu}T+\frac{p^{2}}{8}M_{\sigma}^{2}T}
+2^{p-1}\left(\sqrt{\kappa}+K\right)^{p}.
\end{align}
\end{lemma}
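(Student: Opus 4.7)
The plan is to reduce the bound on $\mathbb{E}\left[\left(\sqrt{\eta^{2}(S_{T})V_{T}+\kappa}+K\right)^{p}\right]$ to a bound on $\mathbb{E}[V_T^{p/2}]$ and then evaluate the latter via an exponential-martingale argument combined with the moment generating function of the compound Poisson jump sums.

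First, I would apply the elementary inequalities $\sqrt{a+b}\leq \sqrt{a}+\sqrt{b}$ (for $a,b\geq 0$) and $(x+y)^p\leq 2^{p-1}(x^p+y^p)$ (valid for $p\geq 1$, $x,y\geq 0$), together with $\eta(S_T)\leq M_\eta$, to obtain
\[
\left(\sqrt{\eta^{2}(S_{T})V_{T}+\kappa}+K\right)^{p} \leq \left(\eta(S_T)\sqrt{V_T}+(\sqrt{\kappa}+K)\right)^p \leq 2^{p-1}M_\eta^p V_T^{p/2} + 2^{p-1}(\sqrt{\kappa}+K)^p.
\]
Taking expectation reduces the problem to showing
\[
\mathbb{E}[V_T^{p/2}] \leq V_0^{p/2}\, e^{\frac{p^2}{8}M_\sigma^2 T+\frac{p}{2}M_\mu T}\, e^{\lambda^V T(\mathbb{E}[e^{pY_1^V/2}]-1)}\, e^{\lambda^C T(\mathbb{E}[e^{pY_1^{C,V}/2}]-1)},
\]
which gives the stated bound directly.

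For this, I would apply It\^o's formula with jumps to $\log V_t$, using the SDE for $V_t$ in \eqref{LSvol:jumps}, to obtain
\[
\log V_T = \log V_0 + \int_0^T \sigma(V_s)\,dZ_s + \int_0^T \left(\mu(V_s)-\tfrac{1}{2}\sigma^2(V_s)\right)ds + R_T^V + R_T^{C,V},
\]
where $R_T^V=\sum_{i=1}^{N_T^V}Y_i^V$ and $R_T^{C,V}=\sum_{i=1}^{N_T^C}Y_i^{C,V}$. Multiplying by $p/2$, exponentiating, and completing the square on the stochastic integral gives the decomposition $V_T^{p/2} = V_0^{p/2}\,\mathcal{E}_T\,B_T\,J_T$, where
\[
\mathcal{E}_T:=\exp\!\left(\tfrac{p}{2}\!\!\int_0^T\!\sigma(V_s)\,dZ_s-\tfrac{p^2}{8}\!\!\int_0^T\!\sigma^2(V_s)\,ds\right),
\]
$B_T:=\exp\!\left(\left(\tfrac{p^2}{8}-\tfrac{p}{4}\right)\int_0^T\!\sigma^2(V_s)\,ds + \tfrac{p}{2}\int_0^T\!\mu(V_s)\,ds\right)$, and $J_T:=\exp\!\left(\tfrac{p}{2}R_T^V+\tfrac{p}{2}R_T^{C,V}\right)$. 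Using $\sigma^2(V_s)\leq M_\sigma^2$, $|\mu(V_s)|\leq M_\mu$, and dropping the nonpositive $-\frac{p}{4}\int_0^T\sigma^2(V_s)\,ds$ from $B_T$, one has the deterministic bound $B_T\leq e^{\frac{p^2}{8}M_\sigma^2 T+\frac{p}{2}M_\mu T}$.

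The main obstacle is that $\mathcal{E}_T$ is \emph{not} independent of the jumps $R_T^V,R_T^{C,V}$: the integrand $\sigma(V_s)$ depends on the jump history through $V_s$. The resolution is to condition on the $\sigma$-algebra $\mathcal{F}_T^{J}$ generated by all jump processes up to time $T$. Conditional on $\mathcal{F}_T^{J}$, the process $V_t$ is driven only by $Z$ (which is independent of the jumps), and $\sigma(V_s)$ is bounded so that $\tfrac{p^2}{8}\int_0^T\sigma^2(V_s)\,ds\leq \tfrac{p^2}{8}M_\sigma^2 T$ a.s.; hence Novikov's condition holds conditionally and $\mathbb{E}[\mathcal{E}_T\mid\mathcal{F}_T^{J}]=1$ a.s. Combining with the independence of $R_T^V$ and $R_T^{C,V}$ gives
\[
\mathbb{E}[\mathcal{E}_T J_T] = \mathbb{E}\!\left[J_T\,\mathbb{E}[\mathcal{E}_T\mid\mathcal{F}_T^{J}]\right] = \mathbb{E}\!\left[e^{\tfrac{p}{2}R_T^V}\right]\mathbb{E}\!\left[e^{\tfrac{p}{2}R_T^{C,V}}\right].
\]
The standard compound-Poisson computation yields $\mathbb{E}[e^{(p/2)R_T^V}]=\exp\!\left(\lambda^V T(\mathbb{E}[e^{pY_1^V/2}]-1)\right)$, and similarly for $R_T^{C,V}$, both finite by hypothesis. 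Multiplying by $V_0^{p/2}$ and the bound on $B_T$ gives the desired estimate on $\mathbb{E}[V_T^{p/2}]$, which combined with the first step completes the proof.
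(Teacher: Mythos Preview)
Your proof is correct and follows essentially the same approach as the paper: reduce to bounding $\mathbb{E}[V_T^{p/2}]$ via the elementary inequalities $\sqrt{a+b}\le\sqrt a+\sqrt b$ and $(x+y)^p\le 2^{p-1}(x^p+y^p)$, write $V_T$ explicitly as an exponential, condition on the jump filtration, bound the diffusion part via the stochastic-exponential argument, and evaluate the compound-Poisson moment generating function. The only cosmetic difference is that you invoke Novikov to get $\mathbb{E}[\mathcal{E}_T\mid\mathcal{F}_T^{J}]=1$, whereas the paper uses the (weaker but sufficient) supermartingale inequality $\le 1$.
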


Let $(\bar{V}_{t})_{t\geq 0}$ satisfy the following SDE:
\begin{eqnarray}
\frac{d\bar{V}_t}{\bar{V}_t} =\sigma(\bar{V}_t) d\bar{Z}_t + \mu(\bar{V}_t) dt \,,\nonumber
\end{eqnarray}
where $\bar{Z}_{t}$ is a standard Brownian motion, 
with $\bar{V}_{0}=V_{0}$.
We have the following two technical lemma.s

\begin{lemma}\label{lem:1}
For any $0\leq t\leq T$:
\begin{align}
\mathbb{E}[| \bar{V}_t - \bar{V}_0|]
\leq
V_{0}\left(e^{2TM_{\mu}}
e^{4TM_{\sigma}^{2}}+1-2e^{-TM_{\mu}-\frac{1}{2}T M_{\sigma}^{2}}\right)^{1/2}.
\end{align}
\end{lemma}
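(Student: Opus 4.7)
The plan is to reduce the $L^1$ bound to an $L^2$ bound via Cauchy--Schwarz and then control the second moment of $\bar V_t - V_0$ by separately bounding $\mathbb{E}[\bar V_t^2]$ from above and $\mathbb{E}[\bar V_t]$ from below. Expanding the square gives
\begin{equation*}
\mathbb{E}[|\bar V_t - V_0|] \leq \bigl(\mathbb{E}[(\bar V_t - V_0)^2]\bigr)^{1/2} = \bigl(\mathbb{E}[\bar V_t^2] - 2 V_0 \mathbb{E}[\bar V_t] + V_0^2\bigr)^{1/2},
\end{equation*}
so the target inequality follows immediately once I establish $\mathbb{E}[\bar V_t^2] \leq V_0^2 e^{2TM_\mu + 4 T M_\sigma^2}$ and $\mathbb{E}[\bar V_t] \geq V_0 e^{-TM_\mu - \frac{1}{2} T M_\sigma^2}$ for every $t \in [0,T]$, since these two bounds combine with the trivial term $V_0^2$ to produce exactly the three summands inside the square root.

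Both estimates start from the exponential representation obtained by applying It\^o's formula to $\log \bar V_t$, namely $\bar V_t = V_0 e^{Y_t}$ with $Y_t = \int_0^t (\mu(\bar V_s) - \frac{1}{2}\sigma^2(\bar V_s)) ds + \int_0^t \sigma(\bar V_s) d\bar Z_s$. For the upper bound I would apply Cauchy--Schwarz to split the finite-variation and martingale parts of $2Y_t$:
\begin{equation*}
\mathbb{E}[e^{2Y_t}] \leq \Bigl(\mathbb{E}\bigl[e^{\int_0^t (4\mu(\bar V_s) - 2\sigma^2(\bar V_s)) ds}\bigr]\Bigr)^{1/2} \Bigl(\mathbb{E}\bigl[e^{4\int_0^t \sigma(\bar V_s) d\bar Z_s}\bigr]\Bigr)^{1/2}.
\end{equation*}
The deterministic factor is bounded by $e^{4 M_\mu t}$ since $-2\sigma^2 \leq 0$. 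The stochastic factor is handled by writing $e^{4\int \sigma d\bar Z_s} = \mathcal{E}\bigl(4\int_0^\cdot \sigma d\bar Z\bigr)_t \cdot e^{8 \int_0^t \sigma^2 ds}$; the Dol\'eans--Dade exponential is a true martingale because $\sigma$ is bounded (so Novikov's condition holds), hence has expectation $1$, and the remaining factor contributes at most $e^{8 M_\sigma^2 t}$. Taking the square root and multiplying by $V_0^2$ gives $\mathbb{E}[\bar V_t^2] \leq V_0^2 e^{2 M_\mu t + 4 M_\sigma^2 t}$.

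For the lower bound on $\mathbb{E}[\bar V_t]$ I would use Jensen's inequality in the reverse direction. Since $x \mapsto e^x$ is convex, $\mathbb{E}[\bar V_t] = V_0 \mathbb{E}[e^{Y_t}] \geq V_0 \exp(\mathbb{E}[Y_t])$. Boundedness of $\sigma$ implies that $\int_0^t \sigma(\bar V_s) d\bar Z_s$ is a true martingale with mean zero, so $\mathbb{E}[Y_t] = \mathbb{E}\bigl[\int_0^t (\mu(\bar V_s) - \frac{1}{2} \sigma^2(\bar V_s)) ds\bigr] \geq -(M_\mu + \frac{1}{2} M_\sigma^2) t$, yielding $\mathbb{E}[\bar V_t] \geq V_0 e^{-(M_\mu + \frac{1}{2} M_\sigma^2) t}$. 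Monotonicity in $t \in [0,T]$ then upgrades both estimates to the uniform constants appearing in the lemma, after which the expansion above closes the argument.

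The main obstacle is the lower bound on $\mathbb{E}[\bar V_t]$: Gronwall-type arguments applied directly to the SDE produce only upper bounds on moments, and because $\mu$ is allowed to be negative one cannot simply discard it. The exponential representation together with a one-line application of Jensen's inequality is the crucial device that converts the pointwise floor $\mu - \frac{1}{2}\sigma^2 \geq -M_\mu - \frac{1}{2} M_\sigma^2$ into the clean exponential floor $V_0 e^{-T M_\mu - \frac{1}{2} T M_\sigma^2}$ demanded by the statement.
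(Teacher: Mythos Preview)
Your proposal is correct and follows essentially the same route as the paper: both reduce the $L^1$ estimate to $L^2$, write $\bar V_t$ as a stochastic exponential, use Jensen's inequality for the lower bound on $\mathbb{E}[\bar V_t]$, and control $\mathbb{E}[\bar V_t^2]$ via Cauchy--Schwarz together with the martingale property of the Dol\'eans--Dade exponential. The only cosmetic difference is that the paper first pulls out the pathwise drift bound $2\mu-\sigma^2\le 2M_\mu$ and then applies Cauchy--Schwarz to the stochastic integral alone, whereas you split drift and martingale by Cauchy--Schwarz first; both variants deliver the identical constant $e^{2TM_\mu+4TM_\sigma^2}$.
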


\begin{lemma}\label{lem:2}
For any $0\leq t\leq T$:
\begin{equation}
\mathbb{E}[\bar{V}_{t}]\leq V_{0}e^{M_{\mu}t},
\end{equation}
and
\begin{equation}
\mathbb{E}[V_{t}]\leq V_{0}e^{\lambda^{V}t\mu^{V}+\lambda^{C}t\mu^{C,V}}e^{M_{\mu}t}.
\end{equation}
\end{lemma}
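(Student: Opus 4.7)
The plan is to apply It\^o's formula in integrated form, take expectations using a localization argument to control the stochastic integrals, and conclude by Gronwall's inequality. Both claims follow the same strategy, with the second differing from the first only by the presence of the compensated jump integrals.

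For the first bound, start from
\begin{equation*}
\bar{V}_t = V_0 + \int_0^t \bar{V}_s \sigma(\bar{V}_s) d\bar{Z}_s + \int_0^t \bar{V}_s \mu(\bar{V}_s) ds,
\end{equation*}
and note that the multiplicative structure of the SDE implies $\bar{V}_t > 0$ almost surely (it admits a stochastic exponential representation). Define the stopping times $\theta_n := \inf\{t \geq 0 : \bar{V}_t > n\}$. On $[0, t \wedge \theta_n]$ the integrand $\bar V_s \sigma(\bar V_s)$ is bounded by $n M_\sigma$, so the stopped stochastic integral is a true martingale with zero expectation. Using $\mu(\bar{V}_s) \leq M_\mu$ together with $\bar{V}_s \geq 0$, taking expectations gives
\begin{equation*}
\mathbb{E}[\bar{V}_{t \wedge \theta_n}] \leq V_0 + M_\mu \int_0^t \mathbb{E}[\bar{V}_{s \wedge \theta_n}] ds.
\end{equation*}
Gronwall's inequality then yields $\mathbb{E}[\bar{V}_{t \wedge \theta_n}] \leq V_0 e^{M_\mu t}$, and Fatou's lemma as $n \to \infty$ yields the first claim.

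For the second bound, apply the same strategy to the semimartingale representation
\begin{align*}
V_t &= V_0 + \int_0^t V_s \sigma(V_s) dZ_s + \int_0^t V_s \mu(V_s) ds \\
&\qquad + \int_0^t V_{s-} dJ_s^V + \int_0^t V_{s-} dJ_s^{C,V}.
\end{align*}
Positivity of $V_t$ is still preserved since each jump multiplies $V_{t-}$ by the positive factor $e^{Y}$. The compensators of $J_t^V$ and $J_t^{C,V}$ are $\lambda^V \mu^V t$ and $\lambda^C \mu^{C,V} t$ respectively, so after localizing at $\theta_n := \inf\{t : V_t > n\}$ and taking expectations, the stochastic integral vanishes and the two jump integrals contribute $\lambda^V \mu^V \int_0^t \mathbb{E}[V_{s \wedge \theta_n}] ds$ and $\lambda^C \mu^{C,V} \int_0^t \mathbb{E}[V_{s \wedge \theta_n}] ds$. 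Bounding $V_s \mu(V_s) \leq M_\mu V_s$ as before,
\begin{equation*}
\mathbb{E}[V_{t \wedge \theta_n}] \leq V_0 + (M_\mu + \lambda^V \mu^V + \lambda^C \mu^{C,V}) \int_0^t \mathbb{E}[V_{s \wedge \theta_n}] ds,
\end{equation*}
and Gronwall combined with Fatou gives the claimed bound.

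The main technical point is making the localization rigorous: one needs non-explosion of $V_t$ on $[0,T]$ so that $\theta_n \to \infty$ almost surely, together with the fact that the compensated jump integrals are genuine local martingales. Non-explosion follows from the boundedness of $\sigma$ and $\mu$ in Assumption~\ref{assump:bounded} combined with the finite exponential moment assumption on the $V$-jumps, while the martingale property is standard for compound-Poisson-driven integrals with left-continuous locally bounded integrand.
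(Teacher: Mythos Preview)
Your proof is correct and takes a genuinely different route from the paper's. The paper exploits the explicit stochastic-exponential representation: writing
\[
V_t = V_0 \exp\!\Big(\int_0^t(\mu(V_u)-\tfrac12\sigma^2(V_u))\,du + \int_0^t\sigma(V_u)\,dZ_u + R_t^V + R_t^{C,V}\Big),
\]
it bounds $\mu \leq M_\mu$, conditions on the jump filtration, and uses that the remaining Dol\'eans exponential $\exp(-\tfrac12\int\sigma^2\,du + \int\sigma\,dZ)$ is a nonnegative local martingale and hence a supermartingale (so its conditional expectation is at most $1$); then $\mathbb{E}[e^{R_t^V+R_t^{C,V}}]$ is computed exactly. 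Your localization--Gronwall route is more elementary in that it never uses the exponential form and would extend to SDEs without multiplicative structure; the paper's supermartingale argument, on the other hand, avoids localization entirely and works uniformly in the sign of the effective growth rate.

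Two small remarks. First, after localization the compensated jump term contributes exactly $\lambda^V\mu^V\,\mathbb{E}\big[\int_0^{t\wedge\theta_n}V_s\,ds\big]$, not $\lambda^V\mu^V\int_0^t\mathbb{E}[V_{s\wedge\theta_n}]\,ds$; replacing the former by the latter is an inequality that goes the right way only when $M_\mu+\lambda^V\mu^V+\lambda^C\mu^{C,V}\geq 0$ (true in every example in the paper, but not formally assumed). Second, non-explosion need not be invoked as a separate hypothesis: your uniform bound $\mathbb{E}[V_{t\wedge\theta_n}]\leq V_0e^{ct}$ already gives $\mathbb{P}(\theta_n\leq t)\leq V_0e^{ct}/n\to 0$, so $\theta_n\to\infty$ a.s.\ is a consequence of the argument rather than an input.
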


\section{Proofs of the Results in Section~\ref{sec:main}}

\begin{proof}[Proof of Proposition~\ref{prop:VIXsmalltau}]
First of all, we have
\begin{align}
\left| \mathrm{VIX}_T^2 - V_T \eta^2(S_T)-\kappa \right|  & = \left|\frac{1}{\tau} \int_T^{T+\tau} \mathbb{E}[V_s \eta^2(S_s)|\mathcal{F}_{T}] ds+\kappa 
- V_T \eta^2(S_T)-\kappa \right|\nonumber \\
& \leq
\frac{1}{\tau} \int_T^{T+\tau} \left| \mathbb{E}[V_s \eta^2(S_s) |\mathcal{F}_{T}]  - V_T \eta^2(S_T) \right| ds\,. 
\end{align}

The integrand in the above equation can be bounded as 
\begin{equation}\label{18}
\left| \mathbb{E}[V_s \eta^2(S_s) |\mathcal{F}_{T}]  - V_T \eta^2(S_T) \right| 
\leq 
\left| \mathbb{E}[( V_s - V_T) \eta^2(S_s) |\mathcal{F}_{T}] \right| +
\left| \mathbb{E}[V_T (\eta^2(S_s) - \eta^2(S_T)) |\mathcal{F}_{T}] \right| \,.
\end{equation}
Next, we bound each term on the right-hand side in \eqref{18}.

\textbf{Step 1.} \textit{First term in (\ref{18}).}
The first term in \eqref{18} can be bounded as 
\begin{align}
\left|\mathbb{E}\left[V_{s}\eta^{2}(S_{s})|\mathcal{F}_{T}\right]
-\mathbb{E}\left[V_{T}\eta^{2}(S_{s})|\mathcal{F}_{T}\right]\right|
&\leq
M_{\eta}^{2}\left[\mathbb{E}|V_{s}-V_{T}||\mathcal{F}_{T}\right]
\nonumber
\\
&\leq
M_{\eta}^{2}\left(\left[\mathbb{E}|V_{s}-V_{T}|^{2}|\mathcal{F}_{T}\right]\right)^{1/2}.
\end{align}
We can further compute that
\begin{equation}
V_{s}=V_{T}e^{\int_{T}^{s}(\mu(V_{u})-\frac{1}{2}\sigma^{2}(V_{u}))du+\int_{T}^{s}\sigma(V_{u})dZ_{u}+R_{s}^{V}-R_{T}^{V}+R_{s}^{C,V}-R_{T}^{C,V}},
\end{equation}
where, for any $t$,
\begin{equation}
R_{t}^{V}:=\sum_{i=1}^{N_{t}^{S}}Y_{i}^{V},
\qquad
R_{t}^{C,V}:=\sum_{i=1}^{N_{t}^{C}}Y_{i}^{C,V}.
\end{equation}
We have
\begin{align}
\mathbb{E}[V_{s}|\mathcal{F}_{T}]
&=V_{T}\mathbb{E}\left[\mathbb{E}\left[e^{\int_{T}^{s}(\mu(V_{u})-\frac{1}{2}\sigma^{2}(V_{u}))du+\int_{T}^{s}\sigma(V_{u})dZ_{u}}|\mathcal{F}_{T},\mathcal{F}_{s}^{R,V}\right]e^{R_{s}^{V}-R_{T}^{V}+R_{s}^{C,V}-R_{T}^{C,V}}\right],
\end{align}
where 
$\mathcal{F}_{t}^{R,V}$ is the natural filtration of $(R_{t}^{V},R_{t}^{C,V})$ process.
By Jensen's inequality and Assumption~\ref{assump:bounded},
we have
\begin{align}
\mathbb{E}\left[e^{\int_{T}^{s}(\mu(V_{u})-\frac{1}{2}\sigma^{2}(V_{u}))du+\int_{T}^{s}\sigma(V_{u})dZ_{u}}|\mathcal{F}_{T},\mathcal{F}_{s}^{R,V}\right]
&\geq
e^{\mathbb{E}[\int_{T}^{s}(\mu(V_{u})-\frac{1}{2}\sigma^{2}(V_{u}))du
+\int_{T}^{s}\sigma(V_{u})dZ_{u}|\mathcal{F}_{T},\mathcal{F}_{s}^{R,V}]}
\nonumber
\\
&=
e^{\mathbb{E}[\int_{T}^{s}(\mu(V_{u})-\frac{1}{2}\sigma^{2}(V_{u}))du|\mathcal{F}_{T},\mathcal{F}_{s}^{R,V}]}
\nonumber
\\
&\geq
e^{-(s-T)M_{\mu}-\frac{1}{2}(s-T)M_{\sigma}^{2}}.
\end{align}
Hence, we conclude that
\begin{align}
\mathbb{E}[V_{s}|\mathcal{F}_{T}]
&\geq
V_{T}e^{-(s-T)M_{\mu}-\frac{1}{2}(s-T)M_{\sigma}^{2}}\mathbb{E}\left[e^{R_{s}^{V}-R_{T}^{V}+R_{s}^{C,V}-R_{T}^{C,V}}\right]
\nonumber
\\
&=
V_{T}e^{\lambda^{V}(s-T)(\mathbb{E}[e^{Y_{1}^{V}}]-1)}e^{\lambda^{C}(s-T)(\mathbb{E}[e^{Y_{1}^{C,V}}]-1)}e^{-(s-T)M_{\mu}-\frac{1}{2}(s-T)M_{\sigma}^{2}}.
\end{align}

On the other hand, we have
\begin{align}
\mathbb{E}[V_{s}^{2}|\mathcal{F}_{T}]
&=V_{T}^{2}\mathbb{E}\left[\mathbb{E}\left[e^{\int_{T}^{s}(2\mu(V_{u})-\sigma^{2}(V_{u}))du+2\int_{T}^{s}\sigma(V_{u})dZ_{u}}|\mathcal{F}_{T},\mathcal{F}_{s}^{R,V}\right]e^{2R_{s}^{V}-2R_{T}^{V}+2R_{s}^{C,V}-2R_{T}^{C,V}}\right],
\end{align}
where 
$\mathcal{F}_{t}^{R,V}$ is the natural filtration of $(R_{t}^{V},R_{t}^{C,V})$ process.
By Assumption~\ref{assump:bounded}, 
\begin{align}
\mathbb{E}\left[e^{\int_{T}^{s}(2\mu(V_{u})-\sigma^{2}(V_{u}))du
+2\int_{T}^{s}\sigma(V_{u})dZ_{u}}\Big|\mathcal{F}_{T},\mathcal{F}_{s}^{R,V}\right]
\leq
e^{2(s-T)M_{\mu}}
\mathbb{E}\left[e^{2\int_{T}^{s}\sigma(V_{u})dZ_{u}}\Big|\mathcal{F}_{T},\mathcal{F}_{s}^{R,V}\right],
\end{align}
and by Cauchy-Schwarz inequality, we can further compute that
\begin{align}
&\mathbb{E}\left[e^{2\int_{T}^{s}\sigma(V_{u})dZ_{u}}\Big|\mathcal{F}_{T},\mathcal{F}_{s}^{R,V}\right]
\nonumber
\\
&=\mathbb{E}\left[e^{\int_{T}^{s}2\sigma(V_{u})dZ_{u}-\int_{T}^{s}4\sigma^{2}(V_{u})du}e^{\int_{T}^{s}4\sigma^{2}(V_{u})du}\Big|\mathcal{F}_{T},\mathcal{F}_{s}^{R,V}\right]
\nonumber
\\
&\leq
\left(\mathbb{E}\left[e^{\int_{T}^{s}4\sigma(V_{u})dZ_{u}-\frac{1}{2}\int_{T}^{s}(4\sigma)^{2}(V_{u})du}\Big|\mathcal{F}_{T},\mathcal{F}_{s}^{R,V}\right]\right)^{1/2}
\left(\mathbb{E}\left[e^{8\int_{T}^{s}\sigma^{2}(V_{u})du}\Big|\mathcal{F}_{T},\mathcal{F}_{s}^{R,V}\right]\right)^{1/2}
\nonumber
\\
&=\left(\mathbb{E}\left[e^{8\int_{T}^{s}\sigma^{2}(V_{u})du}\Big|\mathcal{F}_{T},\mathcal{F}_{s}^{R,V}\right]\right)^{1/2}
\leq e^{4(s-T)M_{\sigma}^{2}}.
\end{align}
Hence, 
\begin{align}
\mathbb{E}[V_{s}^{2}|\mathcal{F}_{T}]
&\leq
V_{T}^{2}\mathbb{E}\left[e^{2R_{s}^{V}-2R_{T}^{V}+2R_{s}^{C,V}-2R_{T}^{C,V}}\right]e^{2(s-T)M_{\mu}}
e^{4(s-T)M_{\sigma}^{2}}
\nonumber
\\
&=V_{T}^{2}e^{\lambda^{V}(s-T)(\mathbb{E}[e^{2Y_{1}^{V}}]-1)}e^{\lambda^{C}(s-T)(\mathbb{E}[e^{2Y_{1}^{C,V}}]-1)}e^{2(s-T)M_{\mu}}
e^{4(s-T)M_{\sigma}^{2}}.
\end{align}
Therefore, for any $T\leq s\leq T+\tau$,
\begin{align}
&\left[\mathbb{E}|V_{s}-V_{T}|^{2}|\mathcal{F}_{T}\right]
\nonumber
\\
&=\mathbb{E}\left[V_{s}^{2}|\mathcal{F}_{T}\right]
+V_{T}^{2}
-2V_{T}\mathbb{E}[V_{s}|\mathcal{F}_{T}]
\nonumber
\\
&\leq
V_{T}^{2}\Bigg(e^{\lambda^{V}(s-T)(\mathbb{E}[e^{2Y_{1}^{V}}]-1)}e^{\lambda^{C}(s-T)(\mathbb{E}[e^{2Y_{1}^{C,V}}]-1)}e^{2(s-T)M_{\mu}}
e^{4(s-T)M_{\sigma}^{2}}+1
\nonumber
\\
&\qquad\qquad\qquad
-2e^{-\lambda^{V}(s-T)(\mathbb{E}[e^{Y_{1}^{V}}]-1)}e^{-\lambda^{C}(s-T)(\mathbb{E}[e^{Y_{1}^{C,V}}]-1)}e^{-(s-T)M_{\mu}-\frac{1}{2}(s-T)M_{\sigma}^{2}}\Bigg)
\nonumber
\\
&\leq
V_{T}^{2}\Bigg(e^{\lambda^{V}\tau(\mathbb{E}[e^{2Y_{1}^{V}}]-1)}e^{\lambda^{C}\tau(\mathbb{E}[e^{2Y_{1}^{C,V}}]-1)}e^{2\tau M_{\mu}}
e^{4\tau M_{\sigma}^{2}}+1
\nonumber
\\
&\qquad\qquad\qquad-2e^{-\lambda^{V}\tau(\mathbb{E}[e^{Y_{1}^{V}}]-1)}e^{-\lambda^{C}\tau(\mathbb{E}[e^{Y_{1}^{C,V}}]-1)}e^{-\tau M_{\mu}-\frac{1}{2}\tau M_{\sigma}^{2}}\Bigg).
\end{align}
Hence, we conclude that, for any $T\leq s\leq T+\tau$,
\begin{align}
&\left|\mathbb{E}\left[V_{s}\eta^{2}(S_{s})|\mathcal{F}_{T}\right]
-\mathbb{E}\left[V_{T}\eta^{2}(S_{s})|\mathcal{F}_{T}\right]\right|
\nonumber
\\
&\leq
M_{\eta}^{2}V_{T}\Bigg(e^{\lambda^{V}\tau(\mathbb{E}[e^{2Y_{1}^{V}}]-1)}e^{\lambda^{C}\tau(\mathbb{E}[e^{2Y_{1}^{C,V}}]-1)}e^{2\tau M_{\mu}}
e^{4\tau M_{\sigma}^{2}}+1
\nonumber
\\
&\qquad\qquad\qquad\qquad\qquad-2e^{-\lambda^{V}\tau(\mathbb{E}[e^{Y_{1}^{V}}]-1)}e^{-\lambda^{C}\tau(\mathbb{E}[e^{Y_{1}^{C,V}}]-1)}e^{-\tau M_{\mu}-\frac{1}{2}\tau M_{\sigma}^{2}}\Bigg)^{1/2}.
\end{align}

\textbf{Step 2.}
\textit{The second term in (\ref{18})}.
The second term in (\ref{18}) is bounded further as
\begin{align}
\left|\mathbb{E}\left[V_{T}\eta^{2}(S_{s})|\mathcal{F}_{T}\right]
-V_{T}\eta^{2}(S_{T})\right|
=V_{T}
\left|\mathbb{E}\left[\eta^{2}(S_{s})|\mathcal{F}_{T}\right]
-\eta^{2}(S_{T})\right|.
\end{align}
By It\^{o}'s formula and taking expectations, we obtain 
\begin{align}
&\mathbb{E}\left[\eta^{2}(S_{s})|\mathcal{F}_{T}\right]
-\eta^{2}(S_{T})
\nonumber
\\
&=\int_{T}^{s}\mathbb{E}\left[(\eta^{2})'(S_{t})(r-q-\lambda^{S}\mu^{S}-\lambda^{C}\mu^{C,s})S_{t}
+\frac{1}{2}(\eta^{2})''(S_{t})\eta^{2}(S_{t})S_{t}^{2}V_{t}\bigg|\mathcal{F}_{T}\right]dt
\nonumber
\\
&\qquad\qquad+\mathbb{E}\left[\int_{T}^{s}(\eta^{2})'(S_{t})\eta(S_{t})S_{t}\sqrt{V_{t}}dW_{t}\bigg|\mathcal{F}_{T}\right]
\nonumber
\\
&\qquad\qquad\qquad
+\mathbb{E}\left[\int_{T}^{s}\int_{\mathbb{R}}\left(\eta^{2}(S_{t}e^{x})-\eta^{2}(S_{t})\right)P^{S}(dx)dN_{t}^{S}\bigg|\mathcal{F}_{T}\right]
\nonumber
\\
&\qquad\qquad\qquad\qquad
+\mathbb{E}\left[\int_{T}^{s}\int_{\mathbb{R}}\left(\eta^{2}(S_{t}e^{x})-\eta^{2}(S_{t})\right)p^{C,S}(x)dxdN_{t}^{C}\bigg|\mathcal{F}_{T}\right],
\end{align}
which further implies that
\begin{align}
&\mathbb{E}\left[\eta^{2}(S_{s})|\mathcal{F}_{T}\right]
-\eta^{2}(S_{T})
\nonumber
\\
&=\int_{T}^{s}\mathbb{E}\left[(\eta^{2})'(S_{t})(r-q-\lambda^{S}\mu^{S}-\lambda^{C}\mu^{C,s})S_{t}
+\frac{1}{2}(\eta^{2})''(S_{t})\eta^{2}(S_{t})S_{t}^{2}V_{t}\bigg|\mathcal{F}_{T}\right]dt
\nonumber
\\
&\qquad\qquad
+\lambda^{S}\int_{T}^{s}\int_{\mathbb{R}}\mathbb{E}\left[\left(\eta^{2}(S_{t}e^{x})-\eta^{2}(S_{t})\right)\bigg|\mathcal{F}_{T}\right]P^{S}(dx)dt
\nonumber
\\
&\qquad\qquad\qquad
+\lambda^{C}\int_{T}^{s}\int_{\mathbb{R}}\mathbb{E}\left[\left(\eta^{2}(S_{t}e^{x})-\eta^{2}(S_{t})\right)\bigg|\mathcal{F}_{T}\right]p^{C,S}(x)dxdt.
\end{align}

By our assumption, $\eta$ is $L$-Lipschitz, 
so that $(\eta^{2})' = 2\eta\eta'$, which implies that $\eta^{2}$ is $2LM_{\eta}$-Lipschitz.
Also we have the bound (\ref{eta:2:assumption}) on the second derivative $(\eta^2)''(s)$.
Furthermore, since $\eta(\cdot)$ is uniformly bounded by $M_{\eta}$, 
we have $\left|\mathbb{E}\left[\left(\eta^{2}(S_{t}e^{x})-\eta^{2}(S_{t})\right)\bigg|\mathcal{F}_{T}\right]\right|\leq M_{\eta}^{2}$.
Therefore, for any $T\leq s\leq T+\tau$, we have
\begin{align}
&\left|\mathbb{E}\left[\eta^{2}(S_{s})|\mathcal{F}_{T}\right]
-\eta^{2}(S_{T})\right|
\nonumber
\\
&\leq
\int_{T}^{s}\mathbb{E}\left[2LM_{\eta}|r-q-\lambda^{S}\mu^{S}-\lambda^{C}\mu^{C,s}|S_{t}
+\frac{1}{2}M_{\eta,2}M_{\eta}^{2}V_{t}\bigg|\mathcal{F}_{T}\right]dt
+(\lambda^{S}+\lambda^{C})M_{\eta}^{2}\tau
\nonumber
\\
&
=\int_{T}^{s}\left[2LM_{\eta}|r-q-\lambda^{S}\mu^{S}-\lambda^{C}\mu^{C,s}|S_{T}e^{(r-q)(t-T)}
+\frac{1}{2}M_{\eta,2}M_{\eta}^{2}\mathbb{E}[V_{t}|\mathcal{F}_{T}]\right]dt
+(\lambda^{S}+\lambda^{C})M_{\eta}^{2}\tau
\nonumber
\\
&\leq\int_{T}^{s}\Bigg[2LM_{\eta}\left|r-q-\lambda^{S}\mu^{S}-\lambda^{C}\mu^{C,s}\right|S_{T}e^{(r-q)(t-T)}
\nonumber
\\
&\qquad\qquad\qquad\qquad\qquad
+\frac{1}{2}M_{\eta,2}M_{\eta}^{2}V_{T}e^{\lambda^{V}\tau\mu^{V}}e^{\lambda^{C}\tau\mu^{C,V}}e^{\tau M_{\mu}}\Bigg]dt
+(\lambda^{S}+\lambda^{C})M_{\eta}^{2}\tau
\nonumber
\\
&\leq
\tau
\Bigg[2LM_{\eta}\left|r-q-\lambda^{S}\mu^{S}-\lambda^{C}\mu^{C,s}\right|S_{T}e^{|r-q|\tau}
\nonumber
\\
&\qquad\qquad\qquad\qquad
+\frac{1}{2}M_{\eta,2}M_{\eta}^{2}V_{T}e^{\lambda^{V}\tau\mu^{V}}e^{\lambda^{C}\tau\mu^{C,V}}e^{\tau M_{\mu}}
+(\lambda^{S}+\lambda^{C})M_{\eta}^{2}\Bigg],
\end{align}
where we also applied Lemma~\ref{lem:2}.

Hence, we conclude that for any $T\leq s\leq T+\tau$,
\begin{align}
&\left|\mathbb{E}\left[V_{s}\eta^{2}(S_{s})|\mathcal{F}_{T}\right]
-V_{T}\eta^{2}(S_{T})\right|
\nonumber
\\
&\leq
M_{\eta}^{2}V_{T}\Bigg(e^{\lambda^{V}\tau(\mathbb{E}[e^{2Y_{1}^{V}}]-1)}e^{\lambda^{C}\tau(\mathbb{E}[e^{2Y_{1}^{C,V}}]-1)}e^{2\tau M_{\mu}}
e^{4\tau M_{\sigma}^{2}}+1
\nonumber
\\
&\qquad\qquad\qquad\qquad\qquad-2e^{-\lambda^{V}\tau(\mathbb{E}[e^{Y_{1}^{V}}]-1)}e^{-\lambda^{C}\tau(\mathbb{E}[e^{Y_{1}^{C,V}}]-1)}e^{-\tau M_{\mu}-\frac{1}{2}\tau M_{\sigma}^{2}}\Bigg)^{1/2}
\nonumber
\\
&\qquad\qquad
+\tau\Bigg[2LM_{\eta}\left|r-q-\lambda^{S}\mu^{S}-\lambda^{C}\mu^{C,s}\right|S_{T}e^{|r-q|\tau}
\nonumber
\\
&\qquad\qquad\qquad\qquad
+\frac{1}{2}M_{\eta,2}M_{\eta}^{2}V_{T}e^{\lambda^{V}\tau\mu^{V}}e^{\lambda^{C}\tau\mu^{C,V}}e^{\tau M_{\mu}}
+(\lambda^{S}+\lambda^{C})M_{\eta}^{2}\Bigg].
\end{align}
By recalling the formula in \eqref{VIX:formula}, we conclude that
\begin{equation}
\left|\mathrm{VIX}_T^2-V_{T}\eta^{2}(S_{T})\right|
\leq C_{1}(\tau)S_{T}+C_{2}(\tau)V_{T}+(\lambda^{S}+\lambda^{C})M_{\eta}^{2}\tau,
\end{equation}
where $C_{1}(\tau),C_{2}(\tau)$ are defined in \eqref{C:1:tau}-\eqref{C:2:tau}. 
Finally, we can compute that
\begin{align}
&\mathbb{E}\left|\mathrm{VIX}_T^2-V_{T}\eta^{2}(S_{T})\right|
\nonumber
\\
&\leq C_{1}(\tau)\mathbb{E}[S_{T}]+C_{2}(\tau)\mathbb{E}[V_{T}]
+\left(\lambda^{S}+\lambda^{C}\right)M_{\eta}^{2}\tau
\nonumber
\\
&\leq
C_{1}(\tau)S_{0}e^{(r-q)T}
+C_{2}(\tau)V_{0}e^{\lambda^{V}T\mu^{V}}e^{\lambda^{C}T\mu^{C,V}}e^{TM_{\mu}}+\left(\lambda^{S}+\lambda^{C}\right)M_{\eta}^{2}\tau,
\end{align}
where we recall fromn \eqref{mu:defn} that $\mu^{V}=\mathbb{E}[e^{Y_{1}^{V}}]-1$
and $\mu^{C,V}=\mathbb{E}[e^{Y_{1}^{C,V}}]-1$ and we also applied Lemma~\ref{lem:2}.
This completes the proof.
\end{proof}

\begin{proof}[Proof of Corollary~\ref{cor:VIXsmalltau}]
Notice that
\begin{align}
\left|\mathrm{VIX}_T-\sqrt{\eta^{2}(S_{T})V_{T}+\kappa}\right|
=\frac{\left|\mathrm{VIX}_T^2
-\eta^{2}(S_{T})V_{T}-\kappa\right|}{\mathrm{VIX}_T+\sqrt{\eta^{2}(S_{T})V_{T}+\kappa}}
\leq\frac{\left|\mathrm{VIX}_T^2
-\eta^{2}(S_{T})V_{T}-\kappa\right|}{\sqrt{\kappa}}.
\end{align}
Then the conclusion follows from Proposition~\ref{prop:VIXsmalltau}.
\end{proof}

\begin{proof}[Proof of Theorem~\ref{thm:OTM}]
Let us prove the asymptotics for the call option.
The derivation for the put option is similar and omitted here.

First, notice that we have the following decomposition:
\begin{align}
\hat{C}_{V}(K,T)
&=e^{-rT}\mathbb{E}\left[\left(\sqrt{\eta^{2}(S_{T})V_{T}+\kappa}-K\right)^{+}1_{N_{T}^{S}=0}1_{N_{T}^{C}=0}1_{N_{T}^{V}=0}\right]
\nonumber
\\
&\quad
+e^{-rT}\mathbb{E}\left[\left(\sqrt{\eta^{2}(S_{T})V_{T}+\kappa}-K\right)^{+}1_{N_{T}^{S}=1}1_{N_{T}^{C}=0}1_{N_{T}^{V}=0}\right]
\nonumber
\\
&\quad\quad
+e^{-rT}\mathbb{E}\left[\left(\sqrt{\eta^{2}(S_{T})V_{T}+\kappa}-K\right)^{+}1_{N_{T}^{S}=0}1_{N_{T}^{C}=1}1_{N_{T}^{V}=0}\right]
\nonumber
\\
&\quad\quad\quad
+e^{-rT}\mathbb{E}\left[\left(\sqrt{\eta^{2}(S_{T})V_{T}+\kappa}-K\right)^{+}1_{N_{T}^{S}=0}1_{N_{T}^{C}=0}1_{N_{T}^{V}=1}\right]
\nonumber
\\
&\quad\quad\quad\quad
+e^{-rT}\mathbb{E}\left[\left(\sqrt{\eta^{2}(S_{T})V_{T}+\kappa}-K\right)^{+}1_{N_{T}^{S}+N_{T}^{C}+N_{T}^{V}\geq 2}\right].
\end{align}

\textbf{Step 1.} By using the large deviations for the local-stochastic volatility model without jumps, we can show that
\begin{equation}\label{to:show:1}
e^{-rT}\mathbb{E}\left[\left(\sqrt{\eta^{2}(S_{T})V_{T}+\kappa}-K\right)^{+}1_{N_{T}^{S}=0}1_{N_{T}^{C}=0}1_{N_{T}^{V}=0}\right]=o(T),
\end{equation}
as $T\rightarrow 0$. Next, let us prove \eqref{to:show:1}.
Notice that
\begin{align*}
&\mathbb{E}\left[\left(\sqrt{\eta^{2}(S_{T})V_{T}+\kappa}-K\right)^{+}1_{N_{T}^{S}=0}1_{N_{T}^{C}=0}1_{N_{T}^{V}=0}\right]
\\
&=\mathbb{E}\left[\left(\sqrt{\eta^{2}(S_{T})V_{T}+\kappa}-K\right)^{+}\Big|N_{T}^{S}=N_{T}^{C}=N_{T}^{V}=0\right]
\mathbb{Q}\left(N_{T}^{S}=N_{T}^{C}=N_{T}^{V}=0\right)
\\
&\leq\mathbb{E}\left[\left(\sqrt{\eta^{2}(S_{T})V_{T}+\kappa}-K\right)^{+}\Big|N_{T}^{S}=N_{T}^{C}=N_{T}^{V}=0\right]
\\
&=\mathbb{E}\left[\left(\sqrt{\eta^{2}(\tilde{S}_{T})\tilde{V}_{T}+\kappa}-K\right)^{+}\right],
\end{align*}
where $(\tilde{S}_{t},\tilde{V}_{t})_{t\geq 0}$ is the local-stochastic volatility model without jumps:
\begin{eqnarray}
&& \frac{d\tilde{S}_t}{\tilde{S}_t} = \eta(\tilde{S}_t) \sqrt{\tilde{V}_t} dW_t +\left(r - q-\lambda^{S}\mu^{S}-\lambda^{C}\mu^{C,S}\right) dt\,, \\
&& \frac{d\tilde{V}_t}{\tilde{V}_t} =\sigma(\tilde{V}_t) dZ_t + \mu(\tilde{V}_t) dt \,,\nonumber
\end{eqnarray}
and it is proved in Theorem~4.1. in \cite{VIXpaper} that
\begin{equation}
\mathbb{E}\left[\left(\sqrt{\eta^{2}(\tilde{S}_{T})\tilde{V}_{T}+\kappa}-K\right)^{+}\right]=e^{-O(1/T)},
\end{equation}
as $T\rightarrow 0$. Therefore, we proved \eqref{to:show:1}.

\textbf{Step 2.} We can show that
\begin{align}
&\lim_{T\rightarrow 0}\frac{1}{T}e^{-rT}\mathbb{E}\left[\left(\sqrt{\eta^{2}(S_{T})V_{T}+\kappa}-K\right)^{+}1_{N_{T}^{S}=1}1_{N_{T}^{C}=0}1_{N_{T}^{V}=0}\right]
\nonumber
\\
&=\lambda^{S}\int_{\mathbb{R}}\left(\sqrt{\eta^{2}\left(S_{0}e^{x}\right)V_{0}+\kappa}-K\right)^{+}P^{S}(dx).\label{to:show:2}
\end{align}
Next, let us prove \eqref{to:show:2}.
We can compute that
\begin{align*}
&\mathbb{E}\left[\left(\sqrt{\eta^{2}(S_{T})V_{T}+\kappa}-K\right)^{+}1_{N_{T}^{S}=1}1_{N_{T}^{C}=0}1_{N_{T}^{V}=0}\right]
\\
&=\mathbb{E}\left[\left(\sqrt{\eta^{2}(S_{T})V_{T}+\kappa}-K\right)^{+}\Big|N_{T}^{S}=1,N_{T}^{C}=N_{T}^{V}=0\right]
\mathbb{Q}\left(N_{T}^{S}=1,N_{T}^{C}=N_{T}^{V}=0\right).
\end{align*}
Since $N_{t}^{S}$ is a Poisson process with constant intensity $\lambda^{S}$, 
conditional on $N_{T}^{S}=1$, the location of the jump is distributed
uniformly on the time interval $[0,T]$. 
Therefore, we have
\begin{align*}
&\mathbb{E}\left[\left(\sqrt{\eta^{2}(S_{T})V_{T}+\kappa}-K\right)^{+}\Big|N_{T}^{S}=1,N_{T}^{C}=N_{T}^{V}=0\right]
\\
&=\frac{1}{T}\int_{0}^{T}\int_{\mathbb{R}}\mathbb{E}\left[\mathbb{E}\left[\left(\sqrt{\eta^{2}(\tilde{S}_{T})\tilde{V}_{T}+\kappa}-K\right)^{+}\Big|\tilde{S}_{t}=\bar{S}_{t}e^{x},\tilde{V}_{t}=\bar{V}_{t}\right]\right]P^{S}(dx)dt,
\end{align*}
where $(\tilde{S}_{t},\tilde{V}_{t})_{t\geq 0}$ and $(\bar{S}_{t},\bar{V}_{t})_{t\geq 0}$ are two independent local-stochastic volatility models without jumps, i.e.
\begin{eqnarray}
&& \frac{d\tilde{S}_t}{\tilde{S}_t} = \eta(\tilde{S}_t) \sqrt{\tilde{V}_t} dW_t +\left(r - q-\lambda^{S}\mu^{S}-\lambda^{C}\mu^{C,S}\right) dt\,,\label{tilde:S:defn} \\
&& \frac{d\tilde{V}_t}{\tilde{V}_t} =\sigma(\tilde{V}_t) dZ_t + \mu(\tilde{V}_t) dt \,,\nonumber
\end{eqnarray}
with $\tilde{S}_{0}=S_{0}$, $\tilde{V}_{0}=V_{0}$ and
\begin{eqnarray}
&& \frac{d\bar{S}_t}{\bar{S}_t} = \eta(\bar{S}_t) \sqrt{\bar{V}_t} d\bar{W}_t +\left(r - q-\lambda^{S}\mu^{S}-\lambda^{C}\mu^{C,S}\right) dt\,,\label{bar:S:defn} \\
&& \frac{d\bar{V}_t}{\bar{V}_t} =\sigma(\bar{V}_t) d\bar{Z}_t + \mu(\bar{V}_t) dt \,,\nonumber
\end{eqnarray}
where $\bar{W}_{t},\bar{Z}_{t}$ are two standard Brownian motions
with correlation $\rho$, independent of $W_{t},Z_{t}$, 
with $\bar{S}_{0}=S_{0}$, $\bar{V}_{0}=V_{0}$.
By the time-homogeneity, we have
\begin{align*}
&\mathbb{E}\left[\mathbb{E}\left[\left(\sqrt{\eta^{2}(\tilde{S}_{T})\tilde{V}_{T}+\kappa}-K\right)^{+}\Big|\tilde{S}_{t}=\bar{S}_{t}e^{x},\tilde{V}_{t}=\bar{V}_{t}\right]\right]
\\
&=\mathbb{E}\left[\mathbb{E}\left[\left(\sqrt{\eta^{2}(\tilde{S}_{T-t})\tilde{V}_{T-t}+\kappa}-K\right)^{+}\Big|\tilde{S}_{0}=\bar{S}_{t}e^{x},\tilde{V}_{0}=\bar{V}_{t}\right]\right].
\end{align*}
We can show that
\begin{align}
&\frac{1}{T}\int_{0}^{T}\int_{\mathbb{R}}\mathbb{E}\left[\mathbb{E}\left[\left(\sqrt{\eta^{2}(\tilde{S}_{T-t})\tilde{V}_{T-t}+\kappa}-K\right)^{+}\Big|\tilde{S}_{0}=\bar{S}_{t}e^{x},\tilde{V}_{0}=\bar{V}_{t}\right]\right]P^{S}(dx)dt\nonumber
\\
&\rightarrow\int_{\mathbb{R}}\left(\sqrt{\eta^{2}\left(S_{0}e^{x}\right)V_{0}+\kappa}-K\right)^{+}P^{S}(dx),\label{to:show:convergence}    
\end{align}
as $T\rightarrow 0$.

Next, let us prove \eqref{to:show:convergence}.
First, we have
\begin{align}
&\Bigg|\mathbb{E}\left[\mathbb{E}\left[\left(\sqrt{\eta^{2}(\tilde{S}_{T-t})\tilde{V}_{T-t}+\kappa}-K\right)^{+}\Big|\tilde{S}_{0}=\bar{S}_{t}e^{x},\tilde{V}_{0}=\bar{V}_{t}\right]\right]
\nonumber
\\
&\qquad\qquad
-\mathbb{E}\left[\left(\sqrt{\eta^{2}(\bar{S}_{t}e^{x})\bar{V}_{t}+\kappa}-K\right)^{+}\right]\Bigg|
\nonumber
\\
&=\Bigg|\mathbb{E}\left[\mathbb{E}\left[\left(\sqrt{\eta^{2}(\tilde{S}_{T-t})\tilde{V}_{T-t}+\kappa}-K\right)^{+}\Big|\tilde{S}_{0}=\bar{S}_{t}e^{x},\tilde{V}_{0}=\bar{V}_{t}\right]\right]
\nonumber
\\
&\qquad\qquad
-\mathbb{E}\left[\mathbb{E}\left[\left(\sqrt{\eta^{2}(\tilde{S}_{0})\tilde{V}_{0}+\kappa}-K\right)^{+}\Big|\tilde{S}_{0}=\bar{S}_{t}e^{x},\tilde{V}_{0}=\bar{V}_{t}\right]\right]\Bigg|
\nonumber
\\
&\leq\mathbb{E}\left[\mathbb{E}\left[\left|\sqrt{\eta^{2}(\tilde{S}_{T-t})\tilde{V}_{T-t}+\kappa}-\sqrt{\eta^{2}(\tilde{S}_{0})\tilde{V}_{0}+\kappa}\right|\Big|\tilde{S}_{0}=\bar{S}_{t}e^{x},\tilde{V}_{0}=\bar{V}_{t}\right]\right]
\nonumber
\\
&\leq\frac{1}{2\sqrt{\kappa}}\mathbb{E}\left[\mathbb{E}\left[\left|\eta^{2}(\tilde{S}_{T-t})\tilde{V}_{T-t}-\eta^{2}(\tilde{S}_{0})\tilde{V}_{0}\right|\Big|\tilde{S}_{0}=\bar{S}_{t}e^{x},\tilde{V}_{0}=\bar{V}_{t}\right]\right],\label{first:ineq:to:use:1}
\end{align}
where we used the $1$-Lipschitzness of the map $x\mapsto x^{+}$
and $\frac{1}{2\sqrt{\kappa}}$-Lipschitzness of the map $x\mapsto\sqrt{x+\kappa}$
with $x\geq 0$.
By \eqref{first:ineq:to:use:1}, we have
\begin{align*}
&\Bigg|\mathbb{E}\left[\mathbb{E}\left[\left(\sqrt{\eta^{2}(\tilde{S}_{T-t})\tilde{V}_{T-t}+\kappa}-K\right)^{+}\Big|\tilde{S}_{0}=\bar{S}_{t}e^{x},\tilde{V}_{0}=\bar{V}_{t}\right]\right]
\\
&\qquad\qquad
-\mathbb{E}\left[\left(\sqrt{\eta^{2}(\bar{S}_{t}e^{x})\bar{V}_{t}+\kappa}-K\right)^{+}\right]\Bigg|
\\
&\leq\frac{1}{2\sqrt{\kappa}}\mathbb{E}\left[\mathbb{E}\left[\left|\eta^{2}(\tilde{S}_{T-t})\tilde{V}_{T-t}-\eta^{2}(\tilde{S}_{T-t})\tilde{V}_{0}\right|\Big|\tilde{S}_{0}=\bar{S}_{t}e^{x},\tilde{V}_{0}=\bar{V}_{t}\right]\right]
\\
&\qquad\qquad\qquad
+\frac{1}{2\sqrt{\kappa}}\mathbb{E}\left[\mathbb{E}\left[\left|\eta^{2}(\tilde{S}_{T-t})\tilde{V}_{0}-\eta^{2}(\tilde{S}_{0})\tilde{V}_{0}\right|\Big|\tilde{S}_{0}=\bar{S}_{t}e^{x},\tilde{V}_{0}=\bar{V}_{t}\right]\right].
\end{align*}

By Lemma~\ref{lem:1}, we have
\begin{align*}
&\frac{1}{2\sqrt{\kappa}}\mathbb{E}\left[\mathbb{E}\left[\left|\eta^{2}(\tilde{S}_{T-t})\tilde{V}_{T-t}-\eta^{2}(\tilde{S}_{T-t})\tilde{V}_{0}\right|\Big|\tilde{S}_{0}=\bar{S}_{t}e^{x},\tilde{V}_{0}=\bar{V}_{t}\right]\right]
\\
&\leq\frac{1}{2\sqrt{\kappa}}M_{\eta}^{2}\mathbb{E}\left[\mathbb{E}\left[\left|\tilde{V}_{T-t}-\tilde{V}_{0}\right|\Big|\tilde{S}_{0}=\bar{S}_{t}e^{x},\tilde{V}_{0}=\bar{V}_{t}\right]\right]
\\
&\leq\frac{1}{2\sqrt{\kappa}}M_{\eta}^{2}\left(e^{2(T-t)M_{\mu}}
e^{4(T-t)M_{\sigma}^{2}}+1-2e^{-(T-t)M_{\mu}-\frac{1}{2}(T-t) M_{\sigma}^{2}}\right)^{1/2}\mathbb{E}\left[\mathbb{E}\left[\tilde{V}_{0}\Big|\tilde{S}_{0}=\bar{S}_{t}e^{x},\tilde{V}_{0}=\bar{V}_{t}\right]\right]
\\
&=\frac{1}{2\sqrt{\kappa}}M_{\eta}^{2}\left(e^{2(T-t)M_{\mu}}
e^{4(T-t)M_{\sigma}^{2}}+1-2e^{-(T-t)M_{\mu}-\frac{1}{2}(T-t) M_{\sigma}^{2}}\right)^{1/2}\mathbb{E}\left[\bar{V}_{t}\right]
\\
&\leq\frac{1}{2\sqrt{\kappa}}M_{\eta}^{2}\left(e^{2TM_{\mu}}
e^{4TM_{\sigma}^{2}}+1-2e^{-TM_{\mu}-\frac{1}{2}T M_{\sigma}^{2}}\right)^{1/2}V_{0}e^{TM_{\mu}},
\end{align*}
for any $0\leq t\leq T$, where we applied Lemma~\ref{lem:2}.
Furthermore, 
\begin{align*}
&=\frac{1}{2\sqrt{\kappa}}\mathbb{E}\left[\mathbb{E}\left[\left|\eta^{2}(\tilde{S}_{T-t})\tilde{V}_{0}-\eta^{2}(\tilde{S}_{0})\tilde{V}_{0}\right|\Big|\tilde{S}_{0}=\bar{S}_{t}e^{x},\tilde{V}_{0}=\bar{V}_{t}\right]\right]
\\
&=\frac{1}{2\sqrt{\kappa}}\mathbb{E}\left[\mathbb{E}\left[\left|\eta^{2}(\tilde{S}_{T-t})-\eta^{2}(\tilde{S}_{0})\right|\Big|\tilde{S}_{0}=\bar{S}_{t}e^{x},\tilde{V}_{0}=\bar{V}_{t}\right]\right]\mathbb{E}[\bar{V}_{t}]
\\
&\leq\frac{1}{2\sqrt{\kappa}}V_{0}e^{TM_{\mu}}\mathbb{E}\left[\mathbb{E}\left[\left|\eta^{2}(\tilde{S}_{T-t})-\eta^{2}(\tilde{S}_{0})\right|\Big|\tilde{S}_{0}=\bar{S}_{t}e^{x},\tilde{V}_{0}=\bar{V}_{t}\right]\right],
\end{align*}
for any $0\leq t\leq T$, where we applied Lemma~\ref{lem:2}. Since $\eta(\cdot)$ is uniformly bounded, 
by bounded convergence theorem, 
\begin{equation}
\frac{1}{T}\int_{0}^{T}\int_{\mathbb{R}}\mathbb{E}\left[\mathbb{E}\left[\left|\eta^{2}(\tilde{S}_{T-t})-\eta^{2}(\tilde{S}_{0})\right|\Big|\tilde{S}_{0}=\bar{S}_{t}e^{x},\tilde{V}_{0}=\bar{V}_{t}\right]\right]P^{S}(dx)dt\rightarrow 0,
\end{equation}
as $T\rightarrow 0$. 
Hence, we conclude that
\begin{align*}
&\frac{1}{T}\int_{0}^{T}\int_{\mathbb{R}}\Bigg|\mathbb{E}\left[\mathbb{E}\left[\left(\sqrt{\eta^{2}(\tilde{S}_{T-t})\tilde{V}_{T-t}+\kappa}-K\right)^{+}\Big|\tilde{S}_{0}=\bar{S}_{t}e^{x},\tilde{V}_{0}=\bar{V}_{t}\right]\right]
\\
&\qquad\qquad\qquad\qquad\qquad
-\mathbb{E}\left[\left(\sqrt{\eta^{2}(\bar{S}_{t}e^{x})\bar{V}_{t}+\kappa}-K\right)^{+}\right]\Bigg|
P^{S}(dx)dt\rightarrow 0,
\end{align*}
as $T\rightarrow 0$.

Moreover, we have
\begin{align}
&\left|\mathbb{E}\left[\left(\sqrt{\eta^{2}(\bar{S}_{t}e^{x})\bar{V}_{t}+\kappa}-K\right)^{+}\right]
-\left(\sqrt{\eta^{2}(S_{0}e^{x})V_{0}+\kappa}-K\right)^{+}\right|
\nonumber
\\
&\leq
\mathbb{E}\left|\sqrt{\eta^{2}(\bar{S}_{t}e^{x})\bar{V}_{t}+\kappa}-\sqrt{\eta^{2}(S_{0}e^{x})V_{0}+\kappa}\right|
\nonumber
\\
&\leq
\frac{1}{2\sqrt{\kappa}}\mathbb{E}\left|\eta^{2}(\bar{S}_{t}e^{x})\bar{V}_{t}-\eta^{2}(S_{0}e^{x})V_{0}\right|,\label{first:ineq:to:use:2}
\end{align}
where we used the $1$-Lipschitzness of the map $x\mapsto x^{+}$
and $\frac{1}{2\sqrt{\kappa}}$-Lipschitzness of the map $x\mapsto\sqrt{x+\kappa}$
with $x\geq 0$.
By \eqref{first:ineq:to:use:2} and Lemma~\ref{lem:1}, we have
\begin{align}
&\left|\mathbb{E}\left[\left(\sqrt{\eta^{2}(\bar{S}_{t}e^{x})\bar{V}_{t}+\kappa}-K\right)^{+}\right]
-\left(\sqrt{\eta^{2}(S_{0}e^{x})V_{0}+\kappa}-K\right)^{+}\right|
\nonumber
\\
&\leq
\frac{1}{2\sqrt{\kappa}}\mathbb{E}\left|\eta^{2}(\bar{S}_{t}e^{x})\bar{V}_{t}-\eta^{2}(\bar{S}_{t}e^{x})V_{0}\right|
+\frac{V_{0}}{2\sqrt{\kappa}}\mathbb{E}\left|\eta^{2}(\bar{S}_{t}e^{x})-\eta^{2}(S_{0}e^{x})\right|
\nonumber
\\
&\leq
\frac{1}{2\sqrt{\kappa}}M_{\eta}^{2}\mathbb{E}\left|\bar{V}_{t}-V_{0}\right|
+\frac{V_{0}}{2\sqrt{\kappa}}\mathbb{E}\left|\eta^{2}(\bar{S}_{t}e^{x})-\eta^{2}(S_{0}e^{x})\right|
\nonumber
\\
&\leq
\frac{1}{2\sqrt{\kappa}}M_{\eta}^{2}V_{0}\left(e^{2TM_{\mu}}
e^{4TM_{\sigma}^{2}}+1-2e^{-TM_{\mu}-\frac{1}{2}T M_{\sigma}^{2}}\right)^{1/2}
+\frac{V_{0}}{2\sqrt{\kappa}}\mathbb{E}\left|\eta^{2}(\bar{S}_{t}e^{x})-\eta^{2}(S_{0}e^{x})\right|,
\end{align}
for any $0\leq t\leq T$.
Note that $\bar{S}_{t}\rightarrow S_{0}$ a.s. for any $0\leq t\leq T$
as $T\rightarrow 0$. Since $\eta(\cdot)$ is uniformly bounded,
by the bounded convergence theorem, we have
\begin{equation}
\frac{1}{T}\int_{0}^{T}\int_{\mathbb{R}}\mathbb{E}\left|\eta^{2}(\bar{S}_{t}e^{x})-\eta^{2}(S_{0}e^{x})\right|P^{S}(dx)dt
\rightarrow 0,
\end{equation}
as $T\rightarrow 0$.
Hence, we conclude that
\begin{equation}
\frac{1}{T}\int_{0}^{T}\int_{\mathbb{R}}\left|\mathbb{E}\left[\left(\sqrt{\eta^{2}(\bar{S}_{t}e^{x})\bar{V}_{t}+\kappa}-K\right)^{+}\right]
-\left(\sqrt{\eta^{2}(S_{0}e^{x})V_{0}+\kappa}-K\right)^{+}\right|P^{S}(dx)dt
\rightarrow 0,
\end{equation}
as $T\rightarrow 0$.

Finally, we can compute that
\begin{align*}
\mathbb{Q}\left(N_{T}^{S}=1,N_{T}^{C}=N_{T}^{V}=0\right)
=\lambda^{S}Te^{-\lambda^{S}T-\lambda^{C}T-\lambda^{V}T}
=\lambda^{S}T+O(T^{2}),
\end{align*}
as $T\rightarrow 0$.
Hence, we proved \eqref{to:show:2}.

\textbf{Step 3.} We can show that
\begin{align}
&\lim_{T\rightarrow 0}\frac{1}{T}e^{-rT}\mathbb{E}\left[\left(\sqrt{\eta^{2}(S_{T})V_{T}+\kappa}-K\right)^{+}1_{N_{T}^{S}=0}1_{N_{T}^{C}=1}1_{N_{T}^{V}=0}\right]
\nonumber
\\
&=\lambda^{C}\int_{\mathbb{R}}\int_{\mathbb{R}}\left(\sqrt{\eta^{2}\left(S_{0}e^{x}\right)V_{0}e^{y}+\kappa}-K\right)^{+}P^{C}(dx,dy).\label{to:show:3}
\end{align}
Next, let us prove \eqref{to:show:3}.
We can compute that
\begin{align*}
&\mathbb{E}\left[\left(\sqrt{\eta^{2}(S_{T})V_{T}+\kappa}-K\right)^{+}1_{N_{T}^{S}=0}1_{N_{T}^{C}=1}1_{N_{T}^{V}=0}\right]
\\
&=\mathbb{E}\left[\left(\sqrt{\eta^{2}(S_{T})V_{T}+\kappa}-K\right)^{+}\Big|N_{T}^{C}=1,N_{T}^{S}=N_{T}^{V}=0\right]
\mathbb{Q}\left(N_{T}^{C}=1,N_{T}^{S}=N_{T}^{V}=0\right).
\end{align*}
Furthermore, we can compute that
\begin{align*}
&\mathbb{E}\left[\left(\sqrt{\eta^{2}(S_{T})V_{T}+\kappa}-K\right)^{+}\Big|N_{T}^{C}=1,N_{T}^{S}=N_{T}^{V}=0\right]
\\
&=\frac{1}{T}\int_{0}^{T}\int_{\mathbb{R}}\int_{\mathbb{R}}\mathbb{E}\left[\mathbb{E}\left[\left(\sqrt{\eta^{2}(\tilde{S}_{T})\tilde{V}_{T}+\kappa}-K\right)^{+}\Big|\tilde{S}_{t}=\bar{S}_{t}e^{x},\tilde{V}_{t}=\bar{V}_{t}e^{y}\right]\right]P^{C}(dx,dy)dt,
\end{align*}
where $(\tilde{S}_{t},\tilde{V}_{t})_{t\geq 0}$ and $(\bar{S}_{t},\bar{V}_{t})_{t\geq 0}$ are two independent local-stochastic volatility models without jump given in 
\eqref{tilde:S:defn} and \eqref{bar:S:defn}.

Similar as in \textbf{Step 2}, we can show that 
\begin{align}
&\frac{1}{T}\int_{0}^{T}\int_{\mathbb{R}}\int_{\mathbb{R}}\mathbb{E}\left[\mathbb{E}\left[\left(\sqrt{\eta^{2}(\tilde{S}_{T-t})\tilde{V}_{T-t}+\kappa}-K\right)^{+}\Big|\tilde{S}_{0}=\bar{S}_{t}e^{x},\tilde{V}_{0}=\bar{V}_{t}e^{y}\right]\right]P^{C}(dx,dy)dt\nonumber
\\
&\rightarrow\int_{\mathbb{R}}\int_{\mathbb{R}}\left(\sqrt{\eta^{2}\left(S_{0}e^{x}\right)V_{0}e^{y}+\kappa}-K\right)^{+}P^{C}(dx,dy),
\end{align}
as $T\rightarrow 0$, and 
\begin{align*}
\mathbb{Q}\left(N_{T}^{C}=1,N_{T}^{S}=N_{T}^{V}=0\right)
=\lambda^{C}Te^{-\lambda^{S}T-\lambda^{C}T-\lambda^{V}T}
=\lambda^{C}T+O(T^{2}),
\end{align*}
as $T\rightarrow 0$, which yields \eqref{to:show:3}.

\textbf{Step 4.} We can show that
\begin{align}
&\lim_{T\rightarrow 0}\frac{1}{T}e^{-rT}\mathbb{E}\left[\left(\sqrt{\eta^{2}(S_{T})V_{T}+\kappa}-K\right)^{+}1_{N_{T}^{S}=0}1_{N_{T}^{C}=0}1_{N_{T}^{V}=1}\right]
\nonumber
\\
&=\lambda^{V}\int_{\mathbb{R}}\left(\sqrt{\eta^{2}\left(S_{0}\right)V_{0}e^{x}+\kappa}-K\right)^{+}P^{V}(dx).\label{to:show:4}
\end{align}
Next, let us prove \eqref{to:show:4}.
We can compute that
\begin{align*}
&\mathbb{E}\left[\left(\sqrt{\eta^{2}(S_{T})V_{T}+\kappa}-K\right)^{+}1_{N_{T}^{S}=0}1_{N_{T}^{C}=0}1_{N_{T}^{V}=1}\right]
\\
&=\mathbb{E}\left[\left(\sqrt{\eta^{2}(S_{T})V_{T}+\kappa}-K\right)^{+}\Big|N_{T}^{V}=1,N_{T}^{S}=N_{T}^{C}=0\right]
\mathbb{Q}\left(N_{T}^{V}=1,N_{T}^{S}=N_{T}^{C}=0\right).
\end{align*}
Furthermore, we can compute that
\begin{align*}
&\mathbb{E}\left[\left(\sqrt{\eta^{2}(S_{T})V_{T}+\kappa}-K\right)^{+}\Big|N_{T}^{V}=1,N_{T}^{S}=N_{T}^{C}=0\right]
\\
&=\frac{1}{T}\int_{0}^{T}\int_{\mathbb{R}}\mathbb{E}\left[\mathbb{E}\left[\left(\sqrt{\eta^{2}(\tilde{S}_{T})\tilde{V}_{T}+\kappa}-K\right)^{+}\Big|\tilde{S}_{t}=\bar{S}_{t},\tilde{V}_{t}=\bar{V}_{t}e^{x}\right]\right]P^{V}(dx)dt,
\end{align*}
where $(\tilde{S}_{t},\tilde{V}_{t})_{t\geq 0}$ and $(\bar{S}_{t},\bar{V}_{t})_{t\geq 0}$ are two independent local-stochastic volatility models without jump given in 
\eqref{tilde:S:defn} and \eqref{bar:S:defn}.

Similar as in \textbf{Step 2}, we can show that 
\begin{align}
&\frac{1}{T}\int_{0}^{T}\int_{\mathbb{R}}\mathbb{E}\left[\mathbb{E}\left[\left(\sqrt{\eta^{2}(\tilde{S}_{T-t})\tilde{V}_{T-t}+\kappa}-K\right)^{+}\Big|\tilde{S}_{0}=\bar{S}_{t},\tilde{V}_{0}=\bar{V}_{t}e^{x}\right]\right]P^{V}(dx)dt\nonumber
\\
&\rightarrow\int_{\mathbb{R}}\int_{\mathbb{R}}\left(\sqrt{\eta^{2}\left(S_{0}\right)V_{0}e^{x}+\kappa}-K\right)^{+}P^{V}(dx),
\end{align}
as $T\rightarrow 0$, and 
\begin{align*}
\mathbb{Q}\left(N_{T}^{V}=1,N_{T}^{S}=N_{T}^{C}=0\right)
=\lambda^{V}Te^{-\lambda^{S}T-\lambda^{C}T-\lambda^{V}T}
=\lambda^{V}T+O(T^{2}),
\end{align*}
as $T\rightarrow 0$, which yields \eqref{to:show:4}.

\textbf{Step 5.} We can show that
\begin{equation}
e^{-rT}\mathbb{E}\left[\left(\sqrt{\eta^{2}(S_{T})V_{T}+\kappa}-K\right)^{+}1_{N_{T}^{S}+N_{T}^{C}+N_{T}^{V}\geq 2}\right]=o(T),\label{to:show:5}
\end{equation}
as $T\rightarrow 0$. Next, let us prove \eqref{to:show:5}.
By H\"{o}lder's inequality, for any $p,q>1$ with $\frac{1}{p}+\frac{1}{q}=1$, we have
\begin{align*}
&\mathbb{E}\left[\left(\sqrt{\eta^{2}(S_{T})V_{T}+\kappa}-K\right)^{+}1_{N_{T}^{S}+N_{T}^{C}+N_{T}^{V}\geq 2}\right]
\\
&\leq\left(\mathbb{E}\left[\left|\sqrt{\eta^{2}(S_{T})V_{T}+\kappa}-K\right|^{p}\right]\right)^{1/p}\left(\mathbb{E}\left[1_{N_{T}^{S}+N_{T}^{C}+N_{T}^{V}\geq 2}\right]\right)^{1/q}
\\
&\leq\left(\mathbb{E}\left[\left(\sqrt{\eta^{2}(S_{T})V_{T}+\kappa}+K\right)^{p}\right]\right)^{1/p}\left(\mathbb{Q}\left(N_{T}^{S}+N_{T}^{C}+N_{T}^{V}\geq 2\right)\right)^{1/q}.
\end{align*}
By Lemma~\ref{lem:p:bound}, there exists some $p>2$ such that $\mathbb{E}\left[\left(\sqrt{\eta^{2}(S_{T})V_{T}+\kappa}+K\right)^{p}\right]=O(1)$ as $T\rightarrow 0$
and 
\begin{align*}
&\mathbb{Q}\left(N_{T}^{S}+N_{T}^{C}+N_{T}^{V}\geq 2\right)
\\
&=1-\mathbb{Q}\left(N_{T}^{S}+N_{T}^{C}+N_{T}^{V}=0\right)
-\mathbb{Q}\left(N_{T}^{S}+N_{T}^{C}+N_{T}^{V}=1\right)
\\
&=1-e^{-\lambda^{S}T-\lambda^{C}T-\lambda^{V}T}
-\lambda^{S}Te^{-\lambda^{S}T-\lambda^{C}T-\lambda^{V}T}
-\lambda^{C}Te^{-\lambda^{S}T-\lambda^{C}T-\lambda^{V}T}
-\lambda^{V}Te^{-\lambda^{S}T-\lambda^{C}T-\lambda^{V}T}
\\
&=O(T^{2}),
\end{align*}
as $T\rightarrow 0$.  
By taking $1<q<2$ (for some $p>2$), we proved \eqref{to:show:5}.

By combining \textbf{Step 1}-\textbf{Step 5}, we complete the proof.
\end{proof}


\begin{proof}[Proof of Theorem~\ref{thm:ATM}]
Let us prove the asymptotics for the call option. 
The derivation for the put option is similar and omitted here.
First, notice that we have the following decomposition:
\begin{align}
\hat{C}_{V}(K,T)
&=e^{-rT}\mathbb{E}\left[\left(\sqrt{\eta^{2}(S_{T})V_{T}+\kappa}-K\right)^{+}1_{N_{T}^{S}+N_{T}^{C}+N_{T}^{V}=0}\right]
\nonumber
\\
&\quad
+e^{-rT}\mathbb{E}\left[\left(\sqrt{\eta^{2}(S_{T})V_{T}+\kappa}-K\right)^{+}1_{N_{T}^{S}+N_{T}^{C}+N_{T}^{V}\geq 1}\right].
\end{align}

\textbf{Step 1.} We can show that
\begin{equation}\label{to:show:1:ATM}
e^{-rT}\mathbb{E}\left[\left(\sqrt{\eta^{2}(S_{T})V_{T}+\kappa}-K\right)^{+}1_{N_{T}^{S}+N_{T}^{C}+N_{T}^{V}\geq 1}\right]=o\left(\sqrt{T}\right),    
\end{equation}
as $T\rightarrow 0$. Next, let us prove \eqref{to:show:1:ATM}.
By H\"{o}lder's inequality, for any $p,q>1$ with $\frac{1}{p}+\frac{1}{q}=1$, we have
\begin{align*}
&\mathbb{E}\left[\left(\sqrt{\eta^{2}(S_{T})V_{T}+\kappa}-K\right)^{+}1_{N_{T}^{S}+N_{T}^{C}+N_{T}^{V}\geq 1}\right]
\\
&\leq\left(\mathbb{E}\left[\left|\sqrt{\eta^{2}(S_{T})V_{T}+\kappa}-K\right|^{p}\right]\right)^{1/p}\left(\mathbb{E}\left[1_{N_{T}^{S}+N_{T}^{C}+N_{T}^{V}\geq 1}\right]\right)^{1/q}
\\
&\leq\left(\mathbb{E}\left[\left(\sqrt{\eta^{2}(S_{T})V_{T}+\kappa}+K\right)^{p}\right]\right)^{1/p}\left(\mathbb{Q}\left(N_{T}^{S}+N_{T}^{C}+N_{T}^{V}\geq 1\right)\right)^{1/q}.
\end{align*}
By Lemma~\ref{lem:p:bound}, we have $\mathbb{E}\left[\left(\sqrt{\eta^{2}(S_{T})V_{T}+\kappa}+K\right)^{p}\right]=O(1)$ as $T\rightarrow 0$
and 
\begin{equation*}
\mathbb{Q}\left(N_{T}^{S}+N_{T}^{C}+N_{T}^{V}\geq 1\right)
=1-\mathbb{Q}\left(N_{T}^{S}+N_{T}^{C}+N_{T}^{V}=0\right)
=1-e^{-\lambda^{S}T-\lambda^{C}T-\lambda^{V}T}=O(T),
\end{equation*}
as $T\rightarrow 0$. By taking $1<q<2$, we proved \eqref{to:show:1:ATM}.

\textbf{Step 2.} Conditional on $N_{T}^{S}+N_{T}^{C}+N_{T}^{V}=0$, 
$(S_{t},V_{t})$ is the local-stochastic volatility model without jumps,
and we can show that
\begin{align}
&\Bigg|e^{-rT}\mathbb{E}\left[\left(\sqrt{\eta^{2}(S_{T})V_{T}+\kappa}-K\right)^{+}1_{N_{T}^{S}+N_{T}^{C}+N_{T}^{V}=0}\right]
\nonumber
\\
&\qquad
-\mathbb{E}\left[\left(\sqrt{\eta^{2}(\hat{S}_{T})\hat{V}_{T}+\kappa}-K\right)^{+}\right]\Bigg|=o\left(\sqrt{T}\right),\label{to:show:2:ATM}
\end{align}
as $T\rightarrow 0$, where
\begin{align}
&\hat{S}_{T}=S_{0}+\eta(S_{0})S_{0}\sqrt{V_{0}}\left(\sqrt{1-\rho^{2}}B_{T}+\rho Z_{T}\right)\,,
\\
&\hat{V}_{T}=V_{0}+\sigma(V_{0})V_{0}Z_{T}\,,
\end{align}
where $B_{t},Z_{t}$ are independent standard Brownian motions.
Next, let us prove \eqref{to:show:2:ATM}.
Notice that
\begin{align*}
&\mathbb{E}\left[\left(\sqrt{\eta^{2}(S_{T})V_{T}+\kappa}-K\right)^{+}1_{N_{T}^{S}+N_{T}^{C}+N_{T}^{V}=0}\right]
\\
&=\mathbb{E}\left[\left(\sqrt{\eta^{2}(S_{T})V_{T}+\kappa}-K\right)^{+}\Big|N_{T}^{S}+N_{T}^{C}+N_{T}^{V}=0\right]
\mathbb{Q}\left(N_{T}^{S}+N_{T}^{C}+N_{T}^{V}=0\right)
\\
&=\mathbb{E}\left[\left(\sqrt{\eta^{2}(\tilde{S}_{T})\tilde{V}_{T}+\kappa}-K\right)^{+}\right]
\mathbb{Q}\left(N_{T}^{S}+N_{T}^{C}+N_{T}^{V}=0\right),
\end{align*}
where $(\tilde{S}_{t},\tilde{V}_{t})_{t\geq 0}$ is the local-stochastic volatility model without jumps:
\begin{eqnarray}
&& \frac{d\tilde{S}_t}{\tilde{S}_t} = \eta(\tilde{S}_t) \sqrt{\tilde{V}_t} dW_t + \left(r - q-\lambda^{S}\mu^{S}-\lambda^{C}\mu^{C,S}\right) dt \,, \\
&& \frac{d\tilde{V}_t}{\tilde{V}_t} =\sigma(\tilde{V}_t) dZ_t + \mu(\tilde{V}_t) dt \,.\nonumber
\end{eqnarray}
Moreover, 
\begin{align*}
&\left|\mathbb{E}\left[\left(\sqrt{\eta^{2}(\tilde{S}_{T})\tilde{V}_{T}+\kappa}-K\right)^{+}\right]
\mathbb{Q}\left(N_{T}^{S}+N_{T}^{C}+N_{T}^{V}=0\right)
-\mathbb{E}\left[\left(\sqrt{\eta^{2}(\tilde{S}_{T})\tilde{V}_{T}+\kappa}-K\right)^{+}\right]\right|
\\
&=\mathbb{E}\left[\left(\sqrt{\eta^{2}(\tilde{S}_{T})\tilde{V}_{T}+\kappa}-K\right)^{+}\right]\mathbb{Q}\left(N_{T}^{S}+N_{T}^{C}+N_{T}^{V}\geq 1\right),
\end{align*}
where one can show that $\mathbb{E}\left[\left(\sqrt{\eta^{2}(\tilde{S}_{T})\tilde{V}_{T}+\kappa}-K\right)^{+}\right]=O(1)$
as $T\rightarrow 0$ and
\begin{equation}
\mathbb{Q}\left(N_{T}^{S}+N_{T}^{C}+N_{T}^{V}\geq 1\right)
=1-e^{-\lambda^{S}T-\lambda^{C}T-\lambda^{V}T}=O(T),
\end{equation}
as $T\rightarrow 0$. Moreover, by Theorem~4.2 in \cite{VIXpaper}, 
\begin{align}
\Bigg|e^{-rT}\mathbb{E}\left[\left(\sqrt{\eta^{2}(\tilde{S}_{T})\tilde{V}_{T}+\kappa}-K\right)^{+}\right]
-\mathbb{E}\left[\left(\sqrt{\eta^{2}(\hat{S}_{T})\hat{V}_{T}+\kappa}-K\right)^{+}\right]\Bigg|=o\left(\sqrt{T}\right),
\end{align}
which completes the proof of \eqref{to:show:2:ATM}.

\textbf{Step 3.} Next, we can compute that
\begin{align}
&\mathbb{E}\left[\left(\sqrt{\eta^{2}(\hat{S}_{T})\hat{V}_{T}+\kappa}-K\right)^{+}\right]\nonumber
\\
&=\mathbb{E}\left[\left(\sqrt{\eta^{2}(S_{0})V_{0}+\kappa}
+\frac{\eta^{2}(\hat{S}_{T})\hat{V}_{T}-\eta^{2}(S_{0})V_{0}}{2\sqrt{\eta^{2}(S_{0})V_{0}+\kappa}}-K\right)^{+}\right]
+o\left(\sqrt{T}\right)
\nonumber
\\
&=\frac{\sqrt{\eta^{2}(S_{0})V_{0}}}{\sqrt{\eta^{2}(S_{0})V_{0}+\kappa}}\mathbb{E}\left[\left(\frac{\eta^{2}(\hat{S}_{T})\hat{V}_{T}-\eta^{2}(S_{0})V_{0}}{2\sqrt{\eta^{2}(S_{0})V_{0}}}\right)^{+}\right]
+o\left(\sqrt{T}\right)
\end{align}
as $T\rightarrow 0$.
It is proved in the proof of Theorem~4.2 in \cite{VIXpaper} that
\begin{align}
&\Bigg|\mathbb{E}\left[\left(\frac{\eta^{2}(\hat{S}_{T})\hat{V}_{T}-\eta^{2}(S_{0})V_{0}}{2\sqrt{\eta^{2}(S_{0})V_{0}}}\right)^{+}\right]
\nonumber
\\
&-\mathbb{E}\left[\left(\eta(S_{0})\frac{1}{2\sqrt{V_{0}}}\sigma(V_{0})V_{0}Z_{T}
+\sqrt{V_{0}}\eta'(S_{0})\eta(S_{0})S_{0}\sqrt{V_{0}}\left(\sqrt{1-\rho^{2}}B_{T}+\rho Z_{T}\right)\right)^{+}\right]\Bigg|
\nonumber
\\
&=o\left(\sqrt{T}\right),
\end{align}
as $T\rightarrow 0$.

\textbf{Step 4.} Finally, we recall from the proof of Theorem~4.2 in \cite{VIXpaper} that
\begin{align}
&\mathbb{E}\left[\left(\eta(S_{0})\frac{1}{2\sqrt{V_{0}}}\sigma(V_{0})V_{0}Z_{T}
+\sqrt{V_{0}}\eta'(S_{0})\eta(S_{0})S_{0}\sqrt{V_{0}}\left(\sqrt{1-\rho^{2}}B_{T}+\rho Z_{T}\right)\right)^{+}\right]
\nonumber
\\
&=\sqrt{\left((\eta(S_{0})\frac{1}{2}\sigma(V_{0})\sqrt{V_{0}}+\eta'(S_{0})\eta(S_{0})S_{0}V_{0}\rho\right)^{2}+\left(\eta'(S_{0})\eta(S_{0})S_{0}V_{0}\sqrt{1-\rho^{2}}\right)^{2}}
\sqrt{T}\frac{1}{\sqrt{2\pi}}.
\end{align}
By combining \textbf{Step 1}-\textbf{Step 4}, we complete the proof.
\end{proof}


\begin{proof}[Proof of Theorem~\ref{thm:OTM:European}]
Let us prove the asymptotics for the call option.
The derivation for the put option is similar and omitted here.
The proof is very similar to the proof of Theorem~\ref{thm:OTM}.
As a result, we only provide an outline of the proof and omit
the details here.

First, notice that we have the following decomposition:
\begin{align}
C_{E}(K,T)
&=e^{-rT}\mathbb{E}\left[\left(S_{T}-K\right)^{+}1_{N_{T}^{S}+N_{T}^{C}+N_{T}^{V}=0}\right]
\nonumber
\\
&\qquad
+e^{-rT}\mathbb{E}\left[\left(S_{T}-K\right)^{+}1_{N_{T}^{S}+N_{T}^{C}+N_{T}^{V}=1}\right]
\nonumber
\\
&\qquad\qquad
+e^{-rT}\mathbb{E}\left[\left(S_{T}-K\right)^{+}1_{N_{T}^{S}+N_{T}^{C}+N_{T}^{V}\geq 2}\right].
\end{align}

\textbf{Step 1.} We can show that
\begin{equation}
e^{-rT}\mathbb{E}\left[\left(S_{T}-K\right)^{+}1_{N_{T}^{S}+N_{T}^{C}+N_{T}^{V}=0}\right]=o(T),
\end{equation}
as $T\rightarrow 0$.

\textbf{Step 2.} We can show that
\begin{align}
&\lim_{T\rightarrow 0}\frac{1}{T}e^{-rT}\mathbb{E}\left[\left(S_{T}-K\right)^{+}1_{N_{T}^{S}+N_{T}^{C}+N_{T}^{V}=1}\right]
\nonumber
\\
&=\lambda^{S}\int_{\mathbb{R}}\left(S_{0}e^{x}-K\right)^{+}P^{S}(dx)
+\lambda^{C}\int_{\mathbb{R}}\int_{\mathbb{R}}\left(S_{0}e^{x}-K\right)^{+}P^{C}(dx,dy).
\end{align}

\textbf{Step 3.} We can show that
\begin{equation}
e^{-rT}\mathbb{E}\left[\left(S_{T}-K\right)^{+}1_{N_{T}^{S}+N_{T}^{C}+N_{T}^{V}\geq 2}\right]=o(T),
\end{equation}
as $T\rightarrow 0$.

By combining \textbf{Step 1}-\textbf{Step 3}, we complete the proof.
\end{proof}


\begin{proof}[Proof of Theorem~\ref{thm:ATM:European}]
Let us prove the asymptotics for the call option. 
The derivation for the put option is similar and omitted here.
The proof is very similar to the proof of Theorem~\ref{thm:ATM}.
As a result, we only provide an outline of the proof and omit
the details here.
First, notice that we have the following decomposition:
\begin{align}
C_{E}(K,T)
=e^{-rT}\mathbb{E}\left[\left(S_{T}-K\right)^{+}1_{N_{T}^{S}+N_{T}^{C}+N_{T}^{V}=0}\right]
+e^{-rT}\mathbb{E}\left[\left(S_{T}-K\right)^{+}1_{N_{T}^{S}+N_{T}^{C}+N_{T}^{V}\geq 1}\right].
\end{align}

\textbf{Step 1.} We can show that
\begin{equation}
e^{-rT}\mathbb{E}\left[\left(S_{T}-K\right)^{+}1_{N_{T}^{S}+N_{T}^{C}+N_{T}^{V}\geq 1}\right]=o\left(\sqrt{T}\right),    
\end{equation}
as $T\rightarrow 0$. 

\textbf{Step 2.} Conditional on $N_{T}^{S}+N_{T}^{C}+N_{T}^{V}=0$, 
$(S_{t},V_{t})$ is the local-stochastic volatility model without jumps,
and we can show that
\begin{align}
\Bigg|e^{-rT}\mathbb{E}\left[\left(S_{T}-K\right)^{+}1_{N_{T}^{S}+N_{T}^{C}+N_{T}^{V}=0}\right]
-\mathbb{E}\left[\left(\hat{S}_{T}-K\right)^{+}\right]\Bigg|=o\left(\sqrt{T}\right),
\end{align}
as $T\rightarrow 0$, where
\begin{align}
&\hat{S}_{T}=S_{0}+\eta(S_{0})S_{0}\sqrt{V_{0}}\left(\sqrt{1-\rho^{2}}B_{T}+\rho Z_{T}\right)\,,
\\
&\hat{V}_{T}=V_{0}+\sigma(V_{0})V_{0}Z_{T}\,,
\end{align}
where $B_{t},Z_{t}$ are independent standard Brownian motions.

\textbf{Step 3.} Finally, by the proof of Theorem~3.2 in \cite{VIXpaper}, we can compute that
\begin{align}
\mathbb{E}\left[\left(\hat{S}_{T}-K\right)^{+}\right]=\frac{\eta(S_{0})\sqrt{V_{0}}}{\sqrt{2\pi}}\sqrt{T}.
\end{align}
By combining \textbf{Step 1}-\textbf{Step 3}, we complete the proof.
\end{proof}


\section{Proofs of the Results in Section~\ref{sec:models}}

\begin{proof}[Proof of Corollary~\ref{cor:Eraker:European}]

a) By applying Theorem~\ref{thm:OTM:European}, we can compute that
\begin{align}
a_{E,C}(K) &:= \lim_{T\to 0}
\frac{C_{E}(K,T)}{T}\nonumber 
\\
&= 
\lambda^{C} \int_0^\infty \left(S_0 e^x - K\right)^+ p^{C}(x,y) dx dy
+ \lambda_S \int_0^\infty \left( S_0 e^x - K\right)^+ P^S(dx) \,.\label{two:integrals}
\end{align}

The first integral in \eqref{two:integrals} is evaluated as
\begin{align}
\int_0^\infty \left(S_0 e^x - K\right)^+ p^{C}(x,y) dx dy = \eta_{C,V} \int_0^\infty e^{-\eta_{C,V} y}
c_{\mathrm{BS}}\left(K,S_0e^{\mu^{S}+\rho_J y}, \sigma_{C,S}\right) dy\,, 
\end{align}
where $c_{\mathrm{BS}}(K,F,v) = 
F N\left(-\frac{1}{v} \log(K/F)+\frac12 v\right) - 
K N\left(-\frac{1}{v} \log(K/F) - \frac12 v\right)$. 

The second integral in \eqref{two:integrals} is the contribution from the idiosyncratic jumps in $S$ and is evaluated in closed form as
\begin{align}
\int_0^\infty \left( S_0 e^x - K\right)^+ P_S(dx) =
S_0 e^{\alpha_S + \frac12 \sigma_S^2} \Phi((-k + \alpha_S + \sigma_S^2)/\sigma_S) - K \Phi((-k+\alpha_S)/\sigma_S)\,.
\end{align}

b) The derivation for put options is similar
and hence omitted here. 
This completes the proof.
\end{proof}

\begin{proof}[Proof of Corollary~\ref{cor:Eraker:VIX}]

a) By applying Theorem~\ref{thm:OTM}, 
we can compute that
\begin{align}\label{two:integrals:2}
a_{V,C}(K) := \lim_{T\to 0}
\frac{C_V(K,T)}{T} &= \lambda^{C} \int_0^\infty \left(\sqrt{V_0 e^y + \kappa} - K\right)^+ p^{C}(x,y) dx dy \\
&\qquad\qquad+ \lambda^{V} \int_0^\infty \left(\sqrt{V_0 e^x + \kappa} - K\right)^+ p^{V}(y) dy\,.\nonumber
\end{align}

The two integrals in \eqref{two:integrals:2} have the same form. The first one in \eqref{two:integrals:2} is evaluated as
\begin{align}
\int_0^\infty \left(\sqrt{V_0 e^y + \kappa} - K\right)^+ p^{C}(x,y) dx dy
&=  \eta_{C,V} \int_0^\infty \left(\sqrt{V_0 e^y + \kappa} - K\right)^+ e^{-\eta_{C,V} y} dy\nonumber\\
&=   e^{-\eta_{C,V} y_0} \left[\eta_{C,V} I_1\left(\kappa, V_0 e^{y_0}, \eta_{C,V}\right) - K 
\right] \,,
\end{align}
where $y_0(K) = \log\frac{K^2-\kappa}{V_0}$. The integral is evaluated in terms of the function
$I_1(a,b,\eta)=\frac{2\sqrt{b}}{2\eta-1} {}_2F_1(-1/2, -1/2+\eta, 1/2+\eta;-a/b)$, where ${}_2 F_1(a,b,c;z)$ is the Gauss hypergeometric function. 

The second integral in \eqref{two:integrals:2} is evaluated in a similar way with the result
\begin{align}
\int_0^\infty \left(\sqrt{V_0 e^x + \kappa} - K\right)^+ p^{V}(y) dy =
e^{-\eta_{V} y_0} \left[\eta_{V} I_1\left(\kappa, V_0 e^{y_0}, \eta_{V}\right) - K 
\right] \,.
\end{align}


The contribution of the idiosyncratic jumps in $S$ vanishes since the integrand vanishes for OTM VIX call options
\begin{align}
S \mbox{ jumps}: \int_0^\infty \Big( \sqrt{V_0 + \kappa} - K \Big)^+ P^s(dx) =
\Big( \sqrt{V_0 + \kappa} - K \Big)^+ = 0\,.
\end{align}

b) The short maturity asymptotics for OTM VIX put options in this model vanishes
\begin{align}
a_{V,P}(K) := \lim_{T\to 0}
\frac{P_V(K,T)}{T} &= \lambda^{C} \int_0^\infty \left(K - \sqrt{V_0 e^y + \kappa}\right)^+ p^{C}(x,y) dx dy \\
&\qquad\qquad+ \lambda^V \int_0^\infty \left(K - \sqrt{V_0 e^y + \kappa}\right)^+ p^{V}(y) dy \nonumber \\
&= \lambda^{C} \eta_{C,V} \int_0^\infty \left(K - \sqrt{V_0 e^y + \kappa}\right)^+ e^{-\eta_{C,V} y} dy 
\nonumber \\
&\qquad\qquad+ \lambda^{V} \eta_{V} \int_0^\infty \left(K - \sqrt{V_0 e^y + \kappa}\right)^+ e^{-\eta_{V} y} dy  =0 \,.
\nonumber
\end{align}
This completes the proof.
\end{proof}


\begin{proof}[Proof of Corollary~\ref{cor:Kou:European}]

a) For call options with $k>0$, the application of Theorem~\ref{thm:OTM:European}
leads to the evaluation of the integral
\begin{align}\label{aECp}
a_{E,C}(K) &=  \lambda^{C}  \int_{\mathbb{R}} \int_{\mathbb{R}} (S_0 e^x - K)^+ p^{C}(x,y)  dx dy 
+ \lambda^S \int_{\mathbb{R}} (S_0 e^x - K)^+ p^S(x) dx \\
&= 
\lambda^{C} \int_0^\infty (S_0 e^x - K)^+ p(x) dx
+ \lambda^S \int_{\mathbb{R}} (S_0 e^x - K)^+ p^S(x) dx \,,\nonumber
\end{align}
where we denoted $p(x) := \int_0^\infty p^{C}(x,y) dy$ the marginal distribution of the common jump size for the underlying asset.

We start by evaluating the marginal density $p(x)$. The ranges of integration over $y$ are constrained by the indicator functions in $p^{C}(x,y)$ and depend on the sign of the correlation $\rho_J$. We give next the result for $\rho_J<0$ which is relevant for our application.

i) For $x>\mu^{S}$ we have
\begin{align}\label{p:x:1}
p(x) = c_R \eta_{C,S} e^{-\eta_{C,S} x}\,,\quad
c_R = \alpha \frac{\eta_{C,V}}{\eta_{C,V} + \eta_{C,S} |\rho_J |} e^{\eta_{C,S} \mu^{S}} \,.
\end{align}

ii) For $x<\mu^{S}$ we have
\begin{align}\label{p:x:2}
p(x) = c_{1L} \frac{\eta_{C,V}}{|\rho_J |} e^{ \frac{\eta_{C,V}}{|\rho_J |}  x} +
c_{2L} \eta_{C,S} e^{\eta_{C,S} x}\,,
\end{align}
where 
\begin{align}
c_{1L} &= \alpha \frac{\eta_{C,S} |\rho_J|}{\eta_{C,V} + \eta_{C,S} |\rho_J|} e^{-\frac{\eta_{C,V}}{|\rho_J|} \mu^{S}} - (1-\alpha)
\frac{\eta_{C,S} |\rho_J|}{\eta_{C,V} - \eta_{C,S} |\rho_J|} e^{-\frac{\eta_{C,V}}{|\rho_J|} \mu^{S}}\,,
\\
c_{2L} &=(1-\alpha)
\frac{\eta_{C,V} }{\eta_{C,V} - \eta_{C,S} |\rho_J|} e^{- \eta_{C,S}  \mu^{S}} \,.
\end{align}

For both cases the marginal distribution $p(x)$ is a linear combination of exponentials. The resulting $x$ integral are evaluated as
\begin{align}\label{f:c:eqn0}
f_c(k,a) := \int_0^\infty (S_0 e^x - K)^+ a e^{-a x} dx = S_0 \frac{1}{a-1} e^{-(a-1)k} \,.
\end{align}
Since $\mu^{S}<0$ the terms proportional to $c_{1L},c_{2L}$ do not contribute to the integral in \eqref{aECp}.
Substituting into \eqref{aECp} gives the stated result.

b) For put options $k<0$ we distinguish two cases. We consider only the case $\mu^{S}<0$, which corresponds to predominantly negative jumps. This requires that we consider separately the two ranges of strikes $k\in (-\infty, \mu^{S})$ and $k\in (\mu^{S},0)$.

The analog of \eqref{aECp} for put options is
\begin{align}\label{aEPp}
a_{E,P}(K) &=  \lambda^{C}  \int_{\mathbb{R}} \int_{\mathbb{R}} (K - S_0 e^x )^+ p^{C}(x,y)  dx dy 
+ \lambda^S \int_{-\infty}^0 (K - S_0 e^x)^+ p^S(x) dx \\
&= \lambda^{C} \int_{-\infty}^0 (K - S_0 e^x)^+ p(x) dx
+ \lambda^S \int_{-\infty}^0 (K - S_0 e^x)^+ p^S(x) dx\,. \nonumber
\end{align}

We substitute here the marginal distribution $p(x)$ computed above in \eqref{p:x:1} and \eqref{p:x:2}.
For $k < \mu^{S}$ the integral receives only contributions from the $c_{1L},c_{2L}$ terms in $p(x)$, and we get
\begin{align}
a_{E,P}(K) = \lambda^{C} \left(  c_{1L} f_p\left(k, \frac{\eta_{C,V}}{|\rho_J|}\right) + c_{2L} f_p(k, \eta_{C,S}) 
\right)\,,
\end{align}
with
\begin{align}
f_p(k,a) := \int_{-\infty}^0 (K - S_0 e^x )^+ a e^{a x} dx = S_0 \frac{1}{a+1} e^{(a+1)k} \,.
\end{align}

For put options with $\mu^{S} < k < 0$, there is also a contribution from the $c_R$ term in the density $p(x)$ and we get
\begin{align}
a_{E,P}(K) &= \lambda^{C} \left( c_{1L} f_p\left(\mu^{S}, \eta_{C,V}/|\rho_J|\right) + c_{2L}  f_p(k, \eta_{C,S}) + 
c_{R} \left[ f_c\left(\mu^{S},\eta_{C,S}\right) - f_c(k,\eta_{C,S})\right] \right) \nonumber \\
&\qquad\qquad+ \lambda^S \int_{-\infty}^0 (K - S_0 e^x)^+ p^S(x) dx \,,
\end{align}
where $f_{c}(\mu,\eta)$ is defined above in \eqref{f:c:eqn0}. 
This completes the proof.
\end{proof}


\begin{proof}[Proof of Corollary~\ref{cor:Kou:VIX}]

The proof is identical to that of Corollary~\ref{cor:Eraker:VIX} as all the integrals are identical.

\end{proof}

\begin{proof}[Proof of Corollary~\ref{cor:FN:Eur}]
The proof is analogous to that of Corollary~\ref{cor:Eraker:European}.
\end{proof}

\begin{proof}[Proof of Corollary~\ref{cor:FN:VIX}]
The proof is analogous to that of Corollary~\ref{cor:Eraker:VIX}.
\end{proof}

\section{Proofs of the Technical Lemmas}

\begin{proof}[Proof of Lemma~\ref{lem:p:bound}]
First, we have
\begin{equation}
\mathbb{E}\left[\left(\sqrt{\eta^{2}(S_{T})V_{T}+\kappa}+K\right)^{p}\right]
\leq
\mathbb{E}\left[\left(\eta(S_{T})\sqrt{V_{T}}+\sqrt{\kappa}+K\right)^{p}\right],
\end{equation}
where we used the inequality that $\sqrt{x+y}\leq\sqrt{x}+\sqrt{y}$
for any $x,y\geq 0$. 

Next, since $p>1$, by Jensen's inequality, we have
\begin{align}
\mathbb{E}\left[\left(\eta(S_{T})\sqrt{V_{T}}+\sqrt{\kappa}+K\right)^{p}\right]
&\leq 2^{p-1}\mathbb{E}\left[\left(\eta(S_{T})\sqrt{V_{T}}\right)^{p}+\left(\sqrt{\kappa}+K\right)^{p}\right]
\nonumber
\\
&\leq 2^{p-1}M_{\eta}^{p}\mathbb{E}\left[V_{T}^{p/2}\right]
+2^{p-1}\left(\sqrt{\kappa}+K\right)^{p}.
\end{align}
Next, we can compute that
\begin{equation}
V_{T}=V_{0}e^{\int_{0}^{T}(\mu(V_{u})-\frac{1}{2}\sigma^{2}(V_{u}))du+\int_{0}^{T}\sigma(V_{u})dZ_{u}+R_{T}^{V}+R_{T}^{C,V}},
\end{equation}
where, for any $t\geq 0$,
\begin{equation}
R_{t}^{V}:=\sum_{i=1}^{N_{t}^{S}}Y_{i}^{V},
\qquad
R_{t}^{C,V}:=\sum_{i=1}^{N_{t}^{C}}Y_{i}^{C,V}.
\end{equation}
and therefore
\begin{align}
\mathbb{E}\left[V_{T}^{p/2}\right]
&=V_{0}^{p/2}\mathbb{E}\left[e^{\int_{0}^{T}(\frac{p}{2}\mu(V_{u})-\frac{p}{4}\sigma^{2}(V_{u}))du+\frac{p}{2}\int_{0}^{T}\sigma(V_{u})dZ_{u}+\frac{p}{2}R_{T}^{V}+\frac{p}{2}R_{T}^{C,V}}\right]
\nonumber
\\
&=V_{0}^{p/2}\mathbb{E}\left[\mathbb{E}\left[e^{\int_{0}^{T}(\frac{p}{2}\mu(V_{u})-\frac{p}{4}\sigma^{2}(V_{u}))du+\frac{p}{2}\int_{0}^{T}\sigma(V_{u})dZ_{u}}|\mathcal{F}_{T}^{R,V}\right]e^{\frac{p}{2}R_{T}^{V}+\frac{p}{2}R_{T}^{C,V}}\right],
\end{align}
where $\mathcal{F}_{t}^{R,V}$ is the natural filtration of $(R_{t}^{V},R_{t}^{C,V})$ process.
We can compute that
\begin{align}
&\mathbb{E}\left[e^{\int_{0}^{T}(\frac{p}{2}\mu(V_{u})-\frac{p}{4}\sigma^{2}(V_{u}))du+\frac{p}{2}\int_{0}^{T}\sigma(V_{u})dZ_{u}}|\mathcal{F}_{T}^{R,V}\right]
\nonumber
\\
&\leq
e^{\frac{p}{2}M_{\mu}T+\frac{p^{2}}{8}M_{\sigma}^{2}T}
\mathbb{E}\left[e^{-\int_{0}^{T}\frac{p^{2}}{8}\sigma^{2}(V_{u})du+\frac{p}{2}\int_{0}^{T}\sigma(V_{u})dZ_{u}}|\mathcal{F}_{T}^{R,V}\right]
\nonumber
\\
&\leq e^{\frac{p}{2}M_{\mu}T+\frac{p^{2}}{8}M_{\sigma}^{2}T},
\end{align}
where we used Assumption~\ref{assump:bounded}
and the fact that $\left\{e^{-\int_{0}^{t}\frac{p^{2}}{2}\sigma^{2}(V_{u})du+p\int_{0}^{t}\sigma(V_{u})dZ_{u}}\right\}_{t\geq 0}$
is a non-negative local martingale and thus a supermartingale.
Thus,
\begin{align}
\mathbb{E}\left[V_{T}^{p/2}\right]
&\leq
V_{0}^{p/2}
e^{\frac{p}{2}M_{\mu}T+\frac{p^{2}}{8}M_{\sigma}^{2}T}
\mathbb{E}\left[e^{\frac{p}{2}R_{T}^{V}+\frac{p}{2}R_{T}^{C,V}}\right]
\nonumber
\\
&=V_{0}^{p/2}e^{\lambda^{V}T\left(\mathbb{E}\left[e^{\frac{p}{2}Y_{1}^{V}}\right]-1\right)}e^{\lambda^{C}T\left(\mathbb{E}\left[e^{\frac{p}{2}Y_{1}^{C,V}}\right]-1\right)}e^{\frac{p}{2}M_{\mu}T+\frac{p^{2}}{8}M_{\sigma}^{2}T},
\end{align}
provided that $\mathbb{E}\left[e^{\frac{p}{2}Y_{1}^{V}}\right]<\infty$ and $\mathbb{E}\left[e^{\frac{p}{2}Y_{1}^{C,V}}\right]<\infty$.
Hence, we conclude that for any $p>1$,
\begin{align}
&\mathbb{E}\left[\left(\sqrt{\eta^{2}(S_{T})V_{T}+\kappa}+K\right)^{p}\right]
\nonumber
\\
&\leq 2^{p-1}M_{\eta}^{p}V_{0}^{p/2}e^{\lambda^{V}T\left(\mathbb{E}\left[e^{\frac{p}{2}Y_{1}^{V}}\right]-1\right)}e^{\lambda^{C}T\left(\mathbb{E}\left[e^{\frac{p}{2}Y_{1}^{C,V}}\right]-1\right)}e^{\frac{p}{2}M_{\mu}T+\frac{p^{2}}{8}M_{\sigma}^{2}T}
+2^{p-1}\left(\sqrt{\kappa}+K\right)^{p}.
\end{align}
This completes the proof.
\end{proof}

\begin{proof}[Proof of Lemma~\ref{lem:1}]
We can compute that
\begin{align}
\mathbb{E}[| \bar{V}_t - \bar{V}_0|]
\leq
\left[\mathbb{E}|\bar{V}_{t}-V_{0}|\right]
\leq
\left(\left[\mathbb{E}|\bar{V}_{t}-V_{0}|^{2}\right]\right)^{1/2}.
\end{align}
We can further compute that
\begin{equation}
\bar{V}_{t}=V_{0}e^{\int_{0}^{t}(\mu(\bar{V}_{u})-\frac{1}{2}\sigma^{2}(\bar{V}_{u}))du+\int_{0}^{t}\sigma(\bar{V}_{u})d\bar{Z}_{u}},
\end{equation}
and by Jensen's inequality and Assumption~\ref{assump:bounded},
\begin{align}
\mathbb{E}[\bar{V}_{t}]
&=V_{0}\mathbb{E}\left[e^{\int_{0}^{t}(\mu(\bar{V}_{u})-\frac{1}{2}\sigma^{2}(\bar{V}_{u}))du
+\int_{0}^{t}\sigma(\bar{V}_{u})d\bar{Z}_{u}}\right]
\nonumber
\\
&\geq
V_{0}
e^{\mathbb{E}[\int_{0}^{t}(\mu(\bar{V}_{u})-\frac{1}{2}\sigma^{2}(\bar{V}_{u}))du
+\int_{0}^{t}\sigma(\bar{V}_{u})d\bar{Z}_{u}]}
=
V_{0}
e^{\mathbb{E}[\int_{0}^{t}(\mu(\bar{V}_{u})-\frac{1}{2}\sigma^{2}(\bar{V}_{u}))du]}
\geq
V_{0}e^{-tM_{\mu}-\frac{1}{2}tM_{\sigma}^{2}}.
\end{align}
On the other hand, by Assumption~\ref{assump:bounded}, 
\begin{align}
\mathbb{E}[\bar{V}_{t}^{2}]
=V_{0}^{2}\mathbb{E}\left[e^{\int_{0}^{t}(2\mu(\bar{V}_{u})-\sigma^{2}(\bar{V}_{u}))du
+2\int_{0}^{t}\sigma(\bar{V}_{u})d\bar{Z}_{u}}\right]
\leq
V_{0}^{2}e^{2tM_{\mu}}
\mathbb{E}\left[e^{2\int_{0}^{t}\sigma(\bar{V}_{u})d\bar{Z}_{u}}\right],
\end{align}
and by Cauchy-Schwarz inequality, we can further compute that
\begin{align}
\mathbb{E}\left[e^{2\int_{0}^{t}\sigma(\bar{V}_{u})d\bar{Z}_{u}}\right]
&=\mathbb{E}\left[e^{\int_{0}^{t}2\sigma(\bar{V}_{u})d\bar{Z}_{u}-\int_{0}^{t}4\sigma^{2}(\bar{V}_{u})du}e^{\int_{0}^{t}4\sigma^{2}(\bar{V}_{u})du}\right]
\nonumber
\\
&\leq
\left(\mathbb{E}\left[e^{\int_{0}^{t}4\sigma(\bar{V}_{u})d\bar{Z}_{u}-\frac{1}{2}\int_{0}^{t}(4\sigma)^{2}(\bar{V}_{u})du}\right]\right)^{1/2}
\left(\mathbb{E}\left[e^{8\int_{0}^{t}\sigma^{2}(\bar{V}_{u})du}\right]\right)^{1/2}
\nonumber
\\
&=\left(\mathbb{E}\left[e^{8\int_{0}^{t}\sigma^{2}(\bar{V}_{u})du}\right]\right)^{1/2}
\leq e^{4tM_{\sigma}^{2}}.
\end{align}
Hence, 
\begin{equation}
\mathbb{E}[\bar{V}_{t}^{2}]
\leq
V_{0}^{2}e^{2tM_{\mu}}
e^{4tM_{\sigma}^{2}}.
\end{equation}
Therefore, for any $0\leq t\leq T$,
\begin{align}
\left[\mathbb{E}|\bar{V}_{t}-V_{0}|^{2}\right]
&=\mathbb{E}\left[\bar{V}_{t}^{2}\right]
+V_{0}^{2}
-2V_{0}\mathbb{E}[\bar{V}_{t}]
\nonumber
\\
&\leq
V_{0}^{2}\left(e^{2tM_{\mu}}
e^{4tM_{\sigma}^{2}}+1-2e^{-tM_{\mu}-\frac{1}{2}tM_{\sigma}^{2}}\right)
\nonumber
\\
&\leq
V_{0}^{2}\left(e^{2TM_{\mu}}
e^{4TM_{\sigma}^{2}}+1-2e^{-TM_{\mu}-\frac{1}{2}TM_{\sigma}^{2}}\right).
\end{align}
Hence, we conclude that, for any $0\leq t\leq T$,
\begin{align}
\mathbb{E}[| \bar{V}_t - \bar{V}_0|]
\leq
V_{0}\left(e^{2TM_{\mu}}
e^{4TM_{\sigma}^{2}}+1-2e^{-TM_{\mu}-\frac{1}{2}T M_{\sigma}^{2}}\right)^{1/2}.
\end{align}
This completes the proof.
\end{proof}

\begin{proof}[Proof of Lemma~\ref{lem:2}]
First, we can compute that
\begin{equation}
\bar{V}_{t}=V_{0}e^{\int_{0}^{t}(\mu(\bar{V}_{u})-\frac{1}{2}\sigma^{2}(\bar{V}_{u}))du+\int_{0}^{t}\sigma(\bar{V}_{u})d\bar{Z}_{u}},
\end{equation}
which implies that
\begin{align*}
\mathbb{E}[\bar{V}_{t}]
&=\mathbb{E}\left[V_{0}e^{\int_{0}^{t}(\mu(\bar{V}_{u})-\frac{1}{2}\sigma^{2}(\bar{V}_{u}))du+\int_{0}^{t}\sigma(\bar{V}_{u})d\bar{Z}_{u}}\right]
\\
&\leq V_{0}e^{M_{\mu}t}\mathbb{E}\left[e^{-\int_{0}^{t}\frac{1}{2}\sigma^{2}(\bar{V}_{u}))du+\int_{0}^{t}\sigma(\bar{V}_{u})d\bar{Z}_{u}}\right]
\leq V_{0}e^{M_{\mu}t},
\end{align*}
where we used the fact that $\left\{e^{-\int_{0}^{t}\frac{p^{2}}{2}\sigma^{2}(\bar{V}_{u})du+p\int_{0}^{t}\sigma(\bar{V}_{u})d\bar{Z}_{u}}\right\}_{t\geq 0}$
is a non-negative local martingale and thus a supermartingale.

Next, we can compute that
\begin{equation}
V_{t}=V_{0}e^{\int_{0}^{t}(\mu(V_{u})-\frac{1}{2}\sigma^{2}(V_{u}))du+\int_{0}^{t}\sigma(V_{u})dZ_{u}+R_{t}^{V}+R_{t}^{C,V}},
\end{equation}
where, for any $t$,
$R_{t}^{V}:=\sum_{i=1}^{N_{t}^{S}}Y_{i}^{V}$ and
$R_{t}^{C,V}:=\sum_{i=1}^{N_{t}^{C}}Y_{i}^{C,V}$.
Therefore
\begin{align*}
\mathbb{E}[V_{t}]
&=\mathbb{E}\left[V_{0}e^{\int_{0}^{t}(\mu(V_{u})-\frac{1}{2}\sigma^{2}(V_{u}))du+\int_{0}^{t}\sigma(V_{u})dZ_{u}+R_{t}^{V}+R_{t}^{C,V}}\right]
\\
&=V_{0}\mathbb{E}\left[\mathbb{E}\left[e^{\int_{0}^{t}(\mu(V_{u})-\frac{1}{2}\sigma^{2}(V_{u}))du+\int_{0}^{t}\sigma(V_{u})dZ_{u}}|\mathcal{F}_{t}^{R,V}\right]e^{R_{t}^{V}+R_{t}^{C,V}}\right],
\end{align*}
where $\mathcal{F}_{t}^{R,V}$ is the natural filtration of $(R_{t}^{V},R_{t}^{C,V})$ process.
Moreover, 
\begin{align*}
&\mathbb{E}\left[e^{\int_{0}^{t}(\mu(V_{u})-\frac{1}{2}\sigma^{2}(V_{u}))du+\int_{0}^{t}\sigma(V_{u})dZ_{u}}|\mathcal{F}_{t}^{R,V}\right]
\\
&\leq 
e^{M_{\mu}t}\mathbb{E}\left[e^{-\int_{0}^{t}\frac{1}{2}\sigma^{2}(V_{u}))du+\int_{0}^{t}\sigma(V_{u})dZ_{u}}|\mathcal{F}_{t}^{R,V}\right]
\leq e^{M_{\mu}t},
\end{align*}
where we used the fact that $\left\{e^{-\int_{0}^{t}\frac{p^{2}}{2}\sigma^{2}(V_{u})du+p\int_{0}^{t}\sigma(V_{u})dZ_{u}}\right\}_{t\geq 0}$
is a non-negative local martingale and thus a supermartingale.
Hence, we conclude that
\begin{align}
\mathbb{E}[V_{t}]
\leq V_{0}e^{M_{\mu}t}
\mathbb{E}\left[e^{R_{t}^{V}+R_{t}^{C,V}}\right]
=V_{0}e^{\lambda^{V}t\mu^{V}+\lambda^{C}t\mu^{C,V}}e^{M_{\mu}t},
\end{align}
where we recall from \eqref{mu:defn} the definition that $\mu^{V}=\mathbb{E}[e^{Y_{1}^{V}}]-1$, $\mu^{C,V}=\mathbb{E}[e^{Y_{1}^{C,V}}]-1$ and 

This completes the proof.
\end{proof}

\end{document}